\DeclareMathOperator{\VCSP}{VCSP}
\DeclareMathOperator{\MVCSP}{Max-VCSP}
\DeclareMathOperator{\MVCSPs}{Max-VCSP_s}
\DeclareMathOperator{\VCSPs}{VCSP_s}
\DeclareMathOperator{\Feas}{Feas}
\DeclareMathOperator{\Opt}{Opt}
\DeclareMathOperator{\Soft}{Soft}
\DeclareMathOperator{\mnrt}{Mn}
\DeclareMathOperator{\mjrt}{Mj}
\DeclareMathOperator{\ar}{ar}
\DeclareMathOperator{\opt}{opt}
\DeclareMathOperator{\sopt}{s-opt}
\DeclareMathOperator{\sub}{sub}
\newcommand{\mmorp}[1]{\langle #1 \rangle}
\newcommand{\tup}[1]{\ensuremath{\mathbf #1}}
\newcommand{\Q}{\mathbb{Q}}
\newcommand{\R}{\mathbb{R}}
\newcommand{\reduces}{\leq_{p}}
\newcommand{\Qn}{\ensuremath{\mathbb{Q}_{\geq 0}}}
\newcommand{\Qp}{\ensuremath{\mathbb{Q}_{> 0}}}
\newcommand{\QInfty}{\overline{\Q}}
\newcommand{\QnInfty}{\Qn \cup \{\infty\}}
\newcommand{\ignore}[1]{}
\newcommand{\closure}[1]{#1^*}
\newcommand{\clSingle}[1]{\closure{\{#1\}}}
\newcommand{\clGamma}{\closure{\Gamma}}
\newcommand{\gmcinst}{J}
\newcommand{\Tuv}{T_{\{u,v\}}}
\newcommand{\oTuv}{T_{uv}'}
\newcommand{\Round}[2]{\operatorname{Round}_{#1}(#2)}
\newcommand\Crestrict[2]{{
  \left.\kern-\nulldelimiterspace 
  #1 
  \right|_{#2} 
  }}
\newtheorem{theorem}{Theorem}
\newtheorem{lemma}[theorem]{Lemma} 
\newtheorem{observation}[theorem]{Observation}
\newtheorem{corollary}[theorem]{Corollary}
\theoremstyle{definition}
\newtheorem{definition}[theorem]{Definition}
\newtheorem{example}[theorem]{Example}
\newtheorem{remark}[theorem]{Remark}
\begin{document}
\title{The complexity of Boolean surjective general-valued CSPs\thanks{Extended
abstracts of parts of this work appeared in \emph{Proceedings of the 18th
International Conference on Principles and Practice of Constraint Programming
(CP)}, pp. 38--54, 2012~\cite{Uppman12:cp} and in \emph{Proceedings of the 42nd
International Symposium on Mathematical Foundations of Computer Science
(MFCS)}~\cite{fz17:mfcs}.
Peter Fulla and Stanislav \v{Z}ivn\'y were
supported by a Royal Society Research Grant. Stanislav \v{Z}ivn\'y was supported
by a Royal Society University Research Fellowship. This project has received
funding from the European Research Council (ERC) under the European Union's
Horizon 2020 research and innovation programme (grant agreement No 714532). The
paper reflects only the authors' views and not the views of the ERC or the
European Commission. The European Union is not liable for any use that may be
made of the information contained therein.}}

\author{
Peter Fulla\\
University of Oxford, UK\\
\texttt{peter.fulla@cs.ox.ac.uk}
\and
Hannes Uppman\\
Link\"oping University, Sweden
\and
Stanislav \v{Z}ivn\'{y}\\
University of Oxford, UK\\
\texttt{standa.zivny@cs.ox.ac.uk}
}

\date{}
\maketitle

\begin{abstract}

Valued constraint satisfaction problems (VCSPs) are discrete optimisation
problems with a $(\Q\cup\{\infty\})$-valued objective function given as a sum of
fixed-arity functions.
In Boolean surjective VCSPs, variables take on labels from $D=\{0,1\}$ and an
optimal assignment is required to use both labels from $D$. Examples include the
classical \emph{global Min-Cut} problem in graphs and the \emph{Minimum
Distance} problem studied in coding theory.

We establish a dichotomy theorem and thus give a complete complexity
classification of Boolean surjective VCSPs with respect to exact solvability.
Our work generalises the dichotomy for $\{0,\infty\}$-valued constraint
languages (corresponding to surjective decision CSPs) obtained by Creignou and H\'ebrard.
For the maximisation problem of $\Qn$-valued surjective VCSPs, we also establish
a dichotomy theorem with respect to approximability.

Unlike in the case of Boolean surjective (decision) CSPs, there appears a novel tractable
class of languages that is trivial in the non-surjective setting. This newly
discovered tractable class has an interesting mathematical structure related to
downsets and upsets. Our main contribution is identifying this class and proving
that it lies on the borderline of tractability. A crucial part of our proof is a
polynomial-time algorithm for enumerating all near-optimal solutions to a
generalised Min-Cut problem, which might be of independent interest.
\end{abstract}

\section{Introduction}
\label{sec:intro}

The framework of valued constraint satisfaction problems (VCSPs) captures many
fundamental discrete optimisation problems. A VCSP \emph{instance} $I =
(V,D,\phi_I)$ is given by a finite set of variables $V = \{ x_1, \dots, x_n \}$,
a finite set of labels $D$ called the \emph{domain}, and an objective function
$\phi_I : D^n \to \QInfty$, where $\QInfty = \Q \cup \{\infty\}$ denotes the set
of extended rationals. The objective function is expressed by a weighted sum of
\emph{valued constraints}
\begin{equation}
\phi_I(x_1, \dots, x_n) = \sum_{i=1}^q w_i \cdot \gamma_i(\tup{x}_i) \,,
\end{equation}
where $\gamma_i : D^{\ar(\gamma_i)} \to \QInfty$ is a \emph{weighted relation}
of \emph{arity} $\ar(\gamma_i) \in \mathbb{Z}_{\geq 1}$, $w_i \in \Qn$ is the
\emph{weight} and $\tup{x}_i \in V^{\ar(\gamma_i)}$ the \emph{scope} of the
$i$th valued constraint. (Note that zero weights are allowed; we define $0 \cdot
\infty = \infty$.) The value of an assignment of domain labels to variables $s :
V \to D$ equals $\phi_I(s) = \phi_I(s(x_1), \dots, s(x_n))$. An assignment $s$
is \emph{feasible} if $\phi_I(s) < \infty$, and it is \emph{optimal} if it is
feasible and $\phi_I(s) \leq \phi_I(s')$ for all assignments $s'$. Given an
instance $I$, the goal is to find an optimal assignment, i.e. one that minimises
$\phi_I$. A \emph{valued constraint language} (or just a language) $\Gamma$ is
set of weighted relations over a domain $D$. We denote by $\VCSP(\Gamma)$ the
class of all VCSP instances that use only weighted relations from a language
$\Gamma$ in their objective function. VCSPs are also called general-valued
CSPs~\cite{Kolmogorov17:sicomp} to emphasise the fact that (decision) CSPs are a
special case of VCSPs in which weighted relations only assign values $0$ and
$\infty$. (However, $\Q$-valued VCSPs~\cite{tz16:jacm} do \emph{not} include
CSPs as a special case.)

For an example of a VCSP, consider the $(s,t)$-Min-Cut
problem~\cite{Schrijver03:CombOpt}. Given a digraph $G=(V,E)$ with a source
$s\in V$, sink $t\in V$, and edge weights $w:E\to\Qp$, the goal is to find a set
$C\subseteq V$ with $s\in C$ and $t\not\in C$ that minimises
\begin{equation}
\sum_{(u,v)\in E,u\in C,v\not\in C} w(u,v) \,.
\end{equation}
We show how the $(s,t)$-Min-Cut problem can be expressed as a VCSP over a domain
$D = \{ 0, 1 \}$ (a domain of size $2$ such as this one is called
\emph{Boolean}). We define a language $\Gamma_{\sf cut} = \{ \rho_0, \rho_1,
\gamma \}$ as follows: For $d \in D$, $\rho_d : D \to \QInfty$ is defined by
$\rho_d(x)=0$ if $x=d$ and $\rho_d(x)=\infty$ if $x\neq d$. Weighted relation
$\gamma : D^2 \to \QInfty$ is defined by $\gamma(x,y)=1$ if $x=0$ and $y=1$, and
$\gamma(x,y)=0$ otherwise. Given an $(s,t)$-Min-Cut instance on a digraph
$G=(V,E)$, the problem of finding an optimal $(s,t)$-Min-Cut in $G$ is
equivalent to solving an instance $I=(V,D,\phi_I)$ of $\VCSP(\Gamma_{\sf cut})$
such that
\begin{equation}
\phi_I(x_1, \dots, x_n) =
\rho_0(s) + \rho_1(t) + \sum_{(u,v) \in E} w(u,v) \cdot \gamma(u, v) \,.
\end{equation}
It is well known that the $(s,t)$-Min-Cut problem is solvable in polynomial
time. Since every instance $I$ of $\VCSP(\Gamma_{\sf cut})$ can be reduced to an
instance of the $(s,t)$-Min-Cut problem, $\VCSP(\Gamma_{\sf cut})$ is solvable
in polynomial time.

A language $\Gamma$ is called \emph{tractable} if, for every finite $\Gamma'
\subseteq \Gamma$, $\VCSP(\Gamma')$ is solvable in polynomial time. If there
exists a finite $\Gamma' \subseteq \Gamma$ such that $\VCSP(\Gamma')$ is
NP-hard, then $\Gamma$ is called \emph{intractable}.\footnote{Defining
tractability in terms of finite subsets ensures that the tractability of a
language is independent of whether the weighted relations are represented
explicitly (by tables of values) or implicitly (by oracles).} For example,
language $\Gamma_{\sf cut}$ is tractable. It is natural to ask about the
complexity of $\VCSP(\Gamma)$ for a fixed language $\Gamma$. Cohen et
al.~\cite{cohen06:complexitysoft} obtained a dichotomy classification of
\emph{Boolean} languages: They identified eight tractable classes (one of which
correspons to submodularity~\cite{Schrijver03:CombOpt} and includes $\Gamma_{\sf
cut}$) and showed that the remaining languages are intractable. The dichotomy
classification from~\cite{cohen06:complexitysoft} is an extension of Schaefer's
celebrated result~\cite{Schaefer78:complexity}, which gave a dichotomy for
Boolean $\{0, \infty\}$-valued constraint languages, and the work of
Creignou~\cite{Creignou95:jcss}, which established a dichotomy classification
for Boolean $\{0,1\}$-valued constraint languages.

The \emph{surjective} variant of VCSPs further requires that assignments of
domain labels to variables be surjective (an assignment $s : V \to D$ is
surjective if, for every $d \in D$, there exists $x \in V$ such that $s(x) =
d$). Thus, the goal is to find an assignment that is optimal among surjective
assignments. For Boolean VCSPs with $D=\{0,1\}$, this simply means that the
all-zero and all-one assignments are disregarded. We define $\VCSP(\Gamma)$,
tractability, and intractability in the surjective setting analogously with
regular VCSPs, and refer to them as $\VCSPs(\Gamma)$, s-tractability, and
s-intractability.

For an example of a surjective VCSP, consider the (global) Min-Cut
problem~\cite{Schrijver03:CombOpt}. Given a graph $G=(V,E)$ and edge weights
$w:E\to\Qp$, the goal is to find a set $C\subseteq V$ with $\emptyset \subsetneq
C \subsetneq V$ that minimises
\begin{equation}
\sum_{\{u,v\}\in E,|\{u,v\}\cap C|=1}w(u,v) \,.
\end{equation}
Again, this problem can be expressed over a Boolean domain $D = \{ 0, 1 \}$. We
define a weighted relation $\gamma : D^2 \to \QInfty$ by $\gamma(x,y)=0$ if
$x=y$ and $\gamma(x,y)=1$ if $x\neq y$. Then the problem of finding an optimal
Min-Cut in a graph $G=(V,E)$ is equivalent to solving an instance $I =
(V,D,\phi_I)$ of $\VCSPs(\{ \gamma \})$ such that
\begin{equation}
\phi_I(x_1, \dots, x_n) = \sum_{\{u,v\} \in E} w(u,v) \cdot \gamma(u, v) \,.
\end{equation}
Note that the two non-surjective assignments to $I$ correspond to sets
$\emptyset$ and $V$, which are not admissible solutions to the Min-Cut problem.
Since every instance of $\VCSPs(\{ \gamma \})$ can be straightforwardly
translated to a Min-Cut instance, and the Min-Cut problem is solvable in
polynomial time (say, by a reduction to the $(s,t)$-Min-Cut problem, though
other algorithms exist~\cite{Soter97:jacm}), language $\{ \gamma \}$ is
s-tractable.

The computational complexity of $\VCSP(\Gamma)$ and $\VCSPs(\Gamma)$ is closely
related. Namely, $\VCSP(\Gamma)$ is polynomial-time reducible to
$\VCSPs(\Gamma)$ (see Lemma~\ref{lmReductionVCSPtoSurjective}), i.e., any intractable
language is also s-intractable. Let $\mathcal{C}_D = \{ \rho_d ~|~ d \in D \}$,
where we define $\rho_d : D \to \QInfty$ by $\rho_d(x) = 0$ if $x = d$ and
$\rho_d(x) = \infty$ if $x \neq d$; these unary weighted relations are called
\emph{constants}. Conversely, $\VCSPs(\Gamma)$ is polynomial-time reducible to
$\VCSP(\Gamma \cup \mathcal{C}_D)$ (see Lemma~\ref{lmReductionSurjectiveToVCSP}),
i.e., any tractable language containing constants $\mathcal{C}_D$ is also
s-tractable. In the case of Boolean $\{ 0, \infty \}$-valued languages,
Schaefer's dichotomy involves six tractable classes. Four of them include
constants, and hence they are s-tractable. Creignou and
H\'ebrard~\cite{Creignou97} showed that the remaining two classes ($0$-valid and
$1$-valid\footnote{A $\{ 0, \infty \}$-valued weighted relation is $0$-valid
($1$-valid) if it assigns value $0$ to the all-zero (all-one) tuple.}) are
s-intractable, thus obtaining a dichotomy classification of Boolean $\{ 0,
\infty \}$-valued languages in the surjective setting.

\subsection*{Contributions}

\paragraph{Complexity classification}
As our main contribution, we establish a dichotomy classification of all Boolean
($\QInfty$-valued) languages in the surjective setting, which extends the
classification from~\cite{Creignou97}. Let $D = \{ 0, 1 \}$. Six of the eight
tractable classes of Boolean languages identified by Cohen et
al.~\cite{cohen06:complexitysoft} include constants $\mathcal{C}_D$, and thus
are also s-tractable. We show that languages in the remaining two classes
($0$-optimal and $1$-optimal\footnote{A weighted relation is $0$-optimal
($1$-optimal) if the all-zero (all-one) tuple minimises it.}) are s-tractable
if, for every weighted relation, the set of feasible tuples and the set of
optimal tuples are \emph{essentially downsets} (in the $0$-optimal case; see
Definition~\ref{defEssentiallyDownset}) or \emph{essentially upsets} (in the $1$-optimal
case); otherwise, they are s-intractable.

Somewhat surprisingly, such languages are s-tractable regardless of the
remaining (i.e., finite but non-optimal) values. Those values must, however,
bear on the time bound of any polynomial-time algorithm solving surjective VCSPs
over such languages (unless P = NP). In particular, we give an example of an
infinite language $\Gamma$ that is s-tractable (i.e., $\VCSPs(\Gamma')$ can be
solved in polynomial time for every finite $\Gamma' \subseteq \Gamma$) but
$\VCSPs(\Gamma)$ is NP-hard (see Example~\ref{exGloballySIntractable}). This is quite
unusual; all known tractable classes of VCSPs are in fact \emph{globally
tractable}, which means that $\VCSP(\Gamma')$ is solvable by the same
polynomial-time algorithm for every finite subset $\Gamma'$ of a tractable
language $\Gamma$, and hence $\VCSP(\Gamma)$ is also polynomial-time
solvable~\cite{Bulatov05:classifying}. To capture this distinction, our main
result (Theorem~\ref{thmClassificationSurjectiveVCSP}) gives a classification in terms
of \emph{global} s-tractability,\footnote{Weighted relations in an instance are
assumed to be represented explicitly (by tables of values). We only consider
languages of bounded arity; this restriction is vital in some of our proofs.
Also, unbounded arity presents new challenges to complexity classification. For
example, explicitly representing a weighted relation of an arity that is
super-logarithmic in the number of variables requires super-polynomial space.}
from which a classification for s-tractability easily follows (see
Remark~\ref{remLocalClsSrjectiveVCSP}). We call the condition that describes the
borderline of global s-tractability in the $0$-optimal case \emph{EDS} (see
Definition~\ref{defWRelEDS}), drawing a parallel to the corresponding condition
for s-tractability, which involves essentially downsets. The $1$-optimal case is
analogous (one only needs to exchange the roles of labels $0$ and $1$).

\paragraph{Tractability}
While $0$-optimal and $1$-optimal languages are trivially tractable for VCSPs,
the algorithm for surjective VCSPs over the newly identified class of languages
is nontrivial and constitutes our second main contribution.
The global s-tractability part of our result is established by a reduction from
$\QInfty$-valued $\VCSPs$ to the \emph{generalised Min-Cut} problem (defined in
Section~\ref{secSurjectiveTractability}), for which we require to find all
$\alpha$-optimal solutions in polynomial time, where $\alpha$ is a constant
depending on the valued constraint language.
The generalised Min-Cut problem consists in minimising an objective function
$f+g$, where $f$ is a superadditive set function given by an oracle and $g$ is a
cut function (same as in the Min-Cut problem);
see~Section~\ref{secSurjectiveTractability} for the details.
We prove that the running time of our algorithm is roughly
$O\left(n^{20\alpha}\right)$, thus improving on the bound of
$O\left(n^{3^{3\alpha}}\right)$ established in~\cite{Uppman12:cp} (one of the
two extended conference abstracts of this paper) for the special case of
$\{0,1\}$-valued languages.

\paragraph{Hardness}
The hardness part of our result is proved by analysing weighted relations that
can be obtained from a language using gadgets that preserve (global)
s-tractability. Since not all standard gadgets have this property (in
particular, minimisation over a variable may affect the surjectivity of a
solution), we cannot employ the algebraic approach~\cite{cccjz13:sicomp}.
Instead, we define a collection of operations that form building blocks of
gadgets preserving tractability in the surjective setting (see
Definition~\ref{defSurjectiveClosure} and
Lemma~\ref{lmSurjectiveClosureReduction}). Such gadgets apply to non-Boolean
domains as well, and may be useful in future work on non-Boolean surjective
VCSPs. Another important ingredient of our proof is the NP-hardness of the
\emph{Minimum Distance} problem~\cite{Vardy1997}, which to the best of our
knowledge has not previously appeared in the literature on exact solvability of
(V)CSPs.

\paragraph{Approximability}
By a simple reduction, our main result implies a complexity classification of
the approximability of maximising $\Qn$-valued surjective VCSPs (see
Theorem~\ref{thmApproximabilitySurjectiveFiniteValuedVCSP}).

\paragraph{Enumeration}
For the globally s-tractable languages, we also show that \emph{all} optimal
solutions can be enumerated with polynomial
delay~\cite{Valiant79:sicomp-complexity} (see
Theorem~\ref{thmEnumerationSurjectiveVCSP}). While this is an easy observation
for the already known globally s-tractable languages (since constants
$\mathcal{C}_D$ allow for a standard self-reduction technique), we prove the
same result for the newly discovered classes of languages, which do \emph{not}
include constants $\mathcal{C}_D$.

\subsection*{Related work}
Recent years have seen some remarkable progress on the computational complexity
of CSPs and VCSPs parametrised by the (valued) constraint language. We highlight
the resolution of the ``bounded width conjecture''~\cite{Barto14:jacm} and the
result that a dichotomy for CSPs, conjectured in~\cite{Feder98:monotone} and
recently established by two independent
proofs~\cite{Bulatov17:focs,Zhuk17:focs}, implies a dichotomy for
VCSPs~\cite{Kozik15:icalp,Kolmogorov17:sicomp}. All this work is for arbitrary
(i.e., not necessarily Boolean) finite domains and relies on the algebraic
approach initiated in~\cite{Bulatov05:classifying} and nicely described in a
survey~\cite{Barto17:survey}.

One of the important aspects of the algebraic approach is the assumption that
constants $\mathcal{C}_D$ are present in (valued) constraint languages. (This is
without loss of generality with respect to polynomial-time solvability.) In the
surjective setting, it is the lack of constants that makes it difficult, if not
impossible, to employ the algebraic approach. Chen made the first step in this
direction~\cite{Chen2014} but it is not clear how to take his result (for CSPs)
further.

For a binary (unweighted) relation $\gamma$, $\VCSPs(\{\gamma\})$ has been
studied under the name of surjective
$\gamma$-Colouring~\cite{Bodirsky12:dam,Golovach12:tcs,Golovach12:acta,Martin15:jctb}
and vertex-compaction~\cite{Vikas13}. We remark that our notion of surjectivity
is global. For the $\gamma$-Colouring problem, a local version of surjectivity
has also been studied~\cite{Fiala05:tcs,Fiala08:csr}. This version corresponds
to finding a graph homomorphism such that the neighbourhood of every vertex $v$
is mapped surjectively onto the neighbourhood of the image of $v$.

Under the assumption of the unique games conjecture~\cite{khot10:coco},
Raghavendra has shown that the optimal approximation ratio for maximising
$\Qn$-valued VCSPs is achieved by the basic semidefinite programming
relaxation~\cite{raghavendra08:stoc,Raghavendra}.

Bach and Zhou have shown that any Max-CSP that is solvable in polynomial time in
the non-surjective setting admits a PTAS in the surjective setting, and that any
Max-CSP that is APX-hard in the non-surjective setting remains APX-hard in the
surjective setting~\cite{Bach11:surj}.

\section{Preliminaries}

\subsection{Weighted relations and VCSPs}
\label{ssecWeightedRelations}

We work in the arithmetic model of computation, i.e., every number is
represented in constant space, and basic arithmetic operations take constant
time. Let $\QInfty = \Q \cup \{\infty\}$ denote the set of extended rationals.
For any $c \in \QInfty$, we define $c \leq \infty$ and $\infty + c = c + \infty
= \infty$. If $c \geq 0$, we define $c \cdot \infty = \infty \cdot c = \infty$.
We leave the result of multiplying $\infty$ undefined for $c < 0$.

For any integer $n \geq 1$, let $[n] = \{1, \dots, n\}$.

\begin{definition}
\label{defWeightedRelation}

Let $r \geq 1$ be an integer. An $r$-ary \emph{weighted relation} over $D$ is a
mapping $\gamma : D^r \to \QInfty$; the \emph{arity} of $\gamma$ equals
$\ar(\gamma) = r$. We denote by $\Feas(\gamma)$ the underlying \emph{feasibility
relation} of $\gamma$, i.e.
\begin{equation}
\Feas(\gamma) =
\left\{
\tup{x} \in D^r
~\middle|~
\gamma(\tup{x}) < \infty
\right\} \,.
\end{equation}
We denote by $\Opt(\gamma)$ the relation consisting of the minimal-valued
tuples, i.e.
\begin{equation}
\Opt(\gamma) =
\left\{
\tup{x} \in \Feas(\gamma)
~\middle|~
\gamma(\tup{x}) \leq \gamma(\tup{y}) \text{ for every } \tup{y} \in D^r
\right\} \,.
\end{equation}

A weighted relation $\gamma$ is called \emph{crisp} if $\Feas(\gamma) =
\Opt(\gamma)$. In other words, there exists a constant $c \in \Q$ such that
$\gamma(\tup{x}) = c$ for all $\tup{x} \in \Feas(\gamma)$ and $\gamma(\tup{x}) =
\infty$ for all $\tup{x} \in D^r \setminus \Feas(\gamma)$.

\end{definition}

Weighted relations that differ only by a constant are considered equivalent, as
adding a rational constant to a weighted relation changes the value of every
solution to the $\VCSP$ by the same amount. Therefore, a crisp weighted relation
$\gamma$ can be equated with the relation $\Feas(\gamma)$. Conversely, a
relation $\rho$ can be seen as a crisp weighted relation $\gamma_c$ with
$\Feas(\gamma_c) = \rho$ and the codomain equal to $\{c, \infty\}$ for some $c
\in \Q$. Unless stated otherwise, we choose $c = 0$.

\begin{definition}
\label{defCommonRelations}

We denote by $\rho_=$ the binary equality relation $\{ (d,d) ~|~ d \in D \}$.
For any $d \in D$, we denote by $\rho_d$ the unary relation $\{ (d) \}$, which
is called a \emph{constant}. We denote the set of constants on $D$ by
$\mathcal{C}_D = \{ \rho_d ~|~ d \in D \}$.

For any relation $\rho$, we denote by $\Soft(\rho)$ the \emph{soft variant} of
$\rho$ defined by $\Soft(\rho)(\tup{x}) = 0$ if $\tup{x} \in \rho$ and
$\Soft(\rho)(\tup{x}) = 1$ otherwise.

\end{definition}

\begin{definition}

A \emph{constraint language} (or simply a \emph{language}) over $D$ is a
(possibly infinite) set of weighted relations over $D$.

\end{definition}

In this paper, we only consider languages of bounded arity. Note that a crisp
language is of a bounded arity if and only if it is finite.

\begin{definition}

A language $\Gamma$ is called \emph{s-tractable} if, for every finite $\Gamma'
\subseteq \Gamma$, $\VCSPs(\Gamma')$ can be solved in polynomial time. If
$\VCSPs(\Gamma)$ can be solved in polynomial time, language $\Gamma$ is called
\emph{globally s-tractable}.

If there exists a finite $\Gamma' \subseteq \Gamma$ such that $\VCSPs(\Gamma')$
is NP-hard, language $\Gamma$ is called \emph{s-intractable}. If
$\VCSPs(\Gamma)$ is NP-hard, language $\Gamma$ is called \emph{globally
s-intractable}.

\end{definition}

Note that a globally s-tractable language is s-tractable, and an s-intractable
language is globally s-intractable.

Lemmas~\ref{lmReductionVCSPtoSurjective}~and~\ref{lmReductionSurjectiveToVCSP} establish a
relation between the complexity of the $\VCSP$ and $\VCSPs$. We denote by
$\reduces$ the standard polynomial-time Turing reduction.

\begin{lemma}
\label{lmReductionVCSPtoSurjective}

For any constraint language $\Gamma$,
\begin{equation}
\VCSP(\Gamma) \reduces \VCSPs(\Gamma) \,.
\end{equation}

\end{lemma}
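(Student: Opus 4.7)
The plan is to give a one-call Turing reduction: given any $\VCSP(\Gamma)$ instance $I=(V,D,\phi_I)$, I will construct in polynomial time a single $\VCSPs(\Gamma)$ instance $I'=(V',D,\phi_{I'})$ whose minimum over surjective assignments equals the unrestricted minimum of $I$, and then invoke the oracle once.

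Concretely, I would adjoin one fresh dummy variable per label: put $V' = V \cup \{y_d : d \in D\}$ with the $y_d$ disjoint from $V$, and take $\phi_{I'}$ to consist of exactly the same weighted constraints (same weighted relations, same scopes) as $\phi_I$. No $y_d$ then appears in any scope, so $\phi_{I'}(s') = \phi_I(s'|_V)$ for every assignment $s' : V' \to D$. Since $|V'| \geq |D|$, at least one surjective assignment to $I'$ exists, so the $\VCSPs(\Gamma)$ oracle returns a meaningful value on $I'$.

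To verify correctness I would match the two minima via opposite inequalities. Every $s : V \to D$ extends to a surjective $s' : V' \to D$ by setting $s'(y_d) = d$ for each $d \in D$, and this extension satisfies $\phi_{I'}(s') = \phi_I(s)$; conversely, any surjective $s' : V' \to D$ restricts to an assignment of $I$ with the same value. Hence $\min_{s' \text{ surjective}} \phi_{I'}(s') = \min_s \phi_I(s)$, and an optimal surjective assignment to $I'$ gives an optimal assignment to $I$ upon restriction. The construction is linear in the input size and uses a single oracle call, so it is a valid polynomial-time Turing reduction; the only real content of the argument is the observation that one dummy variable per label is exactly enough to make any assignment extendable to a surjective one without perturbing the objective, so there is no serious obstacle.
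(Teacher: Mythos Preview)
Your proof is correct and follows essentially the same approach as the paper: both add $|D|$ fresh dummy variables with no constraints on them, then argue that any assignment to $I$ extends to a surjective assignment to $I'$ of the same value and conversely. Your write-up is simply more explicit about the details.
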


\begin{proof}

Given an instance $I$ of $\VCSP(\Gamma)$, we construct an instance $I'$ of
$\VCSPs(\Gamma)$ by adding $|D|$ extra variables. Any solution to $I$ can be
extended to a surjective solution to $I'$ of the same value and, conversely, any
(surjective) solution to $I'$ induces a solution to $I$ of the same value.
\qedhere

\end{proof}

\begin{lemma}
\label{lmReductionSurjectiveToVCSP}

For any constraint language $\Gamma$,
\begin{equation}
\VCSPs(\Gamma) \reduces \VCSP(\Gamma \cup \mathcal{C}_D) \,.
\end{equation}

\end{lemma}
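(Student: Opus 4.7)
The plan is to give a straightforward Turing reduction: I would show how to simulate a single surjective instance by polynomially many ordinary VCSP instances over the enriched language $\Gamma \cup \mathcal{C}_D$. The surjectivity requirement on an assignment $s:V\to D$ says exactly that for every label $d\in D$ there is some variable $x_d\in V$ with $s(x_d)=d$. Since $|D|$ is a fixed constant (and in particular we do not need uniformity in $D$), one can afford to enumerate the map $d\mapsto x_d$ explicitly.

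Concretely, given an instance $I=(V,D,\phi_I)$ of $\VCSPs(\Gamma)$, I would iterate over all functions $\sigma:D\to V$; there are $|V|^{|D|}$ of them, which is polynomial in $|V|$ because $|D|$ is constant. For each $\sigma$, form the instance $I_\sigma$ of $\VCSP(\Gamma\cup\mathcal{C}_D)$ obtained from $I$ by adding the valued constraints $\rho_d(\sigma(d))$ for every $d\in D$. By the definition of the constants in $\mathcal{C}_D$, any feasible assignment to $I_\sigma$ must send $\sigma(d)$ to $d$ for each $d$, and hence is automatically surjective on $V$; conversely, every surjective assignment $s$ to $I$ is feasible for $I_\sigma$ for at least one choice of $\sigma$ (pick any $\sigma(d)\in s^{-1}(d)$), and its objective value in $I_\sigma$ equals $\phi_I(s)$ since the added constants contribute $0$.

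Using the oracle for $\VCSP(\Gamma\cup\mathcal{C}_D)$, I would solve each $I_\sigma$ and return the best assignment found. By the two inclusions above, the minimum over $\sigma$ coincides with the optimum of $I$ restricted to surjective assignments (and is $\infty$ precisely when no surjective feasible assignment exists). The total number of oracle calls is $|V|^{|D|}$ and the construction of each $I_\sigma$ takes linear time, so the whole procedure is a polynomial-time Turing reduction.

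There is essentially no obstacle here; the only subtle point worth noting in the writeup is that the argument relies on $|D|$ being a constant, which is standard in this framework, and on the fact that adding a constraint $\rho_d(x)$ to an instance of $\VCSP(\Gamma)$ produces an instance of $\VCSP(\Gamma\cup\mathcal{C}_D)$ of size polynomial in the original.
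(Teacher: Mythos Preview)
Your proposal is correct and essentially identical to the paper's proof: both enumerate maps $D\to V$, add the corresponding constant constraints, and take the best solution. The only cosmetic difference is that the paper restricts to \emph{injective} maps $f:D\to V$ (which suffices since non-injective $\sigma$ yield infeasible $I_\sigma$), but this does not change the argument or the polynomial bound.
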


\begin{proof}

Given an instance $I = (V, D, \phi_I)$ of $\VCSPs(\Gamma)$, we iterate through
all $O \left( |V|^{|D|} \right)$ injective mappings $f: D \to V$. For each
mapping $f$, we construct an instance $I'_f$ of $\VCSP(\Gamma \cup
\mathcal{C}_D)$ by adding constraints $\rho_d(f(d))$ for all $d \in D$. The
additional constraints guarantee that only surjective solutions to $I'_f$ are
feasible. Conversely, any surjective solution to $I$ is a feasible solution to
$I'_f$ for some mapping $f$. Therefore, a solution of the smallest value among
optimal solutions to $I'_f$ for all $f$ is an optimal surjective solution to
$I$. \qedhere

\end{proof}

\begin{corollary}

Any (globally) tractable language $\Gamma$ with $\mathcal{C}_D \subseteq \Gamma$
is also (globally) s-tractable.

\end{corollary}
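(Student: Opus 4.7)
The plan is to derive the corollary as an almost immediate consequence of Lemma~\ref{lmReductionSurjectiveToVCSP}, treating the global and non-global tractability cases separately but analogously. The only thing that needs a moment of thought is handling arbitrary finite subsets in the (non-global) tractable case, since such a subset need not itself contain the constants.

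For the global case, I would start from Lemma~\ref{lmReductionSurjectiveToVCSP}, which gives $\VCSPs(\Gamma) \reduces \VCSP(\Gamma \cup \mathcal{C}_D)$. The hypothesis $\mathcal{C}_D \subseteq \Gamma$ makes $\Gamma \cup \mathcal{C}_D = \Gamma$, so the reduction collapses to $\VCSPs(\Gamma) \reduces \VCSP(\Gamma)$. Since $\Gamma$ is globally tractable, $\VCSP(\Gamma)$ is in polynomial time, and hence so is $\VCSPs(\Gamma)$; this shows global s-tractability.

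For the non-global case, I would pick an arbitrary finite $\Gamma' \subseteq \Gamma$ and need to show that $\VCSPs(\Gamma')$ is polynomial-time solvable. Applying Lemma~\ref{lmReductionSurjectiveToVCSP} to $\Gamma'$ yields $\VCSPs(\Gamma') \reduces \VCSP(\Gamma' \cup \mathcal{C}_D)$. Here $\Gamma'$ itself may not contain the constants, but $\Gamma' \cup \mathcal{C}_D$ is still a finite subset of $\Gamma$ (using $\mathcal{C}_D \subseteq \Gamma$ and finiteness of $\mathcal{C}_D$). By the tractability hypothesis on $\Gamma$, the problem $\VCSP(\Gamma' \cup \mathcal{C}_D)$ is solvable in polynomial time, so $\VCSPs(\Gamma')$ is too, establishing s-tractability of $\Gamma$.

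There is no real obstacle in this proof; the content is entirely carried by Lemma~\ref{lmReductionSurjectiveToVCSP}, and the corollary is essentially a bookkeeping statement ensuring that the lemma's conclusion applies both in the global formulation and to each finite subset. The only subtle point worth spelling out is the observation that $\Gamma' \cup \mathcal{C}_D$ remains a finite subset of $\Gamma$, which is what lets us invoke the tractability assumption after expanding $\Gamma'$ with the (already present) constants.
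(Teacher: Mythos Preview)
Your proposal is correct and follows exactly the intended approach: the paper states this as an immediate corollary of Lemma~\ref{lmReductionSurjectiveToVCSP} without giving any explicit proof, and your argument spells out precisely the details (including the observation that $\Gamma' \cup \mathcal{C}_D$ remains a finite subset of $\Gamma$) that make it go through.
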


Now we define a few operations on weighted relations that occur throughout the
paper.

\begin{definition}
\label{defWRelOperations}

Let $\gamma$ be an $r$-ary weighted relation.

\begin{itemize}

\item \emph{Addition of a rational constant}: For any $c \in \Q$, $\gamma + c =
\gamma'$ such that $\gamma'(\tup{x}) = \gamma(\tup{x}) + c$.

\item \emph{Non-negative scaling}: For any $c \in \Qn$, $c \cdot \gamma =
\gamma'$ such that $\gamma'(\tup{x}) = c \cdot \gamma(\tup{x})$. Note that $0
\cdot \gamma = \Feas(\gamma)$.

\item \emph{Coordinate mapping}: For any arity $r'$ and mapping $f : [r] \to
[r']$, $f(\gamma) = \gamma'$ such that $\gamma'(x_1, \dots, x_{r'}) = \gamma
\left( x_{f(1)}, \dots, x_{f(r)} \right)$.

\item \emph{Minimisation}: For any $i \in [r]$, the minimisation of $\gamma$ at
coordinate $i$ results in $\gamma'$ such that $\gamma'(x_1, \dots, x_{i-1},
x_{i+1}, \dots, x_r) = \min_{x_i \in D} \gamma(x_1, \dots, x_r)$.

\item \emph{Pinning}: For any $d \in D$ and $i \in [r]$, the pinning of $\gamma$
to label $d$ at coordinate $i$ results in $\gamma'$ such that $\gamma'(x_1,
\dots, x_{i-1}, x_{i+1}, \dots, x_r) = \gamma(x_1, \dots, x_{i-1}, d, x_{i+1},
\dots, x_r)$. A pinning to label $d$ is called a \emph{$d$-pinning}.

\item \emph{Addition}: For any weighted relations $\gamma_1, \gamma_2$ with
$\ar(\gamma_1) = \ar(\gamma_2)$, $\gamma_1 + \gamma_2 = \gamma'$ such that
$\gamma'(\tup{x}) = \gamma_1(\tup{x}) + \gamma_2(\tup{x})$.

\end{itemize}

\end{definition}

We extend operations on weighted relations to languages in the natural way,
e.g., $\Feas(\Gamma) = \{ \Feas(\gamma) ~|~ \gamma \in \Gamma \}$.

A weighted relational clone~\cite{cccjz13:sicomp} is a language closed under
certain operations (e.g., non-negative scaling and minimisation) that preserve
the tractability of languages in the following sense: The VCSP over the smallest
weighted relational clone containing a language $\Gamma$ can be reduced in
polynomial time to $\VCSP(\Gamma)$. Weighted relational clones are characterised
by their weighted polymorphisms (a generalisation of multimorphisms defined in
Definition~\ref{def:mult}), which enables the employment of tools from universal algebra
in the effort to obtain a complexity classification of languages.

In the surjective setting, however, minimisation may not preserve the
tractability of languages, and thus we need to define a language closure that
excludes this operation. Consequently, we are unable to use the algebraic
approach in our proofs in Section~\ref{secSurjectiveHardness}.

\begin{definition}
\label{defSurjectiveClosure}

A constraint language $\Gamma$ is called \emph{closed} if it is closed under
addition, coordinate mapping, non-negative scaling, addition of a rational
constant, operation $\Opt$, and, for all $d \in D$ such that $\rho_d\in\Gamma$,
$d$-pinning.

We define $\clGamma$ to be the smallest closed language containing $\Gamma$.

\end{definition}

Now we show that these closure operations preserve the complexity of the
$\VCSPs$. Note that we require a language to be closed under $d$-pinning
\emph{only} if it contains $\rho_d$.

\begin{lemma}
\label{lmSurjectiveClosureReduction}

For any constraint language $\Gamma$,
\begin{equation}
\VCSPs(\clGamma)\reduces\VCSPs(\Gamma) \,.
\end{equation}

\end{lemma}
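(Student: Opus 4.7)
The plan is structural induction on the derivation of weighted relations in $\clGamma$ from $\Gamma$: for each of the six closure operations I show how to absorb one use of it into a polynomial-time Turing reduction to $\VCSPs(\Gamma)$. Given any finite $\Gamma' \subseteq \clGamma$ and an instance $I = (V, D, \phi_I)$ of $\VCSPs(\Gamma')$, I process the constraints whose weighted relations lie outside $\Gamma$ by unwinding their derivations one step at a time.

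Five of the operations admit direct, variable-preserving rewrites of the instance. For addition of a rational constant $c$, replace $\gamma + c$ by $\gamma$ and record the additive shift $w \cdot c$ in the objective. For non-negative scaling, absorb the factor into the constraint weight. For coordinate mapping $f$, reindex the scope according to $f$. For $\gamma_1 + \gamma_2$, split into two constraints on the same scope. For $\Opt(\gamma)$, replace the crisp constraint $\Opt(\gamma)(\tup{x})$ by $M \cdot \gamma(\tup{x})$, where $M$ is chosen (in polynomial time from the instance size and the fixed finite set $\Gamma'$) so large that the additional cost of leaving $\Opt(\gamma)$ exceeds the range of the remainder of the objective; infeasibility is preserved because $\Feas(M \cdot \gamma) = \Feas(\gamma)$. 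None of these steps touches the variable set, so the surjectivity requirement on $V$ is preserved.

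The delicate case is $d$-pinning, permitted only when $\rho_d \in \Gamma$. The classical trick of adding a fresh variable $y$ with constraint $\rho_d(y)$ and routing it through the pinned coordinate breaks surjectivity here: an assignment that is constantly $1-d$ on $V$ while $y = d$ is surjective on $V \cup \{y\}$ but not on $V$, so the reduced instance may admit spurious optima. To avoid this I refuse to introduce new variables. For each label $d$ that occurs in a pinning step of the derivation (necessarily with $\rho_d \in \Gamma$), I enumerate choices of a variable $v_d \in V$ and form a sub-instance that adds the constraint $\rho_d(v_d)$ and substitutes $v_d$ into every coordinate pinned to $d$ throughout the derivation. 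Because $V$ is unchanged, the surjectivity condition of each sub-instance coincides with that of $I$. Any surjective solution of $I$ assigns value $d$ to at least one variable, so it is realised with identical objective value by at least one sub-instance; conversely, every surjective solution of a sub-instance is a surjective solution of $I$ with the same value. Returning the best objective across the $O(|V|^{|D|})$ sub-instances solves $I$.

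The main obstacle is exactly this $d$-pinning step, where the naive gadget reduction corrupts surjectivity; the enumeration over candidate pinned variables resolves it with only a polynomial (since $|D|$ is a constant of the language) blow-up in the number of oracle calls. Combined with the five local rewrites, this yields a polynomial-time Turing reduction from $\VCSPs(\clGamma)$ to $\VCSPs(\Gamma)$, as required.
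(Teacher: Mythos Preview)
Your proposal is correct and follows essentially the same approach as the paper's proof: the five non-pinning operations are handled by local rewrites that do not touch the variable set, and $d$-pinning is handled by enumerating over an existing variable $v_d \in V$ (rather than introducing a fresh one), adding $\rho_d(v_d)$, and substituting $v_d$ into the pinned coordinate. The paper phrases the pinning reduction one label at a time (producing $n$ sub-instances per step) while you bundle all labels into a single $O(|V|^{|D|})$ enumeration, but this is only a cosmetic difference.
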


\begin{proof}

For most of the closure operations, standard reductions for the $\VCSP$ apply to
the surjective setting as well. Let $\gamma_1, \gamma_2 \in \Gamma$ be weighted
relations with $\ar(\gamma_1) = \ar(\gamma_2)$, and let $\gamma' = \gamma_1 +
\gamma_2$. Then $\VCSPs(\Gamma \cup \{ \gamma' \}) \reduces \VCSPs(\Gamma)$, as
any constraint of the form $w \cdot \gamma'(\tup{x})$ can be replaced with a
pair of constraints $w \cdot \gamma_1(\tup{x})$, $w \cdot \gamma_2(\tup{x})$.
Similarly, let $\gamma \in \Gamma$ and $\gamma' = f(\gamma)$ where $f :
[\ar(\gamma)] \to [\ar(\gamma')]$; then any constraint of the form $w \cdot
\gamma' \left( x_1, \dots, x_{\ar(\gamma')} \right)$ can be replaced with a
constraint $w \cdot \gamma \left( x_{f(1)}, \dots, x_{f(\ar(\gamma))} \right)$.
Non-negative scaling can be achieved by scaling the weight of affected
constraints. Addition of a rational constant changes the value of every solution
by the same amount, and thus it can be ignored.

Now we show that $\VCSPs(\Gamma \cup \{ \Opt(\gamma) \}) \reduces
\VCSPs(\Gamma)$ for any $\gamma \in \Gamma$. Let $I$ be an instance of
$\VCSPs(\Gamma \cup \{ \Opt(\gamma) \})$. Without loss of generality, assume
that the minimum values assigned by $\gamma$ and $\Opt(\gamma)$ equal $0$ and
all weighted relations in $I$ assign non-negative values (this can be achieved
by adding rational constants). We may also assume that $\gamma$ is not crisp (otherwise
$\Opt(\gamma) = \gamma$). Let $m$ denote the smallest positive value assigned by
$\gamma$, and let $M$ be an upper bound on the value of any feasible solution to
$I$ (e.g., the weighted sum of the maximum finite values assigned by the
constraints of $I$). We replace every constraint of the form $w \cdot
\Opt(\gamma)(\tup{x})$ in $I$ with a constraint $(M/m + 1) \cdot
\gamma(\tup{x})$ to obtain an instance $I' \in \VCSPs(\Gamma)$. Any feasible
solution to instance $I$ gets assigned the same value by $I'$. Any infeasible
solution to instance $I$ is either infeasible for $I'$ as well, or it incurs an
infinite value from a constraint of the form $w \cdot \Opt(\gamma)(\tup{x})$ in
$I$ and thus a value of at least $(M/m + 1) \cdot m > M$ in $I'$. Therefore, an
optimal solution to $I'$ is optimal for $I$ as well.

In the case of pinning, we need a different reduction as the standard one relies
on minimisation. Suppose that $\rho_d \in \Gamma$. Let $\gamma'$ be a
$d$-pinning of a weighted relation $\gamma \in \Gamma$; without loss of
generality, let it be a pinning at the first coordinate. We show that
$\VCSPs(\Gamma \cup \{\gamma'\}) \reduces \VCSPs(\Gamma)$. Let $I =
(V,D,\phi_I)$ be an instance of $\VCSPs(\Gamma\cup\{\gamma'\})$ with $V = \{x_1,
\dots, x_n\}$. In a surjective solution to $I$, at least one variable is
assigned label $d$, but we do not a priori know which one. For every $i \in
[n]$, we construct an instance $I_i=(V,D,\phi_{I_i})$ of $\VCSPs(\Gamma)$ by
replacing all constraints of the form $\gamma'(\tup{x})$ with $\gamma(x_i,
\tup{x})$ and adding a constraint $\rho_d(x_i)$ to force variable $x_i$ to take
label $d$. A solution of the smallest value among optimal solutions to $I_1,
\dots, I_n$ is an optimal solution to $I$. \qedhere

\end{proof}

\subsection{Polymorphisms and multimorphisms}

For any $r \geq 1$ and a $k$-ary operation $h: D^k \to D$, we extend $h$ to
$r$-tuples over $D$ by applying it componentwise. Namely, for $\tup{x}_1, \dots,
\tup{x}_k \in D^r$ where $\tup{x}_i = (x_{i,1}, \dots, x_{i,r})$, we define
$h(\tup{x}_1, \dots, \tup{x}_k) \in D^r$ by
\begin{equation}
h \left( \tup{x}_1, \dots, \tup{x}_k \right)
=
\left(
h \left( x_{1,1}, \dots, x_{k,1} \right), \dots,
h \left( x_{1,r}, \dots, x_{k,r} \right)
\right) \,.
\end{equation}

The following notion is at the heart of the algebraic approach to decision
CSPs~\cite{Bulatov05:classifying}.

\begin{definition}
Let $\gamma$ be a weighted relation on $D$.
A $k$-ary operation
$h:D^k\to D$ is a \emph{polymorphism} of $\gamma$ (and $\gamma$ is
\emph{invariant under} or \emph{admits} $h$) if, for every
$\tup{x}_1,\ldots,\tup{x}_k\in\Feas(\gamma)$, we have
$h(\tup{x}_1,\ldots,\tup{x}_k)\in\Feas(\gamma)$. We say that $h$ is a
polymorphism of a language $\Gamma$ if it is a polymorphism
of every $\gamma\in\Gamma$.
\end{definition}

The following notion, which involves a collection of $k$ $k$-ary polymorphisms,
plays an important role in the complexity classification of Boolean valued constraint
languages~\cite{cohen06:complexitysoft}, as we will see in Theorem~\ref{thmClassificationVCSP} in Section~\ref{sec:Boolean}.

\begin{definition}\label{def:mult}
Let $\gamma$ be a weighted relation on $D$.
A list $\mmorp{h_1,\ldots,h_k}$ of $k$-ary polymorphisms of
$\gamma$ is a $k$-ary \emph{multimorphism} of $\gamma$ (and $\gamma$ \emph{admits}
$\mmorp{h_1,\ldots,h_k}$) if, for every
$\tup{x}_1,\ldots,\tup{x}_k\in\Feas(\gamma)$, we have
\begin{equation*}
\sum_{i=1}^k \gamma(h_i(\tup{x}_1,\ldots,\tup{x}_k))\ \leq\ \sum_{i=1}^k
\gamma(\tup{x}_i)\,.
\end{equation*}
$\mmorp{h_1,\ldots,h_k}$ is a multimorphism of a language $\Gamma$ if it is a multimorphism of every $\gamma\in\Gamma$.
\end{definition}

The operations in Definition~\ref{defSurjectiveClosure} preserve
multimorphisms~\cite{cohen06:complexitysoft,fz16:toct}, i.e., any multimorphism
of a language $\Gamma$ is also a multimorphism of $\clGamma$. Consequently, all
polymorphisms of a crisp weighted relation are preserved.

\subsection{Boolean VCSPs}
\label{sec:Boolean}

In the rest of the paper, we consider only Boolean languages (i.e., $D = \{0,
1\}$), unless explicitly mentioned otherwise. For any arity $r \geq 1$, we
denote by $\tup{0}^r$ ($\tup{1}^r$) the zero (one) $r$-tuple. For $r$-tuples
$\tup{x} = (x_1, \dots, x_r)$ and $\tup{y} = (y_1, \dots, y_r) \in D^r$, we
define $\tup{x} \leq \tup{y}$ if and only if $x_i \leq y_i$ for all $i \in [r]$
(where $0 < 1$). We also define the following operations on $D$:

\begin{itemize}
\item
For any $a \in D$,
$c_a$ is the constant unary operation such that $c_a(x) = a$ for all $x \in D$.
\item
Operation $\neg$ is the unary negation, i.e.\ $\neg(0) = 1$ and $\neg(1) = 0$.
For a weighted relation $\gamma$, we define $\neg(\gamma)$ to be the weighted
relation $\neg(\gamma)(\tup{x})=\gamma(\neg(\tup{x}))$. For a language $\Gamma$,
we define $\neg(\Gamma)=\{\neg(\gamma)~|~\gamma\in\Gamma\}$. Note that
$\neg(\Gamma)$ can be obtained from $\Gamma$ simply by exchanging the labels $\{
0, 1 \}$, and hence has the same complexity as $\Gamma$.
\item
Binary operation $\oplus$ is the addition modulo $2$ operation. In this case, we
use the infix notation, i.e., $0 \oplus 0 = 0 = 1 \oplus 1$ and $0 \oplus 1 = 1
= 1 \oplus 0$.
\item
Binary operation $\min$ ($\max$) returns the smaller (larger) of its two
arguments with respect to the order $0 < 1$.
\item
Binary operation $\sub$ (for subtraction) is defined as $\sub(x, y) = \min(x,
\neg y)$.
\item
Ternary operation $\mnrt$ (for minority) is the unique ternary operation on $D$
satisfying $\mnrt(x,x,y)=\mnrt(x,y,x)=\mnrt(y,x,x)=y$ for all $x,y\in D$.
\item
Ternary operation $\mjrt$ (for majority) is the unique ternary operation on $D$
satisfying $\mjrt(x,x,y)=\mjrt(x,y,x)=\mjrt(y,x,x)=x$ for all $x,y\in D$.
\end{itemize}

\begin{lemma}
\label{lmSubImpliesC0Min}

If a weighted relation admits polymorphism $\sub$, then it also admits
polymorphisms $c_0$ and $\min$.

\end{lemma}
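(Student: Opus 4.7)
The plan is to use the elementary fact that, since a polymorphism condition says nothing about the weights (it only constrains $\Feas(\gamma)$), the set of polymorphisms of $\gamma$ is closed under composition. So it suffices to write both $c_0$ and $\min$ as terms in $\sub$ applied to feasible tuples; iterating the hypothesis that $\sub(\tup{x}_1, \tup{x}_2) \in \Feas(\gamma)$ whenever $\tup{x}_1, \tup{x}_2 \in \Feas(\gamma)$ then finishes the job.

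For $c_0$, the key observation is that componentwise $\sub(x, x) = \min(x, \neg x) = 0$ for every $x \in \{0,1\}$. Assuming $\Feas(\gamma)$ is non-empty (otherwise every operation is vacuously a polymorphism), I would pick any $\tup{x} \in \Feas(\gamma)$, apply the $\sub$-polymorphism hypothesis to the pair $(\tup{x}, \tup{x})$, and conclude that $\tup{0}^r = \sub(\tup{x}, \tup{x}) \in \Feas(\gamma)$. Then for an arbitrary feasible $\tup{y}$, $c_0(\tup{y}) = \tup{0}^r$ is feasible, so $c_0$ is a polymorphism.

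For $\min$, the main content is the Boolean identity
\begin{equation*}
\sub(x, \sub(x, y)) = x \land \neg(x \land \neg y) = x \land (\neg x \lor y) = x \land y = \min(x, y),
\end{equation*}
verified in one line (or by a $2\times 2$ truth-table check). Given $\tup{x}_1, \tup{x}_2 \in \Feas(\gamma)$, I would first invoke the polymorphism hypothesis on $(\tup{x}_1, \tup{x}_2)$ to get $\tup{z} = \sub(\tup{x}_1, \tup{x}_2) \in \Feas(\gamma)$, and then invoke it again on $(\tup{x}_1, \tup{z})$ to get $\sub(\tup{x}_1, \tup{z}) = \min(\tup{x}_1, \tup{x}_2) \in \Feas(\gamma)$.

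The only real obstacle is guessing the right composition, i.e.\ spotting that $\sub(x, \sub(x, y))$ simplifies to $\min(x, y)$; once that term is written down, the argument is a two-step application of the hypothesis and needs no further machinery.
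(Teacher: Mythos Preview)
Your proposal is correct and follows exactly the paper's approach: the paper's proof consists of the single line ``For every $x, y \in D$, it holds $c_0(x) = 0 = \sub(x, x)$ and $\min(x, y) = \sub(x, \sub(x, y))$.'' Your write-up is a more detailed expansion of these same two identities.
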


\begin{proof}

For every $x, y \in D$, it holds $c_0(x) = 0 = \sub(x, x)$ and $\min(x, y) =
\sub(x, \sub(x, y))$. \qedhere

\end{proof}

Cohen et al.~\cite{cohen06:complexitysoft} established a complexity
classification of Boolean constraint languages.

\begin{theorem}[{\cite[Theorem 7.1]{cohen06:complexitysoft}}]
\label{thmClassificationVCSP}
Let $\Gamma$ be a Boolean $\QInfty$-valued language. Then $\Gamma$ is tractable
if it admits any the following multimorphisms: $\mmorp{c_0}$, $\mmorp{c_1}$,
$\mmorp{\min, \min}$, $\mmorp{\max, \max}$, $\mmorp{\min, \max}$, $\mmorp{\mnrt,
\mnrt, \mnrt}$, $\mmorp{\mjrt, \mjrt, \mjrt}$, $\mmorp{\mjrt, \mjrt, \mnrt}$.
Otherwise, $\Gamma$ is intractable.
\end{theorem}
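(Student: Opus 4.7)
The plan is to prove the dichotomy by handling each direction separately. For tractability, I would give, for each of the eight listed multimorphisms, a polynomial-time algorithm for $\VCSP(\Gamma)$ over any language $\Gamma$ admitting it. For intractability, I would use the algebraic/weighted-polymorphism approach: the closure of $\Gamma\cup\mathcal{C}_D$ under the operations of Definition~\ref{defWRelOperations} \emph{together with minimisation and pinning} preserves $\VCSP$ tractability, so failing to admit any of the eight multimorphisms must force $\Gamma\cup\mathcal{C}_D$ to generate in its weighted relational clone a weighted relation that encodes an NP-hard problem.

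The tractability cases subdivide neatly. The unary multimorphisms $\mmorp{c_0}$ and $\mmorp{c_1}$ say that the all-zero (respectively all-one) tuple is always optimal for every $\gamma\in\Gamma$, so the corresponding constant assignment solves any instance. The pairs $\mmorp{\min,\min}$ and $\mmorp{\max,\max}$ strengthen these with a monotonicity property and are again solved by a constant assignment. The key case $\mmorp{\min,\max}$ is exactly submodularity on the Boolean cube, so $\VCSP(\Gamma)$ reduces to polynomial-time submodular function minimisation. The three ternary cases are algebraic in flavour: $\mmorp{\mnrt,\mnrt,\mnrt}$ forces every weighted relation to be supported on an affine subspace of $\mathrm{GF}(2)^r$ and to be symmetric under the corresponding XOR structure, so the problem reduces to solving a linear system over $\mathrm{GF}(2)$; $\mmorp{\mjrt,\mjrt,\mjrt}$ corresponds to a 2-SAT-like structure solved via implication graphs; and the mixed case $\mmorp{\mjrt,\mjrt,\mnrt}$ is handled by showing that the basic LP relaxation is half-integral and that any half-integral optimum can be rounded to an integral one by successive applications of $\mjrt$ and $\mnrt$.

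For intractability, I would proceed by a case analysis on the weighted polymorphism clone of $\Gamma$. Post's classification of clones on $\{0,1\}$ limits the possible crisp polymorphism patterns; refining to weighted polymorphisms, one checks that any Boolean $\Gamma$ avoiding all eight multimorphisms must contain, after closing under the weighted-relational operations and adjoining $\mathcal{C}_D$, a concrete hard weighted relation --- e.g.\ a soft XOR giving a reduction from Min-UnCut/Max-E3-Lin-2, a $1$-in-$3$ relation giving positive $1$-in-$3$ SAT, a NAND-like relation giving Max-Cut, or a crisp feasibility relation that is already NP-hard by Schaefer's theorem. A small inventory of such canonical reductions covers every remaining clone.

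The main obstacle is the intractability half: one must show that the list of eight multimorphisms is \emph{complete}, i.e., any Boolean weighted polymorphism profile outside them really does permit expressing an NP-hard weighted problem. This requires a careful enumeration of the weighted refinements of the Post clones and a matching inventory of hardness reductions; the subtlety is that weighted polymorphisms are strictly finer than crisp ones, so Post's lattice alone is not enough and one has to track the valued inequality in $\mmorp{h_1,\dots,h_k}$. The most delicate tractability subcase is $\mmorp{\mjrt,\mjrt,\mnrt}$, where neither the majority nor the minority alone suffices and one must argue either integrality of the basic LP or a combined rounding scheme; this is the algorithmic core of the positive direction.
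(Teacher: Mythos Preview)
This theorem is not proved in the present paper at all: it is quoted verbatim from Cohen, Cooper, Jeavons, and Krokhin~\cite{cohen06:complexitysoft} and used as a black box. There is therefore no ``paper's own proof'' to compare your proposal against; the paper relies on the classification without reproducing any of its argument.

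That said, your sketch contains a genuine error in the tractability direction. You claim that the cases $\mmorp{\min,\min}$ and $\mmorp{\max,\max}$ ``are again solved by a constant assignment.'' This is false: the unary relation $\rho_1$ (i.e.\ $\rho_1(1)=0$, $\rho_1(0)=\infty$) admits $\mmorp{\min,\min}$, yet the all-zero assignment is infeasible for any instance containing it; dually $\rho_0$ admits $\mmorp{\max,\max}$. These two classes genuinely extend the $\mmorp{c_0}$ and $\mmorp{c_1}$ cases and require a non-trivial algorithm (in~\cite{cohen06:complexitysoft} they are handled via a generalised arc-consistency / unit-propagation style procedure exploiting the monotone lattice structure of the feasible optima, not by evaluating a fixed constant tuple). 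Your descriptions of the $\mmorp{\min,\max}$ (submodular minimisation), $\mmorp{\mnrt,\mnrt,\mnrt}$ and $\mmorp{\mjrt,\mjrt,\mjrt}$ (forcing crispness, hence Gaussian elimination / 2-SAT), and the hardness strategy (closure under expressibility plus a finite inventory of canonical hard gadgets, organised via Post's lattice) are broadly in line with how the original paper proceeds, though the $\mmorp{\mjrt,\mjrt,\mnrt}$ case is handled there by showing such weighted relations decompose into unary plus crisp bijunctive parts rather than by LP rounding.
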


Note that multimorphism $\mmorp{\min, \max}$ corresponds to
submodularity~\cite{Schrijver03:CombOpt}. Constants $\mathcal{C}_D = \{ \rho_0,
\rho_1 \}$ admit multimorphisms $\mmorp{\min, \min}$, $\mmorp{\max, \max}$,
$\mmorp{\min, \max}$, $\mmorp{\mnrt, \mnrt, \mnrt}$, $\mmorp{\mjrt, \mjrt,
\mjrt}$, $\mmorp{\mjrt, \mjrt, \mnrt}$; hence, these six classes of languages
are s-tractable by Lemma~\ref{lmReductionSurjectiveToVCSP}. However, $\rho_0$ does
not admit $\mmorp{c_1}$ and $\rho_1$ does not admit $\mmorp{c_0}$.

\begin{remark}
\label{remGlobalClassificationVCSP}

Although Theorem~\ref{thmClassificationVCSP} is stated only for the weaker notion of
tractability (i.e., for finite languages) in~\cite{cohen06:complexitysoft}, the
proofs there actually establish the same classification for the stronger notion
of global tractability as well.

In particular, all the tractable classes (characterised by the eight
multimorphisms) are globally tractable. Conversely, any globally intractable
language is also intractable: If a language $\Gamma$ does not admit any of the
eight multimorphisms, then there exists a finite subset $\Gamma' \subseteq
\Gamma$ with $|\Gamma'| \leq 8$ that does not admit any of the eight
multimorphisms (since a single weighted relation suffices to violate a
multimorphism).

\end{remark}

We note that Theorem~\ref{thmClassificationVCSP} is a generalisation of
Schaefer's classification of $\{0,\infty\}$-valued constraint
languages~\cite{Schaefer78:complexity} and Creignou's classification of
$\{0,1\}$-valued constraint languages~\cite{Creignou95:jcss}. In particular,
Theorem~\ref{thmClassificationVCSP} implies the following classification of
$\Q$-valued languages.

\begin{theorem}[{\cite[Corollary~7.11]{cohen06:complexitysoft}}]
\label{thmClassificationMinCSP}
Let $\Gamma$ be a Boolean $\Q$-valued language. Then $\Gamma$ is tractable if it
admits any of the following multimorphisms: $\mmorp{c_0}$, $\mmorp{c_1}$,
$\mmorp{\min, \max}$. Otherwise, $\Gamma$ is intractable.
\end{theorem}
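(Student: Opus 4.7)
The plan is to derive this corollary directly from Theorem~\ref{thmClassificationVCSP}. Since $\Gamma$ is $\Q$-valued, every tuple is feasible ($\Feas(\gamma) = D^{\ar(\gamma)}$ for every $\gamma \in \Gamma$), so every multimorphism inequality may be applied to arbitrary substitutions without worrying about the feasibility premise. The tractability direction is immediate: $\mmorp{c_0}$, $\mmorp{c_1}$ and $\mmorp{\min, \max}$ are three of the eight tractable multimorphisms listed in Theorem~\ref{thmClassificationVCSP}. For intractability I would argue by contrapositive: I show that if a $\Q$-valued weighted relation $\gamma$ admits any of the other five multimorphisms $\mmorp{\min, \min}$, $\mmorp{\max, \max}$, $\mmorp{\mnrt, \mnrt, \mnrt}$, $\mmorp{\mjrt, \mjrt, \mjrt}$, or $\mmorp{\mjrt, \mjrt, \mnrt}$, then it also admits one of the three in the statement; invoking Theorem~\ref{thmClassificationVCSP} and Remark~\ref{remGlobalClassificationVCSP} then finishes the proof.

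Each such reduction between multimorphisms follows from a single substitution into the defining inequality. From $\mmorp{\min, \min}$, plugging $\tup{y} = \tup{0}^r$ gives $2\gamma(\tup{0}^r) \leq \gamma(\tup{x}) + \gamma(\tup{0}^r)$, i.e., $\gamma(\tup{0}^r) \leq \gamma(\tup{x})$, which is precisely $\mmorp{c_0}$; the case of $\mmorp{\max, \max}$ with $\tup{y} = \tup{1}^r$ is symmetric and yields $\mmorp{c_1}$. From $\mmorp{\mnrt, \mnrt, \mnrt}$, taking $\tup{z} = \tup{y}$ and using $\mnrt(\tup{x}, \tup{y}, \tup{y}) = \tup{x}$ produces $3\gamma(\tup{x}) \leq \gamma(\tup{x}) + 2\gamma(\tup{y})$, forcing $\gamma$ to be a constant function; the analogous substitution in $\mmorp{\mjrt, \mjrt, \mjrt}$, via $\mjrt(\tup{x}, \tup{y}, \tup{y}) = \tup{y}$, gives the same conclusion. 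A constant weighted relation trivially admits every multimorphism, in particular all three listed in the statement.

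The only non-trivial case is $\mmorp{\mjrt, \mjrt, \mnrt} \Rightarrow \mmorp{\min, \max}$, which I regard as the main obstacle. I would substitute $\tup{z} = \max(\tup{x}, \tup{y})$; a coordinatewise inspection of the four Boolean patterns for $(x_i, y_i)$ shows that $\mjrt(\tup{x}, \tup{y}, \max(\tup{x}, \tup{y})) = \max(\tup{x}, \tup{y})$ and $\mnrt(\tup{x}, \tup{y}, \max(\tup{x}, \tup{y})) = \min(\tup{x}, \tup{y})$, so the multimorphism inequality becomes
\begin{equation}
2\gamma(\max(\tup{x}, \tup{y})) + \gamma(\min(\tup{x}, \tup{y})) \leq \gamma(\tup{x}) + \gamma(\tup{y}) + \gamma(\max(\tup{x}, \tup{y})) \,,
\end{equation}
which after cancelling one copy of $\gamma(\max(\tup{x}, \tup{y}))$ from both sides is precisely the submodular inequality defining $\mmorp{\min, \max}$. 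Combining these five reductions with Theorem~\ref{thmClassificationVCSP} yields the intractability direction and completes the classification; every step besides the choice of third argument in the $\mmorp{\mjrt, \mjrt, \mnrt}$ case is a one-line substitution.
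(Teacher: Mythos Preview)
Your derivation is correct. The paper does not give a standalone proof of this theorem; it is quoted from~\cite[Corollary~7.11]{cohen06:complexitysoft}, with only the remark that it follows from Theorem~\ref{thmClassificationVCSP}. The closest the paper comes to spelling out the collapse of the eight tractable classes to three in the $\Q$-valued case is in the proof of Theorem~\ref{thmClassificationSurjectiveFiniteValuedVCSP}, and there it outsources the work: it cites \cite[Propositions~6.20 and~6.22]{cohen06:complexitysoft} to conclude that $\mmorp{\mnrt,\mnrt,\mnrt}$ and $\mmorp{\mjrt,\mjrt,\mjrt}$ force crispness (hence constancy in the $\Q$-valued case), and \cite[Corollary~6.26]{cohen06:complexitysoft} to conclude that $\mmorp{\mjrt,\mjrt,\mnrt}$ forces modularity (hence submodularity). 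Your argument is more elementary and fully self-contained: each of the five collapses is a single substitution into the multimorphism inequality, and in particular your choice $\tup{z}=\max(\tup{x},\tup{y})$ gives a direct one-line derivation of submodularity from $\mmorp{\mjrt,\mjrt,\mnrt}$ without passing through modularity. The cancellation of $\gamma(\max(\tup{x},\tup{y}))$ is legitimate precisely because $\gamma$ is $\Q$-valued, which you noted at the outset. One minor point: the appeal to Remark~\ref{remGlobalClassificationVCSP} is unnecessary here, as Theorem~\ref{thmClassificationVCSP} alone already gives the intractability direction you need.
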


Creignou and H{\'{e}}brard~\cite{Creignou97} established a complexity
classification of Boolean $\{0, \infty\}$-valued languages in the surjective
setting.

\begin{theorem}[\cite{Creignou97}]
\label{thmClassificationSurjectiveCSP}

Let $\Gamma$ be a Boolean $\{0, \infty\}$-valued language. Then $\Gamma$ is
s-tractable if it is invariant under any of the following operations: $\min$,
$\max$, $\mnrt$, $\mjrt$. Otherwise, $\Gamma$ is s-intractable.

\end{theorem}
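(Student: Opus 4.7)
I plan to split the proof into the two directions of the dichotomy, deducing each from Theorem~\ref{thmClassificationVCSP} (Schaefer's classification, specialised to $\{0,\infty\}$-valued languages) together with the reductions of Lemmas~\ref{lmReductionVCSPtoSurjective} and~\ref{lmReductionSurjectiveToVCSP}.

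For s-tractability, the key observation is that each of $\min$, $\max$, $\mnrt$, $\mjrt$ is idempotent, and hence both $\rho_0$ and $\rho_1$ are invariant under it. Consequently, if $\Gamma$ admits some $h \in \{\min, \max, \mnrt, \mjrt\}$, then so does $\Gamma \cup \mathcal{C}_D$. For $\{0,\infty\}$-valued languages, each of the multimorphisms $\mmorp{\min,\min}$, $\mmorp{\max,\max}$, $\mmorp{\mnrt,\mnrt,\mnrt}$, $\mmorp{\mjrt,\mjrt,\mjrt}$ in Theorem~\ref{thmClassificationVCSP} is equivalent to invariance under the underlying operation, so Theorem~\ref{thmClassificationVCSP} yields tractability of $\Gamma \cup \mathcal{C}_D$. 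Lemma~\ref{lmReductionSurjectiveToVCSP} then lifts this to s-tractability of $\Gamma$.

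For s-intractability, I would apply Theorem~\ref{thmClassificationVCSP} to $\Gamma$ itself. If $\Gamma$ admits none of the eight listed multimorphisms, $\Gamma$ is NP-hard as a regular VCSP, and Lemma~\ref{lmReductionVCSPtoSurjective} transfers this to s-intractability. Otherwise, since by hypothesis $\Gamma$ admits none of $\min$, $\max$, $\mnrt$, $\mjrt$, the only remaining tractable multimorphisms are $\mmorp{c_0}$ and $\mmorp{c_1}$; for a $\{0,\infty\}$-valued language these mean $\Gamma$ is $0$-valid or $1$-valid. The $\neg$ symmetry from Section~\ref{sec:Boolean} reduces the remaining work to showing s-intractability when $\Gamma$ is $0$-valid and admits none of $\min$, $\max$, $\mnrt$, $\mjrt$.

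This $0$-valid case is the main obstacle. The $d$-pinning operation from Definition~\ref{defSurjectiveClosure} is unavailable, because $\Gamma$ need not contain $\rho_0$ and certainly cannot contain $\rho_1$ (which is not $0$-valid). My plan is to exploit surjectivity itself as a proxy for the missing $\rho_1$: in any surjective solution some variable must take label $1$, and iterating over the choice of this variable (as in the proof of Lemma~\ref{lmReductionSurjectiveToVCSP}) effectively simulates $\rho_1$ at an unknown position. Starting from a finite $\Gamma' \subseteq \Gamma$ containing relations that witness the failure of each of $\min$, $\max$, $\mnrt$, $\mjrt$, I would combine these witnesses using the operations permitted by Lemma~\ref{lmSurjectiveClosureReduction} --- addition, coordinate mapping and $\Opt$ --- to build gadgets whose surjective solutions encode instances of an NP-hard problem such as Positive NAE-SAT or Positive 1-in-3-SAT. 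The principal technical subtlety is verifying that these gadgets are expressive enough to encode the hard problem, while not inadvertently re-acquiring any of the four avoided polymorphisms in the composed construction.
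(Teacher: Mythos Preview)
The paper does not prove this theorem; it is quoted from \cite{Creignou97} and used as a black box in Section~\ref{secSurjectiveHardness}. So there is no proof in the paper to compare your proposal against.

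Your tractability direction is fine. One omission: you should also note that for a crisp language the multimorphisms $\mmorp{\min,\max}$ and $\mmorp{\mjrt,\mjrt,\mnrt}$ from Theorem~\ref{thmClassificationVCSP} amount to invariance under both components, hence in particular under $\min$ and under $\mjrt$ respectively; so they are already covered by your hypothesis and the case analysis really does reduce to $\mmorp{c_0}$ or $\mmorp{c_1}$.

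The $0$-valid intractability case, however, contains a directional confusion. The iteration trick you invoke from Lemma~\ref{lmReductionSurjectiveToVCSP} gives a reduction \emph{from} $\VCSPs(\Gamma)$ \emph{to} $\VCSP(\Gamma\cup\mathcal{C}_D)$; that is the wrong direction for establishing hardness of $\VCSPs(\Gamma)$. To prove s-intractability you must reduce a known NP-hard problem \emph{to} $\VCSPs(\Gamma)$, and in that reduction you do not get to iterate over which variable takes label $1$ --- you must build a single surjective instance. The actual content of the Creignou--H\'ebrard result lies precisely in constructing, from the witnesses of failure of $\min,\max,\mnrt,\mjrt$, a gadget over $\Gamma$ (using only addition, coordinate maps, and $\Opt$, i.e.\ no pinning) whose \emph{surjective} solutions encode, say, Positive 1-in-3-SAT. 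Your proposal names this as ``the principal technical subtlety'' but does not supply the construction, and the ``simulate $\rho_1$ by iteration'' idea does not help here. Without that gadget, the $0$-valid case is not proved.
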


\section{Results}
\label{secSurjectiveResults}

We present our results in three parts: Section~\ref{ssecVCSPsResults} defines the EDS
property and states the main classification theorem,
Section~\ref{ssecFiniteEDSLanguages} focuses on finite EDS languages, and
Section~\ref{secApproximabilityMaxVCSPs} gives a classification in terms of
approximability for the surjective $\MVCSP$.

\subsection{Boolean surjective VCSPs}
\label{ssecVCSPsResults}

We first define the property EDS (which stands for \emph{essentially a downset},
see Definition~\ref{defEssentiallyDownset}) characterising the newly discovered tractable
class of weighted relations.

\begin{definition}
\label{defWRelEDS}

For any $\alpha \geq 1$, an $r$-ary weighted relation $\gamma$ is
\emph{$\alpha$-EDS} if, for every $\tup{x}, \tup{y} \in \Feas(\gamma)$, it holds
$\tup{0}^r \in \Feas(\gamma)$ and
\begin{equation}
\alpha \cdot (\gamma(\tup{x}) + \gamma(\tup{y}) - 2\cdot\gamma(\tup{0}^r)) \geq
\gamma(\sub(\tup{x}, \tup{y})) - \gamma(\tup{0}^r) \,.
\label{eqDefEDSwRel}
\end{equation}
A weighted relation is \emph{EDS} if it is $\alpha$-EDS for some $\alpha \geq
1$. A language is EDS if there exists $\alpha \geq 1$ such that every weighted
relation in the language is $\alpha$-EDS.

\end{definition}

Although this definition does not involve the notion of polymorphisms, it is
stated in a similar vein. Let $h$ be a binary operation defined by $h(x,y) = 0$;
then the requirement ``for every $\tup{x}, \tup{y} \in \Feas(\gamma)$, it holds
$\tup{0}^r \in \Feas(\gamma)$'' translates to ``$\gamma$ is invariant under
$h$'' (or, equivalently, ``$\gamma$ is invariant under $c_0$'').\footnote{In
fact, any EDS weighted relation admits multimorphism $\mmorp{c_0}$ (see
Lemma~\ref{lmEDSAdmitsC0}).}\footnote{Note that the unary empty relation
$\rho_\emptyset$ is vacuously $\alpha$-EDS for all $\alpha \geq 1$, as
$\Feas(\rho_\emptyset) = \emptyset$.} In the case of $\alpha = 1$, inequality
\eqref{eqDefEDSwRel} translates to that of admitting multimorphism
$\mmorp{\sub,h}$. For more intuition behind this notion in the general case, see
the corresponding definition of EDS for set functions (Definition~\ref{defSetFnEDS}) in
Section~\ref{ssecReductionToGMC}. Finite EDS languages admit a simpler equivalent
definition, see Corollary~\ref{corFiniteEDSSub}.

The following classification of $\QInfty$-valued languages is our main result.

\begin{theorem}
\label{thmClassificationSurjectiveVCSP}

Let $\Gamma$ be a Boolean $\QInfty$-valued language. Then $\Gamma$ is globally
s-tractable if it is EDS, or $\neg(\Gamma)$ is EDS, or $\Gamma$ admits any of
the following multimorphisms: $\mmorp{\min, \min}$, $\mmorp{\max, \max}$,
$\mmorp{\min, \max}$, $\mmorp{\mnrt, \mnrt, \mnrt}$, $\mmorp{\mjrt, \mjrt,
\mjrt}$, $\mmorp{\mjrt, \mjrt, \mnrt}$. Otherwise, $\Gamma$ is globally
s-intractable.

\end{theorem}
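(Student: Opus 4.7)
The plan is to split Theorem~\ref{thmClassificationSurjectiveVCSP} into a tractability direction and a hardness direction, using Theorem~\ref{thmClassificationVCSP} as the organising skeleton.

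For tractability there are three sub-cases. The six listed multimorphisms $\mmorp{\min,\min}$, $\mmorp{\max,\max}$, $\mmorp{\min,\max}$, $\mmorp{\mnrt,\mnrt,\mnrt}$, $\mmorp{\mjrt,\mjrt,\mjrt}$, $\mmorp{\mjrt,\mjrt,\mnrt}$ are each admitted by both constants in $\mathcal{C}_D$, so for any $\Gamma$ admitting one of them the enlarged language $\Gamma \cup \mathcal{C}_D$ still admits it and is globally tractable by Theorem~\ref{thmClassificationVCSP} together with Remark~\ref{remGlobalClassificationVCSP}; Lemma~\ref{lmReductionSurjectiveToVCSP} then converts this into global s-tractability of $\Gamma$. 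The EDS case is the genuinely new algorithmic contribution: as flagged in Section~\ref{secSurjectiveTractability}, I would reduce a $\VCSPs(\Gamma)$ instance with $\Gamma$ being $\alpha$-EDS to the generalised Min-Cut problem and invoke a polynomial-time procedure that enumerates all $\alpha$-optimal solutions. The $\neg(\Gamma)$-EDS case reduces immediately to the EDS case via the label swap $0 \leftrightarrow 1$, which is a complexity-preserving bijection on (surjective) VCSP instances.

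For hardness, suppose $\Gamma$ admits none of the six listed multimorphisms and that neither $\Gamma$ nor $\neg(\Gamma)$ is EDS; I must deduce global s-intractability. The plan is to case on whether $\Gamma$ admits $\mmorp{c_0}$ or $\mmorp{c_1}$. If it admits neither, then by Theorem~\ref{thmClassificationVCSP} $\Gamma$ is already intractable as an unrestricted VCSP, and Lemma~\ref{lmReductionVCSPtoSurjective} lifts the hardness to $\VCSPs$. The interesting subcase is that $\Gamma$ admits $\mmorp{c_0}$ (so every $\gamma \in \Gamma$ is $0$-optimal, with $\tup{0}^r \in \Opt(\gamma)$) yet $\Gamma$ is not EDS; the symmetric subcase with $\mmorp{c_1}$ and $\neg(\Gamma)$ not EDS is handled identically via the negation relabeling, and the case where both $\mmorp{c_0}$ and $\mmorp{c_1}$ are admitted falls under either.

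The heart of the argument is a production step: starting from a $0$-optimal, non-EDS $\gamma \in \Gamma$, extract — through the operations allowed in Definition~\ref{defSurjectiveClosure}, i.e.\ addition, coordinate mapping, non-negative scaling, constant shift and $\Opt$ — a crisp relation $\rho \in \clGamma$ that is $0$-valid but invariant under none of $\min$, $\max$, $\mnrt$, $\mjrt$. Theorem~\ref{thmClassificationSurjectiveCSP} of Creignou and H\'ebrard then declares $\{\rho\}$ globally s-intractable, and Lemma~\ref{lmSurjectiveClosureReduction} pushes this hardness back to $\Gamma$. The main obstacle is precisely this production step: because $d$-pinning in Definition~\ref{defSurjectiveClosure} is permitted only when $\rho_d \in \Gamma$ and minimisation is forbidden outright (it can destroy the surjectivity of an optimal assignment), the standard algebraic toolkit of~\cite{cccjz13:sicomp} is unavailable. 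Instead one must exploit the quantitative failure of inequality~\eqref{eqDefEDSwRel} directly: given witnesses $\tup{x}, \tup{y} \in \Feas(\gamma)$ for which $\gamma(\sub(\tup{x}, \tup{y})) - \gamma(\tup{0}^r)$ is arbitrarily larger than $\gamma(\tup{x}) + \gamma(\tup{y}) - 2\gamma(\tup{0}^r)$, amplify by non-negative scaling and then separate the optimum from the feasibility level via $\Opt$ to manufacture the required crisp $\rho$; alternatively, as foreshadowed in the introduction, reduce directly from the NP-hard Minimum Distance problem~\cite{Vardy1997} in this regime. Combining both directions yields the claimed dichotomy.
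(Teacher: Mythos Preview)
Your tractability direction matches the paper exactly. The hardness direction starts correctly (the case with neither $\mmorp{c_0}$ nor $\mmorp{c_1}$ via Lemma~\ref{lmReductionVCSPtoSurjective}), but the core ``production step'' has a real gap. You claim that from a $0$-optimal non-EDS language one can extract, using only the operations of Definition~\ref{defSurjectiveClosure}, a crisp $\rho \in \clGamma$ that is $0$-valid and invariant under \emph{none} of $\min,\max,\mnrt,\mjrt$, so that Theorem~\ref{thmClassificationSurjectiveCSP} applies directly. This is not true in general. What the failure of EDS actually buys you (via amplification and $\Opt$, which is essentially Lemma~\ref{lmNotEDSLanguageToRelation} and Corollary~\ref{corNotEDSLanguageReduction}) is a crisp $\rho$ invariant under $c_0$ but not under $\sub$; such a $\rho$ may perfectly well be invariant under one of the four operations. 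Concretely, $A_3 = \{(x_1,x_2,x_3) : x_1 \oplus x_2 \oplus x_3 = 0\}$ is invariant under $c_0$ and $\mnrt$ but not under $\sub$, so if the construction hands you $A_3$, Creignou--H\'ebrard is silent. The paper confronts precisely this obstruction with a case analysis (Lemma~\ref{lmNotEDSsIntractable}): when $\Feas(\Gamma) \cup \Opt(\Gamma)$ \emph{does} admit one of the four polymorphisms, one must instead manufacture specific gadgets ($\{A_3,\gamma_0\}$, or $\{A_4,\gamma_=\}$, or $\rho_\leq$) inside $\clGamma$ via Lemmas~\ref{lmNotNeg}--\ref{lmNotNegEDS}, and then invoke either the Minimum Distance reduction (Lemma~\ref{lmA34sIntractable}) or the constants-simulation Lemma~\ref{lmReductionLeq}. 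Your passing mention of Minimum Distance as an ``alternative'' obscures that this case split is forced, not optional, and that each branch needs its own gadget construction.

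A secondary issue: for infinite $\Gamma$, there need not exist any single $\gamma \in \Gamma$ that is itself non-EDS (Example~\ref{exGloballySIntractable}), so ``starting from a $0$-optimal, non-EDS $\gamma \in \Gamma$'' is not a valid opening move. The paper's Lemma~\ref{lmNotEDSLanguageToRelation} handles this by producing, for every $\alpha$, an $\alpha$-crisp weighted relation in $\clGamma$ whose rounding is a fixed small-arity relation not invariant under $\sub$, and then applying Lemma~\ref{lmAlphaCrispReduction}.
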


\begin{proof}

The global s-tractability of languages admitting any of the six multimorphisms
in the statement of the theorem follows from Theorem~\ref{thmClassificationVCSP} (see
Remark~\ref{remGlobalClassificationVCSP}) by Lemma~\ref{lmReductionSurjectiveToVCSP}. The
global s-tractability of EDS languages (whether $\Gamma$ or $\neg(\Gamma)$,
which is symmetric) follows from Theorem~\ref{thmTractabilityEDS}, proved in
Section~\ref{secSurjectiveTractability}. Finally, the global s-intractability of the
remaining languages follows from Theorem~\ref{thmSurjectiveHardness}, proved in
Section~\ref{secSurjectiveHardness}. \qedhere

\end{proof}

\begin{remark}
\label{remLocalClsSrjectiveVCSP}

Theorem~\ref{thmClassificationSurjectiveVCSP} gives us also a classification in terms
of s-tractability. As noted in Section~\ref{ssecWeightedRelations}, any globally
s-tractable language is s-tractable. Consider now a globally s-intractable
language $\Gamma$. It does not admit any of the six multimorphisms, and hence
there exists a finite subset of $\Gamma$ that does not admit them either (see
Remark~\ref{remGlobalClassificationVCSP}). If there exists a finite subset $\Gamma'
\subseteq \Gamma$ such that neither $\Gamma'$ nor $\neg(\Gamma')$ is EDS, then
$\Gamma$ is s-intractable; otherwise $\Gamma$ is s-tractable. Equivalently (by
Corollary~\ref{corFiniteEDSSub}), $\Gamma$ is s-intractable if neither $\Feas(\Gamma)
\cup \Opt(\Gamma)$ nor $\Feas(\neg(\Gamma)) \cup \Opt(\neg(\Gamma))$ admit
polymorphism $\sub$, and it is s-tractable otherwise.

\end{remark}

To see how EDS languages fit into the classification of $\{0, \infty\}$-valued
languages established in Theorem~\ref{thmClassificationSurjectiveCSP}, note the
following. Any $\{0, \infty\}$-valued language of bounded arity is finite. By
Corollary~\ref{corFiniteEDSSub}, any EDS $\{0, \infty\}$-valued language admits
polymorphism $\sub$, and hence also polymorphism $\min$ (by
Lemma~\ref{lmSubImpliesC0Min}).

For $\Q$-valued languages, Theorem~\ref{thmClassificationSurjectiveVCSP} gives a
tighter classification: the only reasons for global s-tractability are EDS and
submodularity.

\begin{theorem}
\label{thmClassificationSurjectiveFiniteValuedVCSP}

Let $\Gamma$ be a Boolean $\Q$-valued language. Then $\Gamma$ is globally
s-tractable if it is EDS, or $\neg(\Gamma)$ is EDS, or $\Gamma$ admits the
$\mmorp{\min,\max}$ multimorphism. Otherwise, $\Gamma$ is globally
s-intractable.

\end{theorem}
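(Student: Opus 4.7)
The plan is to derive the classification directly from the main theorem (Theorem~\ref{thmClassificationSurjectiveVCSP}) by showing that, for $\Q$-valued languages, each of the five multimorphisms in its statement besides $\mmorp{\min,\max}$—namely $\mmorp{\min,\min}$, $\mmorp{\max,\max}$, $\mmorp{\mnrt,\mnrt,\mnrt}$, $\mmorp{\mjrt,\mjrt,\mjrt}$, and $\mmorp{\mjrt,\mjrt,\mnrt}$—implies at least one of the three conditions listed here (EDS, $\neg$-EDS, or $\mmorp{\min,\max}$). The converse direction is immediate, since each of the three listed conditions is already one of the sufficient conditions for global s-tractability in Theorem~\ref{thmClassificationSurjectiveVCSP}. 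The enabling observation is that in the $\Q$-valued setting $\Feas(\gamma)=D^r$ for every $\gamma$, so every substitution into a multimorphism inequality is meaningful.

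I would dispose of the three ternary multimorphisms by collapsing their third argument. For $\mmorp{\mjrt,\mjrt,\mnrt}$, substituting $\tup{z}=\min(\tup{x},\tup{y})$ into the defining inequality and checking componentwise that $\mjrt(\tup{x},\tup{y},\min(\tup{x},\tup{y}))=\min(\tup{x},\tup{y})$ and $\mnrt(\tup{x},\tup{y},\min(\tup{x},\tup{y}))=\max(\tup{x},\tup{y})$ makes it collapse to $\mmorp{\min,\max}$. For $\mmorp{\mjrt,\mjrt,\mjrt}$, the substitutions $\tup{z}=\min(\tup{x},\tup{y})$ and $\tup{z}=\max(\tup{x},\tup{y})$ produce, respectively, $\mmorp{\min,\min}$ and $\mmorp{\max,\max}$, and adding the two resulting inequalities yields $\mmorp{\min,\max}$. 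For $\mmorp{\mnrt,\mnrt,\mnrt}$, substituting $\tup{z}=\tup{y}$ and using $\mnrt(\tup{x},\tup{y},\tup{y})=\tup{x}$ gives $\gamma(\tup{x})\leq\gamma(\tup{y})$ for all $\tup{x},\tup{y}$, so $\gamma$ is constant and hence trivially $1$-EDS.

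The main case is $\mmorp{\min,\min}$ (with $\mmorp{\max,\max}$ following by exchanging the roles of $0$ and $1$, i.e., by passing to $\neg(\Gamma)$). I claim that every $\Q$-valued $\gamma$ admitting $\mmorp{\min,\min}$ is $1$-EDS. Setting $\tup{y}=\tup{0}^r$ in $2\gamma(\min(\tup{x},\tup{y}))\leq\gamma(\tup{x})+\gamma(\tup{y})$ shows that $\tup{0}^r$ minimises $\gamma$; and since $\sub(\tup{x},\tup{y})\leq\tup{x}$ componentwise, applying the multimorphism to the pair $\tup{x},\sub(\tup{x},\tup{y})$ gives $\gamma(\sub(\tup{x},\tup{y}))\leq\gamma(\tup{x})$. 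Inequality \eqref{eqDefEDSwRel} at $\alpha=1$ then rearranges to
\begin{equation*}
\bigl(\gamma(\tup{x})-\gamma(\sub(\tup{x},\tup{y}))\bigr)+\bigl(\gamma(\tup{y})-\gamma(\tup{0}^r)\bigr)\geq 0,
\end{equation*}
which holds because each parenthesised term is non-negative. The one conceptual point to watch is that EDS requires a \emph{uniform} $\alpha$ across the whole language; but since $\alpha=1$ works for every weighted relation in each of the cases above, this uniformity is automatic, completing the reduction to Theorem~\ref{thmClassificationSurjectiveVCSP}.
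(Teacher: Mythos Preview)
Your proof is correct and follows the same overall strategy as the paper: derive the result from Theorem~\ref{thmClassificationSurjectiveVCSP} by showing that, in the $\Q$-valued case, each of the five remaining multimorphisms implies one of the three listed conditions. Your treatment of $\mmorp{\min,\min}$ (and dually $\mmorp{\max,\max}$) as $1$-EDS is essentially identical to the paper's.

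Where you differ is in the handling of the three ternary multimorphisms. The paper dispatches these by citing results from~\cite{cohen06:complexitysoft}: weighted relations admitting $\mmorp{\mnrt,\mnrt,\mnrt}$ or $\mmorp{\mjrt,\mjrt,\mjrt}$ are crisp (hence constant in the $\Q$-valued case), and those admitting $\mmorp{\mjrt,\mjrt,\mnrt}$ are modular (hence submodular). You instead give direct, self-contained arguments via well-chosen substitutions into the multimorphism inequality---e.g., plugging $\tup{z}=\min(\tup{x},\tup{y})$ into the $\mmorp{\mjrt,\mjrt,\mnrt}$ inequality to recover submodularity immediately. Your route is more elementary and avoids external dependencies; the paper's route is terser but relies on known structural characterisations. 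Both land in the same place.
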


\begin{proof}

We need to show that in the case of $\Q$-valued languages, the remaining
globally s-tractable classes from Theorem~\ref{thmClassificationSurjectiveVCSP} (which
are characterised by polymorphisms $\mmorp{\min, \min}$, $\mmorp{\max, \max}$,
$\mmorp{\mnrt, \mnrt, \mnrt}$, $\mmorp{\mjrt, \mjrt, \mjrt}$, and $\mmorp{\mjrt,
\mjrt, \mnrt}$) collapse.

If a $\Q$-valued $r$-ary weighted relation $\gamma$ admits the $\mmorp{\min,
\min}$ multimorphism, then it holds $\gamma(\tup{x}) \geq \gamma(\tup{y})$ for
all $\tup{x} \geq \tup{y}$. This implies that, for all $\tup{x}, \tup{y} \in
\Feas(\gamma)$, it holds $\gamma(\tup{x}) \geq \gamma(\sub(\tup{x}, \tup{y}))$
and $\gamma(\tup{y}) \geq \gamma(\tup{0}^r)$. Hence, $\gamma$ is $1$-EDS. If
$\gamma$ admits the $\mmorp{\max, \max}$ multimorphism, then $\neg(\gamma)$
admits the $\mmorp{\min, \min}$ multimorphism. Therefore, if a $\Q$-valued
language $\Gamma$ admits $\mmorp{\min, \min}$ or $\mmorp{\max, \max}$ as a
multimorphism, then $\Gamma$ or $\neg(\Gamma)$ is EDS.

Multimorphisms $\mmorp{\mnrt, \mnrt, \mnrt}$, $\mmorp{\mjrt, \mjrt, \mjrt}$, and
$\mmorp{\mjrt, \mjrt, \mnrt}$ are covered by the $\mmorp{\min,\max}$
multimorphism: Weighted relations that admit $\mmorp{\mnrt, \mnrt, \mnrt}$ or
$\mmorp{\mjrt, \mjrt, \mjrt}$ as a multimorphism are crisp~\cite[Propositions
6.20 and 6.22]{cohen06:complexitysoft}, and hence, in the $\Q$-valued case, they
are constant functions. $\Q$-valued weighted relations that admit the
$\mmorp{\mjrt, \mjrt, \mnrt}$ multimorphism are modular~\cite[Corollary
6.26]{cohen06:complexitysoft}, and hence they are submodular. \qedhere

\end{proof}

Enumerating all optimal solutions to an instance with polynomial delay is a
fundamental problem~\cite{johnson1988generating,Vazirani1992} studied in the
context of CSP~\cite{Dechter92:aaai,Bulatov12:jcss-enumerating}. An algorithm
outputting a sequence of solutions works with polynomial delay if the time it
takes to output the first solution as well as the time it takes between every
two consecutive solutions is bounded by a polynomial in the input size.

It is known that, for a tractable constraint language $\Gamma$ that includes
constants $\mathcal{C}_D$, one can enumerate all optimal solutions with
polynomial delay~\cite{Cohen04:search}. Our results imply that the newly
discovered globally s-tractable EDS languages enjoy the same property (despite
\emph{not} including constants).

\begin{theorem}
\label{thmEnumerationSurjectiveVCSP}

Let $\Gamma$ be a Boolean $\QInfty$-valued language. If $\Gamma$ is globally
s-tractable then there is a polynomial-delay algorithm that enumerates all
optimal solutions to any instance of $\VCSPs(\Gamma)$.

\end{theorem}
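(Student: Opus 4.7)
The plan is to split the proof according to which global s-tractability condition of Theorem~\ref{thmClassificationSurjectiveVCSP} the language $\Gamma$ satisfies.

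First, suppose $\Gamma$ admits one of the six multimorphisms $\mmorp{\min,\min}$, $\mmorp{\max,\max}$, $\mmorp{\min,\max}$, $\mmorp{\mnrt,\mnrt,\mnrt}$, $\mmorp{\mjrt,\mjrt,\mjrt}$, $\mmorp{\mjrt,\mjrt,\mnrt}$. Each of these multimorphisms is also admitted by both constants in $\mathcal{C}_D$, so $\Gamma \cup \mathcal{C}_D$ satisfies the same multimorphism and is globally tractable by Theorem~\ref{thmClassificationVCSP} (cf.\ Remark~\ref{remGlobalClassificationVCSP}). Given an instance $I$ of $\VCSPs(\Gamma)$, I would first compute its optimal surjective value $M$ via Lemma~\ref{lmReductionSurjectiveToVCSP}, and then run the standard self-reduction of~\cite{Cohen04:search}: maintain a partial assignment $\sigma$ and branch on the value of the next variable whenever both branches extend to an optimal surjective solution. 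The extendability test for $\sigma$ reduces to a polynomial number of tractable $\VCSP(\Gamma \cup \mathcal{C}_D)$ instances: if $\sigma$ already uses both labels, add pinning constraints encoding $\sigma$ and solve; otherwise, for each unassigned variable $x$, additionally pin $x$ to the missing label, solve the resulting instance, and take the minimum. The leaves of the resulting binary search tree are in one-to-one correspondence with optimal surjective solutions, and each transition to the next leaf takes polynomial time, giving polynomial delay.

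Now suppose $\Gamma$ (or, by symmetry, $\neg(\Gamma)$) is EDS. Here the naive self-reduction fails, because $\clGamma$ need not contain $\rho_0$ or $\rho_1$, so variable pinning is not a legitimate closure operation in the surjective setting. Instead I would piggyback on the algorithm of Theorem~\ref{thmTractabilityEDS}, which establishes tractability by reducing $\VCSPs(\Gamma)$ to the generalised Min-Cut problem and enumerating all $\alpha$-optimal solutions of the latter for a constant $\alpha$ depending on $\Gamma$. After computing $M$ once, I would stream through this near-optimal Min-Cut enumeration, translate each candidate back to a VCSP assignment via the reduction, and emit those that are both surjective and of value $M$. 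Completeness follows because every optimal surjective VCSP solution appears, under the reduction, among the $\alpha$-optimal Min-Cut solutions.

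The delicate part of the EDS case is ensuring \emph{polynomial delay} rather than merely polynomial total time: the near-optimal Min-Cut enumeration may interleave many candidates that are non-surjective or non-optimal for $I$ between two solutions of interest. Addressing this requires inspecting the enumeration algorithm of Section~\ref{secSurjectiveTractability} and either showing that consecutive optimal surjective VCSP solutions are separated by only polynomially many Min-Cut outputs, or, as an alternative, adopting a Lawler-style recursive partitioning of the optimal-surjective solution set and invoking the tractability algorithm on each restricted subproblem. The main obstacle is precisely this interaction between surjectivity and the Min-Cut reduction, which has no analogue in the standard-multimorphism case and is what distinguishes the novel EDS class from the known tractable classes with constants.
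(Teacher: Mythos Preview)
Your treatment of the six-multimorphism case matches the paper's: since $\mathcal{C}_D$ is admitted, standard self-reduction via~\cite{Cohen04:search} works.

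For the EDS case, however, you correctly flag the obstacle but do not resolve it, and the paper's resolution is precisely what you are missing. The key is a case split on the optimal GMC value $\lambda$ (of the instance $\gmcinst$ constructed in Theorem~\ref{thmTractabilityEDS}). When $0 < \lambda < \infty$, Theorem~\ref{thmTractabilityGMC} bounds the \emph{total} number of $\alpha$-optimal GMC solutions by $n^{20\alpha-15}$, so the filtering you describe runs in polynomial total time and polynomial delay is automatic; your worry about ``many candidates between two solutions of interest'' cannot arise here. The case $\lambda = \infty$ is trivial.

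The real difficulty is $\lambda = 0$, which your plan does not handle at all: the GMC enumeration of Theorem~\ref{thmTractabilityGMC} does not apply, and indeed there can be exponentially many optimal surjective solutions, so no filtering argument can work. The paper's observation is that when $\lambda = 0$, every optimal surjective assignment attains the minimum of each constraint, i.e., satisfies $\Opt(\gamma_i)$ for all $i$. Since $\Gamma$ is EDS, each $\Opt(\gamma_i)$ admits polymorphism $\sub$ and hence $\min$ (Corollary~\ref{corFiniteEDSSub} and Lemma~\ref{lmSubImpliesC0Min}), so the set of optimal assignments is exactly the solution set of a $\min$-closed Boolean CSP, for which polynomial-delay enumeration of surjective solutions is provided by Creignou and H\'ebrard~\cite{Creignou97}. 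Neither your Lawler-style partitioning nor an inspection of the GMC recursion would recover this, because the GMC machinery simply does not enumerate in the $\lambda=0$ regime.
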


The theorem is proved in Section~\ref{ssecReductionToGMC}.

\subsection{Finite EDS languages}
\label{ssecFiniteEDSLanguages}

The EDS property can be described in a simpler way for languages of finite
size; see the following observation and Corollary~\ref{corFiniteEDSSub}.

\begin{observation}

A language of finite size is EDS if and only if it consists of EDS weighted
relations.

\end{observation}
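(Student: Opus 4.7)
The plan is straightforward in both directions. The forward implication is immediate: if $\Gamma$ is EDS as a language, then by definition there is a single $\alpha \geq 1$ such that every $\gamma \in \Gamma$ is $\alpha$-EDS, so each such $\gamma$ is in particular EDS as a weighted relation.

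For the converse, let $\Gamma = \{\gamma_1, \ldots, \gamma_k\}$ be finite with each $\gamma_i$ being $\alpha_i$-EDS, and set $\alpha := \max_i \alpha_i$ (this is where finiteness of $\Gamma$ is used; for an infinite $\Gamma$ the $\alpha_i$ could be unbounded). I would then argue that this $\alpha$ is a uniform witness, i.e.\ that each $\gamma_i$ is in fact $\alpha$-EDS and not merely $\alpha_i$-EDS. This reduces everything to a single auxiliary claim: \emph{being $\beta$-EDS is monotone in $\beta$}, in the sense that if $\gamma$ is $\beta$-EDS and $\beta' \geq \beta$, then $\gamma$ is also $\beta'$-EDS.

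To prove the monotonicity claim, note that it suffices to show the left-hand side of \eqref{eqDefEDSwRel}, namely $\gamma(\tup{x}) + \gamma(\tup{y}) - 2\gamma(\tup{0}^r)$, is non-negative for all $\tup{x}, \tup{y} \in \Feas(\gamma)$; once this holds, multiplying the LHS by a larger factor $\beta' \geq \beta$ only strengthens the inequality. For the non-negativity, I would specialise the $\beta$-EDS inequality to $\tup{y} := \tup{x}$: since $\sub(x,x) = \min(x,\neg x) = 0$ componentwise, $\sub(\tup{x},\tup{x}) = \tup{0}^r$, so the right-hand side of \eqref{eqDefEDSwRel} vanishes and we obtain $2\beta(\gamma(\tup{x}) - \gamma(\tup{0}^r)) \geq 0$, i.e.\ $\gamma(\tup{x}) \geq \gamma(\tup{0}^r)$ for every $\tup{x} \in \Feas(\gamma)$. (The case $\Feas(\gamma) = \emptyset$ is vacuous, and the definition of EDS forces $\tup{0}^r \in \Feas(\gamma)$ whenever $\Feas(\gamma)$ is non-empty, so all quantities in sight are rational.) Summing for $\tup{x}$ and $\tup{y}$ gives the desired non-negativity.

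There is no real obstacle beyond recognising this monotonicity; once it is in hand, the proof is a one-line application of $\max$ over the finite index set. The role of finiteness is isolated to ensuring that this maximum exists as a finite rational, which is exactly the point the observation is meant to highlight.
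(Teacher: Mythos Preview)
Your proof is correct. The paper states this as an observation without proof, treating it as immediate; your argument fills in the details correctly, and the key step---showing $\gamma(\tup{x}) \geq \gamma(\tup{0}^r)$ by specialising \eqref{eqDefEDSwRel} to $\tup{y}=\tup{x}$---is exactly what the paper does in the proof of Lemma~\ref{lmEDSAdmitsC0} immediately afterward.
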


In the following we prove several useful properties EDS weighted relations.

\begin{lemma}
\label{lmEDSAdmitsC0}

Any EDS weighted relation admits multimorphism $\mmorp{c_0}$.

\end{lemma}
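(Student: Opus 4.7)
The plan is to directly exploit the EDS inequality by specialising it to the diagonal case $\tup{x} = \tup{y}$. Recall that $\mmorp{c_0}$ being a multimorphism of $\gamma$ requires two things: (i) $c_0$ is a polymorphism of $\gamma$, i.e.\ $c_0(\tup{x}) = \tup{0}^r \in \Feas(\gamma)$ whenever $\tup{x} \in \Feas(\gamma)$; and (ii) the inequality $\gamma(\tup{0}^r) \leq \gamma(\tup{x})$ holds for every $\tup{x} \in \Feas(\gamma)$.

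First I would dispose of the degenerate case $\Feas(\gamma) = \emptyset$, for which both conditions are vacuous. Otherwise, fix some $\tup{x} \in \Feas(\gamma)$; the EDS definition (applied to $\tup{x},\tup{x}$) already delivers $\tup{0}^r \in \Feas(\gamma)$, which establishes (i). For (ii), I would observe the key computation that $\sub(x,x) = \min(x, \neg x) = 0$ for both $x \in \{0,1\}$, so componentwise $\sub(\tup{x},\tup{x}) = \tup{0}^r$. Substituting $\tup{y} = \tup{x}$ into the EDS inequality \eqref{eqDefEDSwRel} gives
\begin{equation*}
\alpha \cdot \bigl(2\gamma(\tup{x}) - 2\gamma(\tup{0}^r)\bigr) \;\geq\; \gamma(\tup{0}^r) - \gamma(\tup{0}^r) \;=\; 0,
\end{equation*}
and since $\alpha \geq 1 > 0$, dividing yields $\gamma(\tup{x}) \geq \gamma(\tup{0}^r)$, which is exactly (ii).

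There is no serious obstacle here: the whole argument rests on the single observation that $\sub$ annihilates its diagonal, making $\tup{0}^r$ appear on the right-hand side of the EDS inequality. The remaining work is just bookkeeping around the empty-feasibility edge case.
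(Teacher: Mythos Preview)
Your proof is correct and follows essentially the same approach as the paper: both specialise the EDS inequality to $\tup{y} = \tup{x}$, use $\sub(\tup{x},\tup{x}) = \tup{0}^r$ to make the right-hand side vanish, and conclude $\gamma(\tup{x}) \geq \gamma(\tup{0}^r)$. The paper is slightly terser (it does not separate out the empty-feasibility case explicitly), but the argument is the same.
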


\begin{proof}

Let $\gamma$ be an $r$-ary $\alpha$-EDS weighted relation. For any $\tup{x} \in
\Feas(\gamma)$, it holds $\tup{0}^r \in \Feas(\gamma)$ and
\begin{equation}
\alpha \cdot (2\cdot\gamma(\tup{x}) - 2\cdot\gamma(\tup{0}^r)) \geq
\gamma(\sub(\tup{x}, \tup{x})) - \gamma(\tup{0}^r) = 0
\end{equation}
as $\sub(\tup{x}, \tup{x}) = \tup{0}^r$, and therefore $\gamma(\tup{x}) \geq
\gamma(\tup{0}^r)$. \qedhere

\end{proof}

\begin{lemma}
\label{obsCrispEDS}

A crisp weighted relation is EDS if and only if it admits polymorphism $\sub$.

\end{lemma}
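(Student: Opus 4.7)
The plan is a direct verification from the definitions, exploiting the fact that a crisp weighted relation $\gamma$ takes only two values: some rational constant $c$ on $\Feas(\gamma)$ and $\infty$ elsewhere. If $\Feas(\gamma) = \emptyset$, then $\gamma$ is vacuously EDS (as noted in the second footnote after Definition~\ref{defWRelEDS}) and vacuously admits every polymorphism, so both directions are immediate. Henceforth assume $\Feas(\gamma) \neq \emptyset$.

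For the forward direction, suppose $\gamma$ is $\alpha$-EDS for some $\alpha \geq 1$. Choose any $\tup{z} \in \Feas(\gamma)$; applying the EDS condition to $\tup{x} = \tup{y} = \tup{z}$ yields $\tup{0}^r \in \Feas(\gamma)$, so $\gamma(\tup{0}^r) = c$. Now let $\tup{x}, \tup{y} \in \Feas(\gamma)$ be arbitrary. The left-hand side of inequality \eqref{eqDefEDSwRel} equals $\alpha(c + c - 2c) = 0$, so the inequality becomes $\gamma(\sub(\tup{x}, \tup{y})) \leq c$. Since $\gamma$ takes no values strictly less than $c$ other than none at all, this forces $\gamma(\sub(\tup{x}, \tup{y})) = c$, i.e., $\sub(\tup{x}, \tup{y}) \in \Feas(\gamma)$. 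Hence $\sub$ is a polymorphism of $\gamma$.

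For the backward direction, suppose $\sub$ is a polymorphism of $\gamma$. By Lemma~\ref{lmSubImpliesC0Min}, $c_0$ is also a polymorphism of $\gamma$; applied to any $\tup{x} \in \Feas(\gamma)$, this gives $\tup{0}^r \in \Feas(\gamma)$. For any $\tup{x}, \tup{y} \in \Feas(\gamma)$, the polymorphism property ensures $\sub(\tup{x}, \tup{y}) \in \Feas(\gamma)$, so $\gamma(\sub(\tup{x}, \tup{y})) = \gamma(\tup{x}) = \gamma(\tup{y}) = \gamma(\tup{0}^r) = c$. Both sides of \eqref{eqDefEDSwRel} therefore equal $0$, so the inequality holds (with $\alpha = 1$, in fact). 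Thus $\gamma$ is $1$-EDS, completing the proof.

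There is no substantive obstacle; the only minor care needed is the edge case of empty $\Feas(\gamma)$ and the observation that the crispness collapses the EDS inequality to a statement purely about feasibility, which is exactly what being a $\sub$-polymorphism records.
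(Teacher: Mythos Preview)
Your proof is correct and follows essentially the same approach as the paper's: the forward direction observes that the EDS inequality forces $\sub(\tup{x},\tup{y})\in\Feas(\gamma)$, and the backward direction invokes Lemma~\ref{lmSubImpliesC0Min} to obtain $c_0$ and then notes that both sides of \eqref{eqDefEDSwRel} vanish. The paper's version is terser (stating the forward direction for arbitrary EDS weighted relations rather than just crisp ones), but your more explicit unpacking of the crisp case is equally valid.
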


\begin{proof}

Any EDS weighted relation admits polymorphism $\sub$. For the converse
implication, note that any crisp weighted relation that admits polymorphism
$\sub$ (and thus, by Lemma~\ref{lmSubImpliesC0Min}, also polymorphism $c_0$)
satisfies \eqref{eqDefEDSwRel} for any $\alpha \geq 1$. \qedhere

\end{proof}

\begin{lemma}
\label{lmEDSwRelFeasOpt}

A weighted relation $\gamma$ is EDS if and only if both $\Feas(\gamma)$ and
$\Opt(\gamma)$ are EDS.

\end{lemma}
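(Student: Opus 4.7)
The plan is to prove both directions separately, using Lemma~\ref{obsCrispEDS} (crisp EDS = admits $\sub$) and Lemma~\ref{lmEDSAdmitsC0} ($\tup{0}^r$ is optimal for any EDS $\gamma$) as the main tools.

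For the forward direction, assume $\gamma$ is $\alpha$-EDS. By Lemma~\ref{obsCrispEDS}, it suffices to show that the crisp relations $\Feas(\gamma)$ and $\Opt(\gamma)$ admit polymorphism $\sub$. Take arbitrary $\tup{x}, \tup{y} \in \Feas(\gamma)$. By Lemma~\ref{lmEDSAdmitsC0} the value $\gamma(\tup{0}^r)$ is finite, so the right-hand side of \eqref{eqDefEDSwRel} is finite, forcing $\sub(\tup{x}, \tup{y}) \in \Feas(\gamma)$. If in addition $\tup{x}, \tup{y} \in \Opt(\gamma)$, then (again using Lemma~\ref{lmEDSAdmitsC0}) $\gamma(\tup{x}) = \gamma(\tup{y}) = \gamma(\tup{0}^r)$; the left-hand side of \eqref{eqDefEDSwRel} is zero, which pins $\gamma(\sub(\tup{x},\tup{y})) \leq \gamma(\tup{0}^r)$, i.e.\ $\sub(\tup{x},\tup{y}) \in \Opt(\gamma)$.

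For the converse, assume both $\Feas(\gamma)$ and $\Opt(\gamma)$ are EDS. If $\Feas(\gamma) = \emptyset$ then $\gamma$ is vacuously EDS, so assume it is nonempty. By Lemma~\ref{obsCrispEDS} both admit $\sub$; applying $\sub$ to any $\tup{x}$ with itself gives $\tup{0}^r \in \Feas(\gamma) \cap \Opt(\gamma)$, so $\tup{0}^r$ minimises $\gamma$. Set $m = \gamma(\tup{0}^r)$. For any $\tup{x}, \tup{y} \in \Feas(\gamma)$, the closure of $\Feas(\gamma)$ under $\sub$ ensures $\gamma(\sub(\tup{x},\tup{y}))$ is finite. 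The remaining task is to produce a single $\alpha \geq 1$ that works uniformly.

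The key observation here is that $\gamma$ has finite arity $r$, hence finite domain $D^r$, so the inequality \eqref{eqDefEDSwRel} only needs to be checked on finitely many pairs. Split the pairs $(\tup{x}, \tup{y}) \in \Feas(\gamma)^2$ into two cases. If both tuples lie in $\Opt(\gamma)$, then $\sub(\tup{x},\tup{y}) \in \Opt(\gamma)$ by closure, so both sides of \eqref{eqDefEDSwRel} vanish and any $\alpha$ works. Otherwise at least one of $\gamma(\tup{x}), \gamma(\tup{y})$ strictly exceeds $m$, making $\gamma(\tup{x}) + \gamma(\tup{y}) - 2m > 0$; since $\gamma(\sub(\tup{x},\tup{y})) - m \geq 0$ is finite, the ratio is a well-defined nonnegative rational. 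Take $\alpha$ to be the maximum of $1$ and all such finitely many ratios; then $\gamma$ is $\alpha$-EDS. The main (minor) obstacle is simply handling the degeneracy where $\gamma(\tup{x}) + \gamma(\tup{y}) - 2m = 0$, which is dispatched by the case split on membership in $\Opt(\gamma)$; the rest is a straightforward finiteness argument.
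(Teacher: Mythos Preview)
Your proof is correct and follows essentially the same approach as the paper: in the forward direction you show $\Feas(\gamma)$ and $\Opt(\gamma)$ are closed under $\sub$ by reading off finiteness and equality from inequality~\eqref{eqDefEDSwRel}, and in the converse you observe that $\tup{0}^r \in \Opt(\gamma)$, case-split on whether both tuples lie in $\Opt(\gamma)$, and invoke finiteness of $D^r$ to pick a uniform $\alpha$. The paper's argument is the same, only with slightly less explicit bookkeeping (it does not separately name the empty-feasibility case or the maximum ratio).
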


\begin{proof}
Let $\gamma$ be an $r$-ary $\alpha$-EDS weighted relation. For any $\tup{x},
\tup{y} \in \Feas(\gamma)$, it holds $\tup{0}^r \in \Feas(\gamma)$ and
\begin{equation}
\infty >
\alpha \cdot (\gamma(\tup{x}) + \gamma(\tup{y}) - 2\cdot\gamma(\tup{0}^r)) \geq
\gamma(\sub(\tup{x}, \tup{y})) - \gamma(\tup{0}^r) \,,
\end{equation}
and hence $\sub(\tup{x}, \tup{y}) \in \Feas(\gamma)$. By
Lemma~\ref{obsCrispEDS}, $\Feas(\gamma)$ is EDS. Similarly, for any
$\tup{x}, \tup{y} \in \Opt(\gamma)$, it holds $\tup{0}^r \in \Opt(\gamma)$ (by
Lemma~\ref{lmEDSAdmitsC0}) and
\begin{equation}
0 =
\alpha \cdot (\gamma(\tup{x}) + \gamma(\tup{y}) - 2\cdot\gamma(\tup{0}^r)) \geq
\gamma(\sub(\tup{x}, \tup{y})) - \gamma(\tup{0}^r) \,;
\end{equation}
therefore $\sub(\tup{x}, \tup{y}) \in \Opt(\gamma)$ and $\Opt(\gamma)$ is EDS.

To prove the converse implication, let us assume that $\Feas(\gamma)$,
$\Opt(\gamma)$ are EDS and consider any $\tup{x}, \tup{y} \in \Feas(\gamma)$. As
$\Opt(\gamma)$ admits polymorphism $c_0$, it holds $\tup{0}^r \in \Opt(\gamma)
\subseteq \Feas(\gamma)$. Therefore, the left-hand side of~\eqref{eqDefEDSwRel}
is non-negative. Moreover, if it equals $0$, then $\tup{x}, \tup{y} \in
\Opt(\gamma)$, and hence $\sub(\tup{x}, \tup{y}) \in \Opt(\gamma)$ and the
right-hand side equals $0$ as well. Therefore, \eqref{eqDefEDSwRel} holds for
large enough $\alpha$, as there are only finitely many choices of $\tup{x},
\tup{y} \in \Feas(\gamma)$.
\end{proof}

We show that relations invariant under $\sub$ have a simple structure.

\begin{definition}
\label{defEssentiallyDownset}
An $r$-ary relation $\rho$ is a \emph{downset} if, for any $r$-tuples $\tup{x},
\tup{y}$ such that $\tup{x} \geq \tup{y}$  and $\tup{x} \in \rho$, it holds
$\tup{y} \in \rho$.

An $r$-ary relation $\rho$ is \emph{essentially a downset} if it can be written
as a conjunction of a downset and binary equality relations. Formally, there
exists a downset $\rho'$ with $\ar(\rho') = r' \leq r$, a permutation $\pi$ of
$[r]$, and indices $a_{r'+1}, \dots, a_r \in \{\pi(1), \dots, \pi(r')\}$ such
that
\begin{equation}
\rho(x_1, \dots, x_r) =
\rho'\left(x_{\pi(1)}, \dots, x_{\pi(r')}\right) +
\sum_{i=r'+1}^r \rho_=\left(x_{\pi(i)},x_{a_{i}}\right) \,.
\end{equation}
(Note that addition of crisp weighted relations corresponds to conjunction.) In
other words, removing duplicate coordinates\footnote{A coordinate $i$ is a
duplicate of a coordinate $j$ if, for every $(x_1, \dots, x_r) \in \rho$, it
holds $x_i = x_j$.} of $\rho$ results in a downset.

\end{definition}

\begin{example}

Relation $\rho' = \{(0,0), (0,1), (1,0)\}$ is a downset, while $\rho =
\{(0,0,0), (0,1,1), (1,0,0)\}$ is only essentially a downset (as $\rho(x,y,z) =
\rho'(x,y) + \rho_=(y,z)$).

\end{example}

\begin{lemma}
\label{lmAlgebraicEDS}

A relation is essentially a downset if and only if it admits polymorphism
$\sub$.

\end{lemma}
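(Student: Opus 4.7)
The plan is to establish the two directions separately.

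For the forward direction, assume $\rho$ is essentially a downset, so by Definition~\ref{defEssentiallyDownset} it can be written as $\rho(x_1,\dots,x_r) = \rho'(x_{\pi(1)},\dots,x_{\pi(r')}) + \sum_{i=r'+1}^{r}\rho_=(x_{\pi(i)},x_{a_i})$ for some downset $\rho'$. To verify that $\sub$ is a polymorphism of $\rho$, one checks the two types of conjuncts in turn: equality constraints are preserved because $\sub$ is applied componentwise and hence sends equal entries to equal entries, while the downset $\rho'$ is preserved because $\sub(\tup{x},\tup{y}) \leq \tup{x}$ coordinatewise and downsets are closed under descent.

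For the converse, suppose $\rho$ admits $\sub$. The empty case is vacuous; otherwise $\sub(\tup{x},\tup{x}) = \tup{0}^r$ forces $\tup{0}^r \in \rho$, and by Lemma~\ref{lmSubImpliesC0Min} the relation $\rho$ also admits $c_0$ and $\min$. Define the equivalence $i\sim j$ on $[r]$ by $x_i = x_j$ for every $\tup{x}\in\rho$, pick one representative per class, and let $\rho'$ be the projection of $\rho$ onto these representatives (after a suitable permutation $\pi$). Then $\rho'$ inherits closure under $\sub$ and $\min$, has no duplicate coordinates, and $\rho$ is recovered by conjoining an equality $\rho_=(x_{\pi(i)},x_{a_i})$ for each discarded coordinate with its representative, matching the form required by Definition~\ref{defEssentiallyDownset}. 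Thus it remains to show that $\rho'$ is itself a downset.

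The key reduction is a one-step flip claim: for every $\tup{x}\in\rho'$ and every $i$ with $x_i=1$, the tuple $\tup{x}'$ obtained from $\tup{x}$ by flipping coordinate $i$ to $0$ lies in $\rho'$. Iterating this claim along the $1$'s of $\tup{x}$ that must become $0$ in $\tup{y}$ yields every $\tup{y}\leq\tup{x}$ in $\rho'$, proving the downset property. To perform one flip it suffices to produce $\tup{z}\in\rho'$ with $z_i=1$ and $z_j=0$ for every $j\neq i$ with $x_j=1$, because then a direct evaluation gives $\sub(\tup{x},\tup{z})=\tup{x}'$.

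Constructing such a $\tup{z}$ is the crux of the argument, and the step where the absence of duplicate coordinates in $\rho'$ is essential. For each $j\neq i$ with $x_j=1$, since $i\not\sim j$ there is a witness $\tup{w}^{(j)}\in\rho'$ with $w^{(j)}_i\neq w^{(j)}_j$; depending on which of $w^{(j)}_i,w^{(j)}_j$ equals $1$, I define $\tup{z}^{(j)}$ to be either $\tup{w}^{(j)}$ itself (when $w^{(j)}_i=1$) or $\sub(\tup{x},\tup{w}^{(j)})$ (when $w^{(j)}_j=1$), and a short check gives $z^{(j)}_i=1$ and $z^{(j)}_j=0$ in both cases. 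Taking $\tup{z}:=\min_{j}\tup{z}^{(j)}$, which belongs to $\rho'$ by closure under $\min$, produces a tuple with the required pattern, completing the flip step and hence the proof.
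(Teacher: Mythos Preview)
Your proof is correct and close in spirit to the paper's, with the same forward direction and the same core idea for the converse: use the no-duplicate witnesses to manufacture a ``selector'' tuple that, via $\sub$, flips exactly one chosen coordinate of $\tup{x}$ from $1$ to $0$. The packaging differs. The paper argues by contradiction, taking a smallest-arity counterexample (hence no duplicate coordinates), choosing $\tup{e}\in\rho$ with $e_n=1$ and a \emph{minimal} number of $1$'s, and then showing that the witnesses $\tup{z}^{(i,n)}$ combined with one or two applications of $\sub$ force $\tup{e}$ to be the unit vector at $n$; this yields $\sub(\tup{x},\tup{e})=\tup{y}\in\rho$, a contradiction. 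You instead give a direct construction, explicitly invoking closure under $\min$ (via Lemma~\ref{lmSubImpliesC0Min}) to aggregate the per-coordinate selectors $\tup{z}^{(j)}$ into a single $\tup{z}$. Your route makes the construction of the selector more explicit, while the paper's minimality trick avoids the separate appeal to $\min$. One trivial edge case worth a sentence: when $i$ is the unique $1$-coordinate of $\tup{x}$, your $\min$ is over an empty index set, but then $\tup{x}'=\tup{0}^{r'}$, which you already placed in $\rho'$.
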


\begin{proof}

For any $r$-ary relation $\rho$ that is essentially a downset and $\tup{x},
\tup{y} \in \rho$, we prove that $\tup{z} = \sub(\tup{x}, \tup{y}) \in \rho$.
Let $\tup{x} = (x_1, \dots, x_r)$, $\tup{y} = (y_1, \dots, y_r)$, $\tup{z} =
(z_1, \dots, z_r)$. It holds $\tup{x} \geq \tup{z}$. Moreover, for any
coordinates $i, j$ such that $x_i = x_j$ and $y_i = y_j$, it holds $z_i = z_j$.
Since $\rho$ can be written as a sum of a downset and equality relations, we
have $\tup{z} \in \rho$.

We prove the converse implication by contradiction. Suppose that $\rho$ is a
smallest-arity relation that admits polymorphism $\sub$ but is not essentially a
downset; let us denote its arity by $r$. If there are distinct coordinates $i,
j$ such that $z_i = z_j$ for all $\tup{z} = (z_1, \dots, z_r) \in \rho$,
identifying these coordinates yields an $(r-1)$-ary relation $\rho'$ such that
$\rho$ can be written as the sum of $\rho'$ and a binary equality relation.
However, $\rho'$ also admits $\sub$, and hence is essentially a downset by the
choice of $\rho$, which implies that $\rho$ is essentially a downset as well.
Therefore, for any distinct coordinates $i, j$, there exists $\tup{z}^{(i,j)}
\in \rho$ with $z^{(i,j)}_i \neq z^{(i,j)}_j$.

As $\rho$ is not a downset, for some $r$-tuples $\tup{x}, \tup{y}$ it holds
$\tup{x} \geq \tup{y}$, $\tup{x} \in \rho$, $\tup{y} \not\in \rho$. We may
assume without loss of generality that, for some $n \in [r]$, the set of
coordinates with label $1$ equals $[n]$ for $\tup{x}$ and $[n-1]$ for $\tup{y}$.
Let $\tup{e} = (e_1, \dots, e_r) \in \rho$ be a tuple with the smallest number
of coordinates labelled $1$ such that $e_n = 1$. We claim that $e_i = 0$ for all
$i \neq n$: Otherwise, either $\sub\left(\tup{e}, \tup{z}^{(i,n)}\right) =
\min\left(\tup{e}, \neg(\tup{z}^{(i,n)})\right)$ or $\sub\left(\tup{e},
\sub\left(\tup{e}, \tup{z}^{(i,n)}\right)\right) = \min\left(\tup{e},
\tup{z}^{(i,n)}\right)$ contradicts the minimality of $\tup{e}$. But then
$\sub(\tup{x}, \tup{e}) = \tup{y} \in \rho$, which is a contradiction. \qedhere

\end{proof}

\begin{corollary}
\label{corFiniteEDSSub}

Let $\Gamma$ be a finite language. The following conditions are equivalent.

\begin{enumerate}

\item Language $\Gamma$ is EDS.

\item For every $\gamma \in \Gamma$, weighted relation $\gamma$ is EDS.

\item For every $\gamma \in \Gamma$, both $\Feas(\gamma)$ and $\Opt(\gamma)$
admit polymorphism $\sub$.

\item For every $\gamma \in \Gamma$, both $\Feas(\gamma)$ and $\Opt(\gamma)$
are essentially downsets.

\end{enumerate}

\end{corollary}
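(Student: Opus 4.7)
The plan is to prove the four conditions equivalent by chaining together the lemmas already established in this subsection, so that each equivalence reduces to a single previously proved statement applied uniformly across the (finitely many) weighted relations in $\Gamma$.

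First, I would establish (1) $\iff$ (2) directly from the Observation preceding Lemma~\ref{lmEDSAdmitsC0}. In one direction, if $\Gamma$ is EDS with common witness $\alpha$, then every $\gamma \in \Gamma$ is $\alpha$-EDS, hence EDS. Conversely, if each $\gamma \in \Gamma$ is $\alpha_\gamma$-EDS, finiteness of $\Gamma$ lets us take $\alpha = \max_{\gamma \in \Gamma} \alpha_\gamma$ as a uniform witness for the whole language.

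Next, (2) $\iff$ (3) follows by combining Lemma~\ref{lmEDSwRelFeasOpt} with Lemma~\ref{obsCrispEDS}. Concretely, for each $\gamma \in \Gamma$, Lemma~\ref{lmEDSwRelFeasOpt} says $\gamma$ is EDS iff both $\Feas(\gamma)$ and $\Opt(\gamma)$ are EDS; since these are crisp weighted relations, Lemma~\ref{obsCrispEDS} rewrites the latter condition as the statement that both admit polymorphism $\sub$.

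Finally, (3) $\iff$ (4) is immediate from Lemma~\ref{lmAlgebraicEDS}, which identifies the class of relations admitting $\sub$ with exactly the class of essentially downset relations. Applying this to each of $\Feas(\gamma)$ and $\Opt(\gamma)$ for every $\gamma \in \Gamma$ completes the equivalence. The only step that requires finiteness of $\Gamma$ is (1) $\iff$ (2), where a uniform $\alpha$ must be extracted; the other equivalences hold weighted-relation-wise and so transfer to arbitrary languages without change. Since each step is a direct citation, there is no real obstacle beyond assembling the chain carefully.
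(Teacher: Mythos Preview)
Your proposal is correct and follows exactly the route the paper intends: the corollary is stated without proof precisely because it is the evident chaining of the Observation (for (1)\,$\iff$\,(2)), Lemma~\ref{lmEDSwRelFeasOpt} together with Lemma~\ref{obsCrispEDS} (for (2)\,$\iff$\,(3)), and Lemma~\ref{lmAlgebraicEDS} (for (3)\,$\iff$\,(4)). Your remark that finiteness is only needed for the step (2)\,$\Rightarrow$\,(1), to extract a uniform $\alpha$, is also accurate.
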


\begin{remark}
\label{remAlmostMinMinEDSEquivalence}

In~\cite{fz17:mfcs}, a weighted relation $\gamma$ is called PDS if both
$\Feas(\gamma)$ and $\Opt(\gamma)$ are essentially downsets. For a
$\{0,1\}$-valued weighted relation, this condition is equivalent to that of
being almost-min-min~\cite{Uppman12:cp}. By Corollary~\ref{corFiniteEDSSub}, PDS and EDS
are equivalent concepts for languages of finite size.

\end{remark}

As we show in the following example, there exists an infinite non-EDS language
$\Gamma$ such that every finite subset $\Gamma' \subseteq \Gamma$ is EDS. Hence,
$\Gamma$ is s-tractable, although it is globally s-intractable ($\VCSPs(\Gamma)$
is NP-hard).

\begin{example}
\label{exGloballySIntractable}

For any $w\in\mathbb{Z}_{\geq 1}$, we define a ternary weighted relation
$\mu_w$ on $D = \{0, 1\}$ by
\begin{equation}
\mu_w(x,y,z) =
\begin{cases}
2 & \text{if $z=1$ and $x=y$,} \\
1 & \text{if $z=1$ and $x\neq y$,} \\
0 & \text{if $z=0$ and $x=y=0$,} \\
w & \text{otherwise.} \\
\end{cases}
\end{equation}
Note that $\Feas(\mu_w) = D^3$ and $\Opt(\mu_w) = \{ (0,0,0) \}$ are
downsets, and hence $\mu_w$ is EDS. However, it is not $\alpha$-EDS for any
$\alpha < w/2$: For $\tup{x} = (0,1,1)$, $\tup{y} = (1,0,1)$, we have
$\mu_w(\tup{x}) + \mu_w(\tup{y}) = 2$ but $\mu_w(\sub(\tup{x},
\tup{y})) = \mu_w(0,1,0) = w$. Language $\Gamma = \left\{\mu_w ~|~
w\in\mathbb{Z}_{\geq 1}\right\}$ is therefore not EDS.

By our classification (Theorem~\ref{thmClassificationSurjectiveVCSP}), language
$\Gamma$ is globally s-intractable; here we show it directly by a reduction from
the NP-hard Max-Cut problem. Given an undirected graph $G = (V,E)$ with no
isolated vertices, we construct a $\VCSPs(\Gamma)$ instance $I$ as follows. Let
$w = 2|E| + 1$. We introduce a corresponding variable for every vertex in $V$,
and add a special variable $z$. For every edge $\{x, y\} \in E$, we impose a
constraint $\mu_w(x,y,z)$.

Cuts in $G$ are in one-to-one correspondence with assignments to $I$ satisfying
$z=1$. In particular, a cut of size $k$ corresponds to an assignment to $I$ with
value $k+2(|E|-k)=2|E|-k$. Any surjective assignment with $z = 0$ is of value at
least $w > 2|E|-k$. Thus, solving $I$ amounts to solving Max-Cut in $G$.

\end{example}

\subsection{Approximability of maximising surjective VCSP}
\label{secApproximabilityMaxVCSPs}

Although the VCSP is commonly defined with a minimisation objective, it is easy
to see that, for exact solvability, its maximisation variant is essentially an
identical problem: Minimising a $\Q$-valued function $\phi_I$ corresponds to
maximising $-\phi_I$. When studying approximability, however, the two variants
vastly differ (see~\cite{makarychev2017approximation} for a survey).

We focus on maximisation of the $\Qn$-valued VCSP. This problem generalises the
Max-CSP, in which the objective is to maximise the number of satisfied
constraints; in particular, the Max-CSP corresponds to maximisation of the
$\{0,1\}$-valued VCSP. The complexity of exactly maximising the $\Qn$-valued
VCSP was established by Thapper and \v{Z}ivn\'y~\cite{tz16:jacm}.
Raghavendra~\cite{raghavendra08:stoc} showed that, assuming the unique games
conjecture, the basic semidefinite programming relaxation achieves the optimal
approximation ratio for the problem. In this section, we consider approximate
maximisation of the \emph{surjective} $\Qn$-valued VCSP.

\begin{definition}

An \emph{instance} $I=(V,D,\phi_I)$ of the Max-VCSP on domain $D$ is given by a
finite set of variables $V=\{x_1,\ldots,x_n\}$ and an objective function
$\phi_I:D^n\to\Qn$ expressed as a weighted sum of \emph{constraints} over
$V$, i.e.,
\begin{equation}
\phi_I(x_1,\ldots,x_n)=\sum_{i=1}^q w_i\cdot\gamma_i(\tup{x}_i) \,,
\end{equation}
where $\gamma_i$ is a $\Qn$-valued weighted relation, $w_i \in \Qn$ is the
\emph{weight} and $\tup{x}_i\in V^{\ar(\gamma_i)}$ the \emph{scope} of the $i$th
constraint.

Given an instance $I$, the goal is to find an assignment $s:V\to D$ of domain
labels to the variables that \emph{maximises} $\phi_I$. We denote the maximum
value of the objective function by $\opt_I$. For any $r \in [0, 1]$, an
assignment $s$ is an \emph{$r$-approximate solution} to $I$ if $\phi_I(s) \geq
r\cdot\opt_I$.

An assignment $s$ is \emph{surjective} if its image equals $D$. We denote the
maximum objective value of surjective assignments by $\sopt_I$. For any $r \in
(0, 1]$, a surjective assignment $s$ is an \emph{$r$-approximate surjective
solution} to $I$ if $\phi_I(s) \geq r\cdot\sopt_I$.

We denote by $\MVCSPs(\Gamma)$ the surjective $\MVCSP$ problem on instances over
a language $\Gamma$.

\end{definition}

Following the standard definitions, we say that $\MVCSPs(\Gamma)$ belongs to APX
if, for some $r \in (0, 1]$, there exists a polynomial-time algorithm that finds
an $r$-approximate surjective solution to every $\MVCSPs(\Gamma)$ instance. If
such an algorithm exists for every $r < 1$, we say that the problem admits a
polynomial-time approximation scheme (PTAS). $\MVCSPs(\Gamma)$ is APX-hard if
there exists a PTAS reduction (an approximation-preserving reduction,
see~\cite{crescenzi1997guide}) from every problem in APX to $\MVCSPs(\Gamma)$.

First, we prove that a polynomial-time algorithm for exactly maximising the
$\Qn$-valued VCSP over a language $\Gamma$ implies a PTAS for $\MVCSPs(\Gamma)$.
Second, we establish a complexity classification of Boolean languages in
Theorem~\ref{thmApproximabilitySurjectiveFiniteValuedVCSP}.

\begin{lemma}
\label{lmSurjectiveApproximation}

Let $\Gamma$ be a $\Qn$-valued language and $r, \epsilon \in \R$ such that $0 <
\epsilon \leq r \leq 1$. There is a polynomial-time algorithm that, given a
Max-VCSP instance $I=(V,D,\phi_I)$ over $\Gamma$ and an $r$-approximate solution
$s$ to $I$, outputs an $(r-\epsilon)$-approximate \emph{surjective} solution
$s'$ to $I$.

\end{lemma}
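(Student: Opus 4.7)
The plan is to case-split on the number of variables $n := |V|$. Let $c$ be the maximum arity of any weighted relation in $\Gamma$ (a constant as $\Gamma$ has bounded arity), and set the threshold $n_0 := \lceil |D|(c/\epsilon + 1)\rceil$. If $n \le n_0$, I would simply enumerate all $|D|^n \le |D|^{n_0}$ assignments, evaluate $\phi_I$ on each, and return an optimal surjective one; this is trivially $(r-\epsilon)$-approximate, and the running time is polynomial in the input since $|D|^{n_0}$ is a constant depending only on $\Gamma$ and $\epsilon$.

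If $n > n_0$ and $s$ is already surjective, return $s' := s$. Otherwise, list the missing labels $d_1, \dots, d_m$ (with $m \le |D|$) and build a sequence $t_0 = s, t_1, \dots, t_m = s'$ greedily: at step $j$, among the \emph{safe} variables---those $x$ whose current label $t_{j-1}(x)$ is shared by some other variable under $t_{j-1}$---pick $x_j$ minimising the loss incurred by reassigning $x_j$ to $d_j$, and let $t_j$ be the resulting assignment. The safety condition guarantees that every previously-attained label survives while each step adds one missing label, so $s' = t_m$ is surjective.

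The crux is an averaging bound that relies on $\gamma_i \in \Qn$: for any assignment $t$, label $d$, and variable $x$, writing $t^{x\mapsto d}$ for the reassigned assignment,
\begin{equation*}
\phi_I(t) - \phi_I(t^{x\mapsto d}) \;\le\; \sum_{i:\, x \in \tup{x}_i} w_i\,\gamma_i(t|_{\tup{x}_i})\,,
\end{equation*}
and summing these non-negative upper bounds over any set $S \subseteq V$ yields $\sum_{x \in S}\bigl[\phi_I(t) - \phi_I(t^{x\mapsto d})\bigr] \le c\cdot\phi_I(t)$, because each constraint contributes to at most $c$ terms. At step $j$ the unsafe variables number at most $|D|$, so the safe set $S$ has $|S| \ge n - |D|$ and the minimum loss over $S$ is at most $c\,\phi_I(t_{j-1})/(n - |D|) \le (\epsilon/|D|)\phi_I(t_{j-1})$. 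Iterating at most $m \le |D|$ times and applying Bernoulli's inequality gives $\phi_I(s') \ge (1 - \epsilon/|D|)^{|D|}\phi_I(s) \ge (1-\epsilon)\phi_I(s)$; combining with $\phi_I(s) \ge r\cdot\opt_I \ge r\cdot\sopt_I$ and $r \le 1$ yields $\phi_I(s') \ge (1-\epsilon)r\cdot\sopt_I \ge (r-\epsilon)\sopt_I$, as required.

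The main subtlety is maintaining surjectivity throughout the greedy modifications, which the ``safe variable'' restriction handles cleanly; the averaging bound being one-sided (an upper bound on the loss, with negative values---i.e.\ gains---causing no harm) is exactly what the application requires.
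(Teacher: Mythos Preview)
Your proof is correct and rests on the same core idea as the paper's: an averaging argument exploiting bounded arity to show that reassigning a few variables costs only a small fraction of $\phi_I(s)$. The execution differs slightly. The paper does a one-shot modification: it picks the $|D|$ variables with smallest total contribution $c(x) = \sum_{i: x \in \tup{x}_i} w_i\gamma_i(s|_{\tup{x}_i})$ and relabels them bijectively with $D$, so surjectivity is automatic and no ``safe variable'' bookkeeping is needed; the loss bound is then $\sum_{x\in U} c(x) \le \tfrac{|D|}{n}\,a_{\max}\,\phi_I(s)$ directly, avoiding the iterated Bernoulli step. Your iterative greedy version is a bit more elaborate but yields the same conclusion; the paper's route is just shorter.
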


\begin{proof}

Let $a_\text{max}$ denote the maximum arity of weighted relations in $\Gamma$,
and $n$ the number of variables of $I$. If $n < \frac{r \cdot |D| \cdot
a_\text{max}}{\epsilon}$, we find an optimal surjective assignment to $I$ by
trying all $O(|D|^n)$ assignments.

Otherwise, we modify the given assignment $s$ in order to obtain a surjective
assignment $s'$. For any variable $x \in V$, let $B_x \subseteq [q]$ be the set
of indices of constraints in whose scopes $x$ appears. We define the
contribution of $x$ by
\begin{equation}
c(x) = \sum_{i \in B_x} w_i\cdot\gamma_i(s(\tup{x}_i)) \,.
\end{equation}
It follows that the total contribution of all variables is at most $a_\text{max}
\cdot \phi_I(s)$.

Let $U$ be a set of $|D|$ variables with the smallest contribution. We
assign to them labels $D$ bijectively. The resulting assignment $s'$ is
surjective, and it holds
\begin{align}
\phi_I(s')
&\geq \phi_I(s) - \sum_{x\in U} c(x) \\
&\geq \phi_I(s) - \frac{|D|}{n} \cdot a_\text{max} \cdot \phi_I(s) \\
&\geq \left(1- \frac{|D|}{n}\cdot a_\text{max} \right) \cdot r \cdot \opt_I \\
&\geq (r-\epsilon) \cdot \sopt_I \,.
\end{align}
\qedhere

\end{proof}

Applying this lemma to an \emph{optimal} solution to an $\MVCSP$ instance (i.e.,
$r = 1$) gives us the following corollary.

\begin{corollary}
\label{corPTASforSurjectiveMVCSP}

If the $\MVCSP$ over a $\Qn$-valued language $\Gamma$ is solvable in polynomial
time, then there is a PTAS for $\MVCSPs(\Gamma)$.

\end{corollary}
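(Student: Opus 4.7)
The plan is to obtain the PTAS as an immediate consequence of Lemma~\ref{lmSurjectiveApproximation} applied to an exact optimal non-surjective solution. Fix any $\epsilon > 0$ and an instance $I$ of $\MVCSPs(\Gamma)$; we want a polynomial-time algorithm producing a $(1-\epsilon)$-approximate surjective solution.

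First I would invoke the hypothesis: since $\MVCSP(\Gamma)$ is solvable in polynomial time, we compute an exact optimum $s$ of $I$ viewed as a (non-surjective) $\MVCSP$ instance over $\Gamma$, so that $\phi_I(s) = \opt_I$. In particular $s$ is a $1$-approximate solution to $I$, i.e.\ we may take $r = 1$ in Lemma~\ref{lmSurjectiveApproximation}.

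Next I would feed $s$ together with $\epsilon$ into the algorithm of Lemma~\ref{lmSurjectiveApproximation}. This produces, in polynomial time, a surjective assignment $s'$ with $\phi_I(s') \geq (1-\epsilon)\cdot\sopt_I$; note that the lemma's guarantee is stated in terms of $\sopt_I$, which is exactly what the PTAS definition requires. Since $\epsilon > 0$ was arbitrary (and the running time of both the exact $\MVCSP$ algorithm and the rounding procedure of Lemma~\ref{lmSurjectiveApproximation} is polynomial in $|I|$ for each fixed $\epsilon$), this yields a PTAS for $\MVCSPs(\Gamma)$.

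There is essentially no obstacle here beyond verifying that $r=1$ is a legal choice in Lemma~\ref{lmSurjectiveApproximation} (the lemma requires $0 < \epsilon \leq r \leq 1$, so we only need $\epsilon \leq 1$, which is harmless as we may assume $\epsilon$ is small). The only conceptual point worth noting is that the inequality $\opt_I \geq \sopt_I$, implicit in the derivation inside the lemma, is what allows the bound $(1-\epsilon)\cdot\opt_I \geq (1-\epsilon)\cdot\sopt_I$ to serve as the PTAS guarantee against $\sopt_I$.
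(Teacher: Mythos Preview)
Your proof is correct and takes essentially the same approach as the paper: both apply Lemma~\ref{lmSurjectiveApproximation} with $r=1$ to an exact optimal solution of the non-surjective problem. The paper's own proof is in fact just the one-line remark that applying the lemma to an optimal solution (i.e., $r=1$) yields the corollary; your additional checks on the hypotheses of the lemma are valid and merely make this explicit.
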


Finally, we classify Boolean $\Qn$-valued languages by the complexity of the
corresponding $\MVCSPs$. Since multimorphisms and the EDS property are defined
in the context of minimisation, the following theorem applies them to language
$-\Gamma$ instead of $\Gamma$ (where $-\Gamma = \{ -\gamma ~|~ \gamma \in
\Gamma\}$ and $(-\gamma)(\tup{x}) = - \gamma(\tup{x})$).

\begin{theorem}
\label{thmApproximabilitySurjectiveFiniteValuedVCSP}

Let $\Gamma$ be a Boolean $\Qn$-valued language. Then

\begin{enumerate}

\item \label{it1} $\MVCSPs(\Gamma)$ is solvable exactly in polynomial time if
$-\Gamma$ is EDS, or $-(\neg(\Gamma))$ is EDS, or $-\Gamma$ admits the $\mmorp{\min,
\max}$ multimorphism;

\item \label{it2}
otherwise it is NP-hard to solve exactly, but

\begin{enumerate}

\item \label{it2a}
it is in PTAS if $-\Gamma$ admits $\mmorp{c_0}$ or $\mmorp{c_1}$,

\item \label{it2b}
and is APX-hard otherwise.

\end{enumerate}

\end{enumerate}

\end{theorem}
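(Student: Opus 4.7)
The plan is to derive Theorem~\ref{thmApproximabilitySurjectiveFiniteValuedVCSP} from the exact surjective classification Theorem~\ref{thmClassificationSurjectiveFiniteValuedVCSP} applied to $-\Gamma$, together with Corollary~\ref{corPTASforSurjectiveMVCSP} on the positive side and a trivial approximation-preserving reduction from non-surjective to surjective Max-VCSP on the negative side.

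For part~\ref{it1}, observe that maximising $\phi_I$ over surjective assignments with weighted relations from $\Gamma$ coincides with minimising $-\phi_I$ over surjective assignments with weighted relations from $-\Gamma$, and $-\Gamma$ is $\Q$-valued. Applying Theorem~\ref{thmClassificationSurjectiveFiniteValuedVCSP} to $-\Gamma$ therefore yields exact polynomial-time solvability under precisely the three listed conditions; the hardness half of the same theorem delivers NP-hardness of exact solvability whenever none of these conditions holds, establishing the first sentence of part~\ref{it2}.

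For~\ref{it2a}, if $-\Gamma$ admits $\mmorp{c_0}$ (resp.\ $\mmorp{c_1}$), then for each $\gamma \in \Gamma$ the tuple $\tup{0}^{\ar(\gamma)}$ (resp.\ $\tup{1}^{\ar(\gamma)}$) maximises $\gamma$, so the constant all-zero (resp.\ all-one) assignment is an optimal solution to any instance of the non-surjective $\MVCSP(\Gamma)$. The non-surjective problem is thus solvable in polynomial time, and Corollary~\ref{corPTASforSurjectiveMVCSP} immediately produces a PTAS for $\MVCSPs(\Gamma)$.

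For~\ref{it2b}, the additional hypothesis that $-\Gamma$ admits none of $\mmorp{c_0}$, $\mmorp{c_1}$, or $\mmorp{\min,\max}$ places $-\Gamma$ outside every tractable class of the $\Q$-valued classification Theorem~\ref{thmClassificationMinCSP}, so the non-surjective $\MVCSP(\Gamma)$ is APX-hard by the standard approximability dichotomy for Boolean $\Qn$-valued Max-VCSP. To transfer this lower bound, given an instance $I$ of $\MVCSP(\Gamma)$ I would form an instance $I'$ of $\MVCSPs(\Gamma)$ by adjoining $|D|-1$ isolated dummy variables; these appear in no constraint, so every assignment to $I$ extends to a surjective assignment to $I'$ of the same objective value and, conversely, the restriction of any surjective assignment to $I'$ is an assignment to $I$ of the same value. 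Hence $\opt_I = \sopt_{I'}$, the construction is a PTAS reduction, and APX-hardness carries over to $\MVCSPs(\Gamma)$. The one delicate point is invoking APX-hardness of the non-surjective $\MVCSP(\Gamma)$ at the level of $\Qn$-valued, rather than merely $\{0,1\}$-valued, Boolean constraints; all the other ingredients are routine translations between the maximisation and minimisation frameworks.
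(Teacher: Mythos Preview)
Your proposal is correct and follows essentially the same route as the paper: part~\ref{it1} and the NP-hardness in~\ref{it2} come from Theorem~\ref{thmClassificationSurjectiveFiniteValuedVCSP}, part~\ref{it2a} from Corollary~\ref{corPTASforSurjectiveMVCSP} applied to a trivially solvable $\MVCSP(\Gamma)$, and part~\ref{it2b} from APX-hardness of the non-surjective problem transferred via the dummy-variable reduction of Lemma~\ref{lmReductionVCSPtoSurjective}. The paper resolves your flagged ``delicate point'' by noting that the proof of Theorem~\ref{thmClassificationMinCSP} in~\cite{cohen06:complexitysoft} already establishes APX-hardness (not merely NP-hardness) in the $\Q$-valued setting.
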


\begin{proof}

Theorem~\ref{thmClassificationSurjectiveFiniteValuedVCSP} implies the case~(\ref{it1})
and NP-hardness in the case~(\ref{it2}). Case~(\ref{it2a}) follows from
Theorem~\ref{thmClassificationMinCSP} and Corollary~\ref{corPTASforSurjectiveMVCSP}. By
Theorem~\ref{thmClassificationMinCSP}, if $-\Gamma$ does not admit either of
$\mmorp{c_0}$, $\mmorp{c_1}$ and $\mmorp{\min, \max}$, then $\MVCSP(\Gamma)$ is
NP-hard. The proof of Theorem~\ref{thmClassificationMinCSP}
in~\cite{cohen06:complexitysoft} actually establishes that $\MVCSP(\Gamma)$ is
APX-hard. By the approximation-preserving reduction in the proof of
Lemma~\ref{lmReductionVCSPtoSurjective}, this implies that $\MVCSPs(\Gamma)$ is
APX-hard as well. \qedhere

\end{proof}

Theorem~\ref{thmApproximabilitySurjectiveFiniteValuedVCSP} generalises the result of
Bach and Zhou~\cite[Theorem~16]{Bach11:surj} in two respects. Firstly, we
classify all $\Qn$-valued languages as opposed to $\{0,1\}$-valued languages.
Secondly, we classify constraint languages as being in P, in PTAS, or being
APX-hard; \cite{Bach11:surj} only distinguishes admitting a PTAS versus being
APX-hard. Finally, the main technical component of
Theorem~\ref{thmApproximabilitySurjectiveFiniteValuedVCSP},
Lemma~\ref{lmSurjectiveApproximation}, has a slightly simpler proof compared
to~\cite{Bach11:surj}.

\section{Hardness proofs}
\label{secSurjectiveHardness}

Consider a Boolean language $\Gamma$ over $D = \{ 0, 1 \}$ that admits
multimorphism $\mmorp{c_0}$ (the case of multimorphism $\mmorp{c_1}$ is
symmetric), but does not admit any of the following multimorphisms:
$\mmorp{\min, \min}$, $\mmorp{\max, \max}$, $\mmorp{\min, \max}$, $\mmorp{\mnrt,
\mnrt, \mnrt}$, $\mmorp{\mjrt, \mjrt, \mjrt}$, $\mmorp{\mjrt, \mjrt, \mnrt}$.
Suppose that $\Gamma$ is not EDS. We prove that $\VCSPs(\Gamma)$ is NP-hard,
i.e., $\Gamma$ is globally s-intractable.

We start by showing that there exists a relation such that it is not invariant
under $\sub$ and it can be added to $\Gamma$ without changing the complexity of
$\VCSPs(\Gamma)$ (see Corollary~\ref{corNotEDSLanguageReduction}). For finite $\Gamma$,
this follows simply from Corollary~\ref{corFiniteEDSSub} and
Lemma~\ref{lmSurjectiveClosureReduction}, as there exists $\gamma \in \Gamma$ such
that $\Feas(\gamma)$ or $\Opt(\gamma)$ is not invariant under $\sub$. In
general, however, a different argument is necessary. We prove it by showing that
$\Gamma$ contains weighted relations arbitrarily ``similar'' to a relation which
is not invariant under $\sub$, and that this relation may be thus added to
$\Gamma$.

\begin{definition}

For any $\alpha \geq 1$, an $r$-ary weighted relation $\gamma$ is
\emph{$\alpha$-crisp} if its image $\gamma(D^r)$ lies in ${[0,1]} \cup {(\alpha,
\infty]}$. We will denote by $\Round{\alpha}{\gamma}$ the $r$-ary relation
defined as
\begin{equation}
\Round{\alpha}{\gamma}(\tup{x}) =
\begin{cases}
0      & \text{if $\gamma(\tup{x}) \in {[0,1]}$,} \\
\infty & \text{if $\gamma(\tup{x}) \in {(\alpha, \infty]}$.}
\end{cases}
\end{equation}

\end{definition}

Note that an $\alpha$-crisp weighted relation is $\alpha'$-crisp for any
$\alpha' \leq \alpha$. Moreover, a crisp weighted relation $\rho$ is $\alpha$-crisp for
any $\alpha \geq 1$, and $\Round{\alpha}{\rho} = \rho$.

\begin{lemma}
\label{lmAlphaCrispReduction}

Let $\Gamma$ be a language and $\rho$ a relation such that, for any $\alpha \geq
1$, there exists an $\alpha$-crisp weighted relation $\gamma \in \Gamma$ with
$\Round{\alpha}{\gamma} = \rho$. Then $\VCSPs(\Gamma \cup \{\rho\}) \reduces
\VCSPs(\Gamma)$.

\end{lemma}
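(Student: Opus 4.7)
The plan is to simulate the crisp relation $\rho$ by a single sufficiently sharp $\alpha$-crisp weighted relation $\gamma\in\Gamma$ supplied by the hypothesis. Given an instance $I$ of $\VCSPs(\Gamma\cup\{\rho\})$, write $\phi_I=\phi_\Gamma+\phi_\rho$ where $\phi_\rho(s)\in\{0,\infty\}$ gathers the $\rho$-constraints and $\phi_\Gamma$ gathers the rest. I will replace each $\rho$-constraint by a $\gamma$-constraint of carefully chosen weight $W$: for a suitably large $\alpha$, this simultaneously (a) imposes a penalty on $\rho$-violating assignments that dwarfs every achievable $\phi_\Gamma$-value, and (b) perturbs $\phi_\Gamma$ by strictly less than the smallest positive gap between its values, so an oracle optimum for the modified instance is an optimum of $I$.

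More concretely, let $q_\rho$ be the number of $\rho$-constraints in $I$. Compute a polynomial-time upper bound $P$ on the finite values of $\phi_\Gamma$ (take $P=\sum_i w_i M_i$ with $M_i$ the largest finite value of the $i$-th weighted relation), and a lower bound $\epsilon>0$ on the smallest positive difference between distinct finite values of $\phi_\Gamma$ (for instance $\epsilon=1/D$ where $D$ is a common denominator of all values and weights of $\Gamma$-constraints in $I$). Pick $W:=\epsilon/(2q_\rho)$ and any $\alpha$ with $W\alpha>P+Wq_\rho$, e.g.\ $\alpha:=\lceil P/W\rceil+q_\rho+1$. By hypothesis fix some $\alpha$-crisp $\gamma\in\Gamma$ with $\Round{\alpha}{\gamma}=\rho$, and form $I'$ by replacing every constraint $w_j'\cdot\rho(\tup{y}_j)$ with $W\cdot\gamma(\tup{y}_j)$ (the weight $w_j'$ is irrelevant as $\rho\in\{0,\infty\}$). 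All these quantities are computable in polynomial time in the arithmetic model.

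For every surjective assignment $s$, the aggregate $\gamma$-contribution $W\sum_j\gamma(s(\tup{y}_j))$ lies in $[0,Wq_\rho]$ if $s$ is $\rho$-feasible, and strictly exceeds $W\alpha$ otherwise. Invoke the $\VCSPs(\Gamma)$ oracle on $I'$ to obtain an optimal surjective $s'$. If $\phi_{I'}(s')=\infty$, then no surjective assignment is simultaneously $\Gamma$-finite and $\rho$-feasible, so $I$ has no feasible surjective solution. If $\phi_{I'}(s')<\infty$, the inequality $W\alpha>P+Wq_\rho$ rules out $\rho$-infeasible $s'$ because any $\rho$-feasible surjective $s^*$ with $\phi_\Gamma(s^*)\le P$ would witness $\phi_{I'}(s^*)<W\alpha\le\phi_{I'}(s')$. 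Finally, if $s^*$ minimises $\phi_\Gamma$ among $\rho$-feasible surjective assignments, optimality of $s'$ in $I'$ yields $\phi_\Gamma(s')\le\phi_\Gamma(s^*)+Wq_\rho<\phi_\Gamma(s^*)+\epsilon$; the choice of $\epsilon$ then forces $\phi_\Gamma(s')=\phi_\Gamma(s^*)$, so $s'$ is itself an optimal surjective solution to $I$.

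The main technical obstacle is the two-sided balancing of $W$ and $\alpha$: $W$ must be small enough that $Wq_\rho<\epsilon$ (so the perturbation cannot alter the $\Gamma$-optimum), while $W\alpha$ must still exceed $P$ (so $\rho$-violation is strictly penalised beyond any possible $\Gamma$-saving). Establishing a usable lower bound $\epsilon$ on the $\phi_\Gamma$-gap needs some care—clearing denominators to integer values in a preprocessing step is the cleanest route—but once $\epsilon$ and $P$ are in hand, the required $\gamma$ exists in $\Gamma$ by the hypothesis, so the reduction is immediate.
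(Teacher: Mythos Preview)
Your approach is essentially the same as the paper's: replace each $\rho$-constraint by a single sufficiently $\alpha$-crisp $\gamma\in\Gamma$, with weight small enough not to disturb the ordering among $\rho$-feasible assignments yet with $\alpha$ large enough that any $\rho$-violation outweighs every attainable $\phi_\Gamma$-value. The paper executes this by first scaling and shifting all weighted relations in $I$ so that every assigned value is a non-negative integer; then the gap is automatically at least $1$ and, crucially, $\phi_\Gamma(s)\ge 0$ for every assignment $s$.

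There is one small slip in your version. When you assert $W\alpha\le\phi_{I'}(s')$ for a $\rho$-infeasible $s'$, you are implicitly assuming $\phi_\Gamma(s')\ge 0$: all you actually have is $\phi_{I'}(s')=\phi_\Gamma(s')+W\sum_j\gamma(s'(\tup{y}_j))>\phi_\Gamma(s')+W\alpha$, and since weighted relations in $\Gamma$ may take negative rational values, $\phi_\Gamma(s')$ need not be non-negative. The fix is immediate---either compute also a lower bound $L=\sum_i w_i m_i$ (with $m_i$ the minimum value of the $i$-th relation) and require $W\alpha>P-L+Wq_\rho$, or, as the paper does, shift everything to be non-negative before choosing $\alpha$. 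With that adjustment your argument goes through.
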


\begin{proof}

Let $I$ be an instance of $\VCSPs(\Gamma \cup \{\rho\})$ with $k$ constraints
that apply relation $\rho$. By scaling and adding rational constants to weighted
relations in $I$, we ensure that all the assigned values are non-negative
integers. Let $M$ be an upper bound on the maximum value of a feasible solution
to $I$ (e.g., the weighted sum of the maximum finite values assigned by the
constraints of $I$). Let $\gamma \in \Gamma$ be a $M\cdot(k+1)$-crisp weighted
relation such that $\Round{M\cdot(k+1)}{\gamma} = \rho$. In each constraint
applying relation $\rho$, we replace it by $\gamma$ with weight $1/(k+1)$, and
thus obtain an instance of $\VCSPs(\Gamma)$. Since $\gamma$ is
$M\cdot(k+1)$-crisp, the value of any feasible assignment increases by at most
$k/(k+1) < 1$, and the value of any infeasible assignment becomes larger than
$M$. \qedhere

\end{proof}

\begin{lemma}
\label{lmNotEDSLanguageToRelation}

Let $\Gamma$ be a language such that it admits multimorphism $\mmorp{c_0}$ but
is not EDS. Then there exists a relation $\rho$ that is invariant under $c_0$
but not under $\sub$ and, for any $\alpha \geq 1$, there exists an
$\alpha$-crisp weighted relation $\gamma \in \clGamma$ with
$\Round{\alpha}{\gamma} = \rho$.

\end{lemma}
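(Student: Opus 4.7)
The plan is to exploit the non-EDS hypothesis to produce, for every $\alpha \geq 1$, a weighted relation in $\Gamma$ that violates~\eqref{eqDefEDSwRel} with an arbitrarily large gap, and then use the closure operations (addition of rational constants, non-negative scaling, $\Feas$ and $\Opt$) to convert it into an $\alpha$-crisp weighted relation in $\clGamma$ whose $\alpha$-rounding is a single fixed $\rho$.

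First, I would apply the non-EDS hypothesis: for every integer $n \geq 1$ some $\gamma_n \in \Gamma$ fails to be $n$-EDS, witnessed by tuples $\tup{x}_n, \tup{y}_n \in \Feas(\gamma_n)$ violating~\eqref{eqDefEDSwRel} with $\alpha = n$ (here $\tup{0}^{r_n} \in \Feas(\gamma_n)$ by $\mmorp{c_0}$). Because $\Gamma$ has bounded arity, pigeonhole over arities and pairs of tuples yields a fixed arity $r$ and fixed tuples $\tup{x}, \tup{y} \in D^r$ such that the violation holds at these parameters for every $n$ in an infinite subset $N \subseteq \mathbb{Z}_{\geq 1}$. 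Set $\tup{z} := \sub(\tup{x}, \tup{y})$. Normalising $\gamma_n$ by adding the constant $-\gamma_n(\tup{0}^r)$ (a closure operation) I may assume $\gamma_n \geq 0$, $\gamma_n(\tup{0}^r) = 0$, and $n(\gamma_n(\tup{x}) + \gamma_n(\tup{y})) < \gamma_n(\tup{z})$.

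I would then split into an easy regime and a hard one. In the easy regime, either $\gamma_n(\tup{z}) = \infty$ for infinitely many $n \in N$ (so the crisp relation $\Feas(\gamma_n) = 0 \cdot \gamma_n \in \clGamma$ contains $\tup{0}^r, \tup{x}, \tup{y}$ but excludes $\tup{z}$), or $\gamma_n(\tup{x}) = \gamma_n(\tup{y}) = 0$ for infinitely many $n \in N$ (so $\Opt(\gamma_n) \in \clGamma$ separates $\{\tup{x}, \tup{y}\}$ from $\tup{z}$, since the violation forces $\gamma_n(\tup{z}) > 0$). Pigeonhole over the finitely many $r$-ary relations then collapses either family to a fixed crisp $\rho \in \clGamma$, and $\rho$ is itself $\alpha$-crisp with $\Round{\alpha}{\rho} = \rho$ for every $\alpha \geq 1$. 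In the hard regime, for all sufficiently large $n \in N$ we have $\gamma_n(\tup{z}) < \infty$ and $M_n := \max(\gamma_n(\tup{x}), \gamma_n(\tup{y})) > 0$. Enumerating the positive finite values in the image of $\gamma_n$ as $v_1 < \ldots < v_k$ with $k \leq 2^r$ and writing $M_n = v_m$, $\gamma_n(\tup{z}) = v_j$, I obtain $v_j/v_m > n$; telescoping $v_j/v_m = \prod_{i=m}^{j-1} v_{i+1}/v_i$ across at most $2^r$ factors then forces some consecutive ratio $v_{i_n+1}/v_{i_n} > n^{1/2^r}$ with $m \leq i_n < j$.

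Scaling $\gamma_n$ by $1/v_{i_n}$ produces $\gamma'_n \in \clGamma$ whose image lies in $[0,1] \cup (n^{1/2^r}, \infty]$: the tuples $\tup{0}^r, \tup{x}, \tup{y}$ take values in $[0,1]$ (their originals are $\leq v_m \leq v_{i_n}$) while $\tup{z}$ takes a value above $n^{1/2^r}$ (its original is $\geq v_{i_n+1}$). The relation $\rho_n := \{\tup{u} \in D^r : \gamma_n(\tup{u}) \leq v_{i_n}\}$ ranges over only finitely many possibilities across $n \in N$, so a final pigeonhole picks a fixed $\rho$ with $\rho_n = \rho$ on an infinite $N' \subseteq N$; this $\rho$ contains $\tup{0}^r, \tup{x}, \tup{y}$ but misses $\tup{z}$, hence is invariant under $c_0$ but not under $\sub$. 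Given any $\alpha \geq 1$, picking $n \in N'$ with $n \geq \alpha^{2^r}$ yields an image for $\gamma'_n$ contained in $[0,1] \cup (\alpha, \infty]$, so $\gamma'_n$ is $\alpha$-crisp and $\Round{\alpha}{\gamma'_n} = \rho$, as required. The main obstacle is the telescoping step: the gap $v_j/v_m > n$ must be distributed across at most $2^r$ consecutive ratios to isolate a single large one, and positioning the threshold $v_{i_n}$ inside $[m, j-1]$ so that $\{\tup{x}, \tup{y}\}$ falls below and $\tup{z}$ above crucially uses the bound $M_n = v_m$ on the ``small'' side.
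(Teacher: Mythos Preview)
Your proof is correct and rests on the same core idea as the paper's: the non-EDS hypothesis gives, for arbitrarily large parameter, a weighted relation whose image has a large multiplicative gap between $\{\gamma(\tup{x}),\gamma(\tup{y})\}$ and $\gamma(\sub(\tup{x},\tup{y}))$; since the image is finite, some consecutive gap is already large, so a single rescaling lands the image in $[0,1]\cup(\alpha,\infty]$; a final pigeonhole over the finitely many possible roundings fixes $\rho$.

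The paper's execution is more streamlined in two respects. First, instead of invoking the bounded-arity assumption on $\Gamma$, the paper identifies coordinates on which $\tup{u}$ and $\tup{v}$ agree (coordinate mapping is a closure operation), reducing the arity to at most $4$; this caps the image size at $16$ uniformly. Second, the paper avoids your easy/hard case split and your telescoping argument by a direct pigeonhole: after normalising so that $\gamma(\tup{u}),\gamma(\tup{v})\leq 1$ and $\gamma(\sub(\tup{u},\tup{v}))>\alpha^{17}$, the $17$ intervals $(\alpha^i,\alpha^{i+1}]$ for $0\leq i\leq 16$ cannot all meet a $16$-element image, so some interval is empty and scaling by $\alpha^{-i}$ finishes. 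Your telescoping on consecutive ratios $v_{i+1}/v_i$ is essentially the same pigeonhole in multiplicative clothing; what your approach buys is that it works verbatim without the arity-reduction step, at the cost of the extra case analysis for $\gamma_n(\tup{z})=\infty$ and $M_n=0$ (which the paper absorbs into its uniform argument, since $\infty$ and values in $[0,1]$ lie outside all $17$ intervals anyway).
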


\begin{proof}

We will show that for any $\alpha \geq 1$, there exists an $\alpha$-crisp
weighted relation $\gamma \in \clGamma$ such that $\Round{\alpha}{\gamma}$ is a
relation of arity at most $4$ that is invariant under $c_0$ but not under
$\sub$. As there are only finitely many such relations, the claim of the lemma
will follow.

Language $\clGamma$ admits multimorphism $\mmorp{c_0}$ as well but is not EDS;
in particular, it is not $\alpha^{17}$-EDS. Therefore, there exists an $r$-ary
weighted relation $\gamma \in \clGamma$ and $\tup{u}, \tup{v} \in \Feas(\gamma)$
such that $\gamma(\tup{0}^r) = 0$ (as $\clGamma$ is closed under adding rational
constants) and
\begin{equation}
0 \leq
\alpha^{17} \cdot (\gamma(\tup{u}) + \gamma(\tup{v})) <
\gamma(\sub(\tup{u}, \tup{v})) \,.
\end{equation}
We may assume that there are no distinct coordinates $i,j$ where $u_i=u_j$ and
$v_i=v_j$ (otherwise we identify them), and hence $r \leq 4$. As $\clGamma$ is
closed under scaling, we may also assume that $\gamma(\tup{u}),\gamma(\tup{v})
\leq 1$ and $\gamma(\sub(\tup{u}, \tup{v})) > \alpha^{17}$.

Let us consider, for any $0 \leq i \leq 16$, the intersection of the image
$\gamma(D^r)$ with the interval ${\left(\alpha^i, \alpha^{i+1}\right]}$. Since
$|D^r| \leq 2^4 = 16$, the intersection is empty for some $i$. Scaling $\gamma$
by $1 / \alpha^i$ then yields an $\alpha$-crisp weighted relation $\gamma' \in
\clGamma$ such that $\Round{\alpha}{\gamma'}$ is invariant under $c_0$ but not
under $\sub$, as $\gamma'(\tup{0}^r), \gamma'(\tup{u}), \gamma'(\tup{v}) \leq 1$
and $\gamma'(\sub(\tup{u}, \tup{v})) > \alpha$. \qedhere

\end{proof}

\begin{corollary}
\label{corNotEDSLanguageReduction}

Let $\Gamma$ be a language such that it admits multimorphism $\mmorp{c_0}$ but
is not EDS. Then $\VCSPs(\Gamma \cup \{\rho\}) \reduces \VCSPs(\Gamma)$ for some
relation $\rho$ that is invariant under $c_0$ but not under $\sub$.

\end{corollary}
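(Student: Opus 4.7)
The plan is to obtain the desired reduction by chaining three already-established results, with essentially no new technical work needed. First, I would apply Lemma~\ref{lmNotEDSLanguageToRelation} to $\Gamma$. Since $\Gamma$ admits $\mmorp{c_0}$ but is not EDS, the lemma yields a relation $\rho$ that is invariant under $c_0$ but not under $\sub$, together with the key guarantee that for every $\alpha \geq 1$ the closure $\clGamma$ contains an $\alpha$-crisp weighted relation $\gamma$ with $\Round{\alpha}{\gamma} = \rho$. This $\rho$ is the witness that will appear in the statement.

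Next, I would apply Lemma~\ref{lmAlphaCrispReduction}, with $\clGamma$ playing the role of the language in that lemma's statement. Its hypothesis (existence of arbitrarily accurate $\alpha$-crisp approximations rounding to $\rho$) is exactly what the previous step supplies, so I obtain
\begin{equation*}
\VCSPs(\clGamma \cup \{\rho\}) \reduces \VCSPs(\clGamma) \,.
\end{equation*}
I would then invoke Lemma~\ref{lmSurjectiveClosureReduction} to get $\VCSPs(\clGamma) \reduces \VCSPs(\Gamma)$, which absorbs the passage to the surjective closure.

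Finally, since $\Gamma \cup \{\rho\} \subseteq \clGamma \cup \{\rho\}$, every instance of $\VCSPs(\Gamma \cup \{\rho\})$ is literally an instance of $\VCSPs(\clGamma \cup \{\rho\})$, giving the trivial reduction $\VCSPs(\Gamma \cup \{\rho\}) \reduces \VCSPs(\clGamma \cup \{\rho\})$. Composing the three reductions produces $\VCSPs(\Gamma \cup \{\rho\}) \reduces \VCSPs(\Gamma)$, which is the claim. There is no real obstacle here: the substantive content is already packaged in Lemma~\ref{lmNotEDSLanguageToRelation} (in particular the pigeonhole-on-magnitudes argument that extracts an $\alpha$-crisp witness from a sufficiently strong failure of $\alpha^{17}$-EDS) and in the closure-preservation Lemma~\ref{lmSurjectiveClosureReduction}. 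The corollary is simply the composition that turns those ingredients into a single reduction statement.
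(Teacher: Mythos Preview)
Your proposal is correct and follows exactly the same approach as the paper: apply Lemma~\ref{lmNotEDSLanguageToRelation} to obtain $\rho$ with $\alpha$-crisp approximants in $\clGamma$, feed this into Lemma~\ref{lmAlphaCrispReduction} to get $\VCSPs(\clGamma \cup \{\rho\}) \reduces \VCSPs(\clGamma)$, and then use Lemma~\ref{lmSurjectiveClosureReduction} to collapse $\clGamma$ back to $\Gamma$. The only difference is that you make the trivial inclusion $\Gamma \cup \{\rho\} \subseteq \clGamma \cup \{\rho\}$ explicit, whereas the paper leaves it implicit.
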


\begin{proof}

By Lemmas~\ref{lmNotEDSLanguageToRelation}~and~\ref{lmAlphaCrispReduction}, we have that
$\VCSPs(\clGamma \cup \{\rho\}) \reduces \VCSPs(\clGamma)$ for some relation
$\rho$ that is invariant under $c_0$ but not under $\sub$. By
Lemma~\ref{lmSurjectiveClosureReduction}, it holds $\VCSPs(\clGamma) \reduces
\VCSPs(\Gamma)$. \qedhere

\end{proof}

We define weighted relations $\gamma_0 = \Soft(\rho_0)$, $\gamma_1 =
\Soft(\rho_1)$, and $\gamma_= = \Soft(\rho_=)$; a binary relation
$\rho_\leq=\{(0,0),(0,1),(1,1)\}$, and, for $r\in\{3,4\}$, an $r$-ary relation
\begin{equation}
A_r
=
\left\{ (x_1, \dots, x_r) \in \{0,1\}^r
~\middle|~
\sum_{i=1}^r x_i \equiv 0 \, (\mbox{mod } 2) \right\} \,.
\end{equation}

Assuming that $\Gamma$ does not admit polymorphism $\sub$, we prove that
$\VCSPs(\Gamma)$ is NP-hard (see Lemma~\ref{lmNotEDSsIntractable}). The proof makes
use of several sources of hardness. More specifically, we show that at least one
of the following cases applies:

\begin{itemize}

\item $\VCSPs(\Feas(\Gamma) \cup \Opt(\Gamma))$ is NP-hard (by the
classification of $\{ 0, \infty \}$-valued languages, see
Theorem~\ref{thmClassificationSurjectiveCSP}).

\item $\VCSP(\Gamma \cup \mathcal{C}_D)$ reduces to $\VCSPs(\Gamma)$. In
particular, it holds $\rho_\leq \in \clGamma$, which can be used to simulate
constants (see Lemma~\ref{lmReductionLeq}). The intractability of $\VCSP(\Gamma \cup
\mathcal{C}_D)$ follows from Theorem~\ref{thmClassificationVCSP}.

\item The NP-hard Minimum Distance problem~\cite{Vardy1997} reduces to
$\VCSPs(\Gamma)$. In particular, it holds $\{ A_3, \gamma_0 \} \subseteq
\clGamma$ or $\{ A_4, \gamma_= \} \subseteq \clGamma$; the reduction from the
Minimum Distance problem to these languages is given in
Lemma~\ref{lmA34sIntractable}.

\end{itemize}

Before proving Lemma~\ref{lmNotEDSsIntractable}, we need a few auxiliary lemmas to
establish the existence of certain weighted relations in $\clGamma$.

\begin{lemma}
\label{lmNotNeg}
Let $\rho$ be a relation invariant under $c_0$ but not under $\neg$.
Then $\rho_0 \in \clSingle{\rho}$ or $\rho_\leq \in \clSingle{\rho}$.
\end{lemma}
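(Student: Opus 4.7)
The plan is to derive either $\rho_0$ or $\rho_\leq$ from $\rho$ using only the \emph{coordinate mapping} operation, which is one of the operations defining the closure $\clSingle{\rho}$. No other closure operations are needed, and the argument splits into two short cases.

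First, I would observe that the hypothesis forces a useful structure on $\rho$: since $\rho$ is not invariant under $\neg$, it must be nonempty (the empty relation is vacuously $\neg$-invariant), and applying $c_0$ to any tuple of $\rho$ shows that $\tup{0}^r \in \rho$, where $r = \ar(\rho)$. Fix some witness $\tup{a} \in \rho$ with $\neg(\tup{a}) \notin \rho$.

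I then split on whether $\tup{1}^r \in \rho$. In the case $\tup{1}^r \notin \rho$, I would use the diagonal coordinate mapping $f \colon [r] \to [1]$ defined by $f(i) = 1$ for all $i$. Then $f(\rho)$ is the unary relation consisting of those $x$ for which the constant tuple $(x, \ldots, x)$ lies in $\rho$; this holds for $x = 0$ and fails for $x = 1$, yielding $f(\rho) = \{(0)\} = \rho_0$. In the case $\tup{1}^r \in \rho$, the witness $\tup{a}$ cannot equal $\tup{0}^r$ or $\tup{1}^r$, since the negations of both tuples lie in $\rho$; hence $\tup{a}$ has both $0$ and $1$ coordinates. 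I would then define $f \colon [r] \to [2]$ by $f(i) = 1$ if $a_i = 0$ and $f(i) = 2$ if $a_i = 1$. Under this mapping, $(x, y) \in f(\rho)$ iff the tuple obtained by placing $x$ at the $0$-positions of $\tup{a}$ and $y$ at the $1$-positions lies in $\rho$. Evaluating the four possibilities gives $\tup{0}^r \in \rho$, $\tup{1}^r \in \rho$, $\tup{a} \in \rho$, and $\neg(\tup{a}) \notin \rho$, so $f(\rho) = \{(0,0), (0,1), (1,1)\} = \rho_\leq$.

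There is no substantive obstacle here: the argument is a brief case analysis, and the only mildly delicate point is ensuring the witness $\tup{a}$ has mixed coordinates in the second case so that the coordinate mapping $f \colon [r] \to [2]$ is well-defined. Since both desired relations are obtained by a single coordinate mapping, the conclusion $\rho_0 \in \clSingle{\rho}$ or $\rho_\leq \in \clSingle{\rho}$ follows immediately.
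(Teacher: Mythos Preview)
Your proof is correct and follows essentially the same approach as the paper: both split on whether $\tup{1}^r \in \rho$, identifying all coordinates to get $\rho_0$ in the first case and identifying coordinates according to the witness tuple to get $\rho_\leq$ in the second. Your version is slightly more explicit (you spell out why the witness must have mixed coordinates in the second case and phrase the identification as a coordinate mapping), but the argument is the same.
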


\begin{proof}
Let $r$ denote the arity of $\rho$. There exists an $r$-tuple $\tup{u} \in \rho$
such that $\neg(\tup{u}) \not\in \rho$. If $\tup{1}^r \not\in \rho$, we obtain
$\rho_0$ by identifying all coordinates of $\rho$. Otherwise, we obtain
$\rho_\leq$ by identifying all coordinates where $u_i = 0$ and identifying all
coordinates where $u_i = 1$.
\end{proof}

\begin{lemma}
\label{lmNonCrisp}
Let $\gamma$ be a non-crisp weighted relation such that it admits multimorphism
$\mmorp{c_0}$. Then $\gamma_0 \in \clSingle{\gamma, \rho_0}$. If in addition $\Feas(\gamma)$
and $\Opt(\gamma)$ are invariant under $\neg$, then $\gamma_= \in
\clSingle{\gamma}$.
\end{lemma}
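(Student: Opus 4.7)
The plan is to exhibit $\gamma_0$ and $\gamma_=$ explicitly as members of the closure by composing a small number of the permitted operations (coordinate mapping, addition, non-negative scaling, addition of a rational constant, and pinning when available). In both cases the starting point is the observation that $\mmorp{c_0}$ forces $\tup{0}^r \in \Opt(\gamma)$, while non-crispness of $\gamma$ supplies a tuple $\tup{u} \in \Feas(\gamma)\setminus\Opt(\gamma)$ with $\gamma(\tup{u}) > \gamma(\tup{0}^r)$.

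For the first statement I would exploit the presence of $\rho_0$ to pin away the zero coordinates of $\tup{u}$. Concretely, $0$-pin every coordinate $i$ with $u_i = 0$ and then use a single coordinate mapping $[|S|]\to[1]$ that identifies all remaining coordinates (where $S = \{i : u_i = 1\}$). The resulting unary weighted relation takes value $\gamma(\tup{0}^r)$ at $0$ and $\gamma(\tup{u})$ at $1$; subtracting the rational constant $\gamma(\tup{0}^r)$ and scaling by $1/(\gamma(\tup{u}) - \gamma(\tup{0}^r))$ — both in the closure — gives precisely $\gamma_0$.

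The second statement is the main obstacle, because without $\rho_0$ we cannot pin and must construct a binary weighted relation out of $\gamma$ alone. The added $\neg$-invariance hypothesis together with $\mmorp{c_0}$ pushes $\tup{1}^r$ into $\Opt(\gamma)$ as well, so that $c := \gamma(\tup{0}^r) = \gamma(\tup{1}^r)$ is the common minimum, and whenever $\tup{u} \in \Feas(\gamma)\setminus\Opt(\gamma)$ witnesses non-crispness then $\neg(\tup{u})$ does too. With $S = \{i : u_i = 1\}$, which is a proper nonempty subset of $[r]$ since $\tup{u} \neq \tup{0}^r, \tup{1}^r$, I would form two coordinate mappings $f, f' : [r] \to [2]$: the map $f$ sends $S$ to the first coordinate and $\bar S$ to the second, while $f'$ swaps these roles. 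Each of $f(\gamma)$ and $f'(\gamma)$ is a binary weighted relation with value $c$ on the diagonal and with the values $\gamma(\tup{u})$ and $\gamma(\neg(\tup{u}))$ off the diagonal, placed at opposite positions. Adding them yields a symmetric binary weighted relation taking $2c$ on the diagonal and $\gamma(\tup{u}) + \gamma(\neg(\tup{u}))$ off it; a final shift by $-2c$ and rescaling by $1/(\gamma(\tup{u}) + \gamma(\neg(\tup{u})) - 2c)$ produces $\gamma_=$.

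The only subtleties are to note that $S$ is neither empty nor all of $[r]$ — so the coordinate mappings $f, f'$ are well-defined and the resulting weighted relations really are binary — and that the denominators used in the scalings are strictly positive, both of which follow directly from the choice of $\tup{u}$ and the placement of $\tup{0}^r, \tup{1}^r$ in $\Opt(\gamma)$. Everything else is routine checking against the definitions in Definition~\ref{defSurjectiveClosure}.
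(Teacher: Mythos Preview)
Your proposal is correct and follows essentially the same approach as the paper's proof. The only cosmetic difference is in the second part: the paper first forms a single binary $\gamma'$ by identifying coordinates and then symmetrises via $\gamma''(x,y)=\gamma'(x,y)+\gamma'(y,x)$, whereas you apply the two coordinate mappings $f,f'$ to $\gamma$ directly before adding --- these are the same construction, since $f'(\gamma)(x,y)=f(\gamma)(y,x)$.
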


\begin{proof}
Let $r$ denote the arity of $\gamma$. There exists an $r$-tuple $\tup{u}$ such
that $\gamma(\tup{0}^r) < \gamma(\tup{u}) < \infty$. By $0$-pinning at all
coordinates where $u_i = 0$ and identifying all coordinates where $u_i = 1$, we
obtain a unary weighted relation $\gamma' \in \clSingle{\gamma, \rho_0}$ such
that $\gamma'(0) < \gamma'(1) < \infty$. From it, we can obtain $\gamma_0$ by
adding a rational constant and scaling, as $\gamma_0 = \frac{\gamma' -
\gamma'(0)}{\gamma'(1) - \gamma'(0)}$.

If $\Feas(\gamma)$ and $\Opt(\gamma)$ are invariant under $\neg$, it holds
$\gamma(\tup{1}^r) = \gamma(\tup{0}^r)$ and $\gamma(\tup{0}^r) <
\gamma(\neg(\tup{u})) < \infty$. By identifying all coordinates where $u_i = 0$
and identifying all coordinates where $u_i = 1$, we obtain a binary weighted
relation $\gamma' \in \clSingle{\gamma}$. Consider $\gamma'' \in
\clSingle{\gamma}$ defined as $\gamma''(x, y) = \gamma'(x, y) + \gamma'(y, x)$.
It holds $\gamma''(0,0) = \gamma''(1,1) < \gamma''(0,1) = \gamma''(1,0) <
\infty$. From it, we can obtain $\gamma_=$ by adding a rational constant and
scaling.
\end{proof}

\begin{lemma}
\label{lmNegMnrtNotEDS}
Let $\rho$ be a relation invariant under $c_0$, $\neg$, and $\mnrt$, but not
under $\sub$. Then $A_4 \in \clSingle{\rho}$.
\end{lemma}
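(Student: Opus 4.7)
The plan is to work with $\rho$ as an $\mathbb{F}_2$-linear subspace of $\mathbb{F}_2^r$: invariance under $c_0,\neg,\mnrt$ gives this structure and forces $\tup{0}^r,\tup{1}^r\in\rho$, so every $\tup{w}\in\rho^\perp$ has even weight. First I would pass from $\rho$ to its de-duplicated form $V'\in\clSingle{\rho}$ by the single coordinate mapping $[r]\to[r']$ that sends each $i$ to the index of its class under the equivalence $i\sim j\iff x_i=x_j$ for all $\tup{x}\in\rho$. The resulting $V'\subseteq\mathbb{F}_2^{r'}$ is a linear subspace with no duplicate coordinates, still contains $\tup{0}^{r'}$ and $\tup{1}^{r'}$, and remains non-invariant under $\sub$ (a $\sub$-invariant $V'$ would lift to a $\sub$-invariant $\rho$). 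Hence the minimum weight of a nonzero element of $V'^\perp$ is even and at least $4$.

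Then I would induct on $r'=\ar(V')$. The base case $r'=4$ is immediate: $V'^\perp\subseteq\mathbb{F}_2^4$ consists of even-weight vectors with no weight-$2$ element, forcing $V'^\perp=\{\tup{0},(1,1,1,1)\}$ and thus $V'=A_4$. For $r'\geq 5$, pick $\tup{w}^*\in V'^\perp\setminus\{\tup{0}\}$ of minimum weight $2k$ and support $S=\{i_1,\ldots,i_{2k}\}$. When $2k\geq 6$, apply a coordinate mapping $f\colon[r']\to[r'-2]$ that identifies three chosen coordinates of $S$ into one target index and is a bijection on the remaining coordinates. Dualising sends $\tup{w}^*$ to a nonzero vector of weight $2k-2\geq 4$ in the dual of $V''=f(V')$, and a short case-check using $|\tup{w}|\geq 2k\geq 6$ for every nonzero $\tup{w}\in V'^\perp$ rules out any weight-$2$ vector appearing in that dual. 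Thus $V''\in\clSingle{\rho}$ satisfies the lemma's hypotheses with strictly smaller arity, and the induction applies.

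The core case is $2k=4$, in which the final coordinate mapping is built directly. The projection $\pi\colon V'\to\mathbb{F}_2^4$ onto the coordinates of $S$ has dual equal to the set of $\tup{w}\in V'^\perp$ with support inside $S$ (viewed in $\mathbb{F}_2^4$); by the no-duplicate and even-weight conditions this collapses to $\{\tup{0},(1,1,1,1)\}$, so $\pi$ surjects onto $A_4$. Choose $\tup{x}_{12},\tup{x}_{13}\in V'$ with $\pi(\tup{x}_{12})=(1,1,0,0)$ and $\pi(\tup{x}_{13})=(1,0,1,0)$, let $T_{12},T_{13}\subseteq[r']\setminus S$ record where they take value $1$ outside $S$, and partition $[r']\setminus S$ into $T_1=T_{12}\cap T_{13}$, $T_2=T_{12}\setminus T_{13}$, $T_3=T_{13}\setminus T_{12}$, $T_4=([r']\setminus S)\setminus(T_{12}\cup T_{13})$. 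Setting $P_k=\{i_k\}\cup T_k$ yields a partition of $[r']$ for which $\chi_{P_1\cup P_2}=\tup{x}_{12}$ and $\chi_{P_1\cup P_3}=\tup{x}_{13}$ lie in $V'$; the remaining four $\chi_{P_k\cup P_\ell}$ also lie in $V'$ by linearity together with $\chi_{[r']}=\tup{1}\in V'$. The coordinate mapping $f\colon[r']\to[4]$ sending $i\in P_k$ to $k$ then yields $f(V')=A_4$: each equation $\sum_i w_i x_i=0$ from $V'^\perp$ transforms into $(\sum_{i\in P_1}w_i)(y_1+y_2+y_3+y_4)=0$ because the $\chi_{P_k\cup P_\ell}\in V'$ conditions equate all four partial sums $\sum_{i\in P_k}w_i$, and applying this to $\tup{w}=\tup{w}^*$ recovers $y_1+y_2+y_3+y_4=0$.

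The main obstacle is this $2k=4$ construction: the two witnesses $\tup{x}_{12},\tup{x}_{13}$ have to generate a partition whose six pair-union characteristic vectors all lie in $V'$, which keeps the induced dual image inside $\langle(1,1,1,1)\rangle$; and simultaneously the image of $\tup{w}^*$ must be nonzero to prevent collapse to $\{\tup{0}\}$, which would wrongly give $f(V')=\mathbb{F}_2^4$. The availability of these witnesses — equivalently, the fact that $\pi$ surjects onto all of $A_4$ rather than a proper subspace — is precisely what the even-weight and no-duplicate properties of $V'^\perp$ buy us.
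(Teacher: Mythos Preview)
Your argument is correct but takes a markedly different route from the paper's. The paper simply picks a smallest-arity relation $\rho'\in\clSingle{\rho}$ that still fails $\sub$-invariance and picks witnesses $\tup{u},\tup{v}\in\rho'$ with $\sub(\tup{u},\tup{v})\notin\rho'$. Minimality immediately forces $\ar(\rho')=4$: any two coordinates on which $\tup{u}$ and $\tup{v}$ agree could be identified, so the columns $(u_i,v_i)$ are distinct, giving $r\leq 4$; and each of the four patterns $(a,b)\in\{0,1\}^2$ must actually occur, since otherwise $\sub(\tup{u},\tup{v})$ would equal one of $\neg(\tup{v})$, $\tup{u}\oplus\tup{v}$, $\tup{0}^r$, $\tup{u}$ and hence lie in $\rho'$. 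With $\tup{u}=(0,0,1,1)$, $\tup{v}=(0,1,0,1)$ a half-page of $\oplus$/$\neg$ arithmetic then shows $\rho'=A_4$.

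Your coding-theoretic approach --- passing to the de-duplicated linear code, reading off constraints from the dual, and running a two-case arity-reducing induction on the minimum dual weight --- reaches the same conclusion with considerably more machinery. The paper's minimality trick collapses your entire induction in one stroke: the witnesses $\tup{u},\tup{v}$ already pin the arity to $4$, so no separate treatment of the $2k\geq 6$ and $2k=4$ regimes is needed. What your approach buys is a cleaner structural explanation of \emph{why} $A_4$ appears (it is the unique $4$-ary self-complementary linear code whose dual avoids weight-$2$ vectors), and your $2k=4$ partition construction is a nice explicit witness. But for the purposes of this lemma the paper's direct argument is shorter and avoids the dual-code calculations entirely.
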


\begin{proof}
Let $\rho'$ be a smallest-arity relation in $\clSingle{\rho}$ that is not
invariant under $\sub$, and denote its arity by $r$. As $\tup{0}^r \in \rho'$
and $\mnrt(\tup{x}, \tup{y}, \tup{0}^r) = \tup{x} \oplus \tup{y}$, relation
$\rho'$ is closed under the $\oplus$ operation. Let $\tup{u}, \tup{v} \in \rho'$
be $r$-tuples such that $\sub(\tup{u}, \tup{v}) \not\in \rho'$. There are no
distinct coordinates $i,j$ where $u_i=u_j$ and $v_i=v_j$, otherwise we could
identify them to obtain an $(r-1)$-ary relation not invariant under $\sub$. For
any $a, b \in \{0, 1\}$, there is a coordinate $i$ where $u_i = a$ and $v_i =
b$, otherwise $\sub(\tup{u}, \tup{v})$ would be equal to $\neg(\tup{v})$,
$\tup{u} \oplus \tup{v}$, $\tup{0}^r$, or $\tup{u}$ respectively, which would
imply $\sub(\tup{u}, \tup{v}) \in \rho'$. Therefore, $r = 4$, and we may assume
without loss of generality that $\tup{u} = (0,0,1,1)$, $\tup{v} = (0,1,0,1)$. As
\begin{align*}
\sub(\tup{u}, \tup{v})
&= (0,0,1,0) \\
&= (0,0,0,1) \oplus \tup{u} \\
&= (0,1,0,0) \oplus (\tup{u} \oplus \tup{v}) \\
&= (1,0,0,0) \oplus \neg(\tup{v}) \,,
\end{align*}
it holds $(0,0,0,1), (0,0,1,0), (0,1,0,0), (1,0,0,0) \not\in \rho'$.
Since $\rho'$ is closed under $\neg$, we have $\rho' = A_4$.
\end{proof}

\begin{lemma}
\label{lmNotNegEDS}
Let $\rho$ be a relation invariant under $c_0$ but not under $\sub$.
If $\rho$ is invariant under $\mnrt$, then $A_3 \in \clSingle{\rho, \rho_0}$. If
$\rho$ is invariant under $\min$ or $\max$, then $\rho_\leq \in \clSingle{\rho,
\rho_0}$.
\end{lemma}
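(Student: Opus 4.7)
The plan is to treat both parts uniformly by picking a relation $\rho' \in \clSingle{\rho, \rho_0}$ of minimum arity $r$ that fails to be invariant under $\sub$, together with witnessing tuples $\tup{u}, \tup{v} \in \rho'$ such that $\sub(\tup{u}, \tup{v}) \notin \rho'$. Since closure operations preserve polymorphisms and $\rho_0$ trivially admits $c_0$, $\mnrt$, $\min$, and $\max$, the relation $\rho'$ inherits the polymorphism assumed of $\rho$. In particular $\tup{0}^r \in \rho'$, and in the $\mnrt$ case $\rho'$ is closed under $\oplus$ via the identity $\mnrt(\tup{x}, \tup{y}, \tup{0}^r) = \tup{x} \oplus \tup{y}$.

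My first step would be to constrain the column pairs $(u_i, v_i)$ using minimality. No pair can be $(0, 0)$, since $0$-pinning (legal because $\rho_0$ is in the language) such a coordinate yields a smaller counterexample; and no two pairs can coincide, since a coordinate mapping identifying them would too, both of the reduced tuples and of $\sub(\tup{u}, \tup{v})$ (which takes equal values on identified columns). Hence $r \leq 3$ and each pair lies in $\{(0, 1), (1, 0), (1, 1)\}$; a direct check rules out $r = 1$.

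In the $\mnrt$ case, I would next observe that for every $r = 2$ column configuration, $\sub(\tup{u}, \tup{v})$ coincides with one of $\tup{0}^r$, $\tup{u}$, $\tup{v}$, $\tup{u} \oplus \tup{v}$, all already in $\rho'$. Hence $r = 3$ and, up to coordinate permutation, $\tup{u} = (0, 1, 1)$, $\tup{v} = (1, 0, 1)$, with $\sub(\tup{u}, \tup{v}) = (0, 1, 0) \notin \rho'$. Closure under $\oplus$ then forces $A_3 = \{(0,0,0), (0,1,1), (1,0,1), (1,1,0)\} \subseteq \rho'$; and each of the three remaining tuples $(1, 0, 0)$, $(0, 0, 1)$, $(1, 1, 1)$, if in $\rho'$, would combine via $\oplus$ with a tuple of $A_3$ to produce the forbidden $(0, 1, 0)$. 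Thus $\rho' = A_3$.

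In the $\min$ and $\max$ cases, the $r = 2$ subcases force $\rho' \in \{\rho_\leq, \rho_\geq\}$, whence a coordinate swap produces $\rho_\leq$. For $r = 3$, again $\tup{u} = (0, 1, 1)$ and $\tup{v} = (1, 0, 1)$. In the $\min$ case, $0$-pinning the first coordinate of $\rho'$ gives a binary relation containing $(0, 0)$, $(0, 1)$ (from $\min(\tup{u}, \tup{v}) = (0, 0, 1)$), and $(1, 1)$ (from $\tup{u}$), but not $(1, 0)$, hence equal to $\rho_\leq$; this contradicts minimality. In the $\max$ case, the coordinate mapping $(a, b) \mapsto (a, b, a)$ yields a binary relation containing $(0, 0)$, $(1, 0)$ (from $\tup{v}$), and $(1, 1)$ (from $\max(\tup{u}, \tup{v}) = (1, 1, 1)$), but missing $(0, 1)$ because $(0, 1, 0) \notin \rho'$, hence equal to $\rho_\geq$; this again contradicts minimality. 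The principal obstacle I anticipate is choosing the correct asymmetric reduction at $r = 3$ in the $\min$ and $\max$ subcases: lacking the negation polymorphism available in Lemma~\ref{lmNegMnrtNotEDS}, the two subcases are no longer dual, and each requires its own combination of $0$-pinning and coordinate identification to collapse the arity-$3$ witness into a binary relation of the desired form.
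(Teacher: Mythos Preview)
Your proof is correct and follows essentially the same approach as the paper's: pick a minimum-arity $\rho' \in \clSingle{\rho,\rho_0}$ violating $\sub$, constrain the column pairs $(u_i,v_i)$ to force $r \in \{2,3\}$, and then dispatch the cases. The only organizational difference is that you branch on the polymorphism first and on $r$ second, whereas the paper branches on $r$ first; the individual arguments (the $\oplus$-closure analysis for $\mnrt$, the $0$-pin of the first coordinate for $\min$, and the identification of the first and third coordinates for $\max$) are the same. Your treatment of the $r=2$ subcase is slightly terser than the paper's---it explicitly notes that the column pairs $(1,0)$ and $(1,1)$ must both occur, which immediately pins down $\rho'$---but your implicit case check over the three possible column-pair sets reaches the same conclusion.
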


\begin{proof}
Let $\rho'$ be a smallest-arity relation in $\clSingle{\rho, \rho_0}$ that is
not invariant under $\sub$, and denote its arity by $r$. Let $\tup{u}, \tup{v}
\in \rho'$ be $r$-tuples such that $\sub(\tup{u}, \tup{v}) \not\in \rho'$. There
are no distinct coordinates $i,j$ where $u_i=u_j$ and $v_i=v_j$, otherwise we
could identify them to obtain an $(r-1)$-ary relation not invariant under
$\sub$. For any $b \in \{0, 1\}$, there is a coordinate $i$ where $u_i = 1$ and
$v_i = b$, otherwise $\sub(\tup{u}, \tup{v})$ would be equal to $\tup{0}^r$ or
$\tup{u}$ respectively, which would imply $\sub(\tup{u}, \tup{v}) \in \rho'$.
However, there is no coordinate $i$ where $u_i=v_i=0$, otherwise we could obtain
an $(r-1)$-ary relation not invariant under $\sub$ by $0$-pinning $\rho'$ at
coordinate $i$. Therefore, $r = 2$ or $r = 3$. If $r = 2$, we have $\rho_\leq
\in \clSingle{\rho, \rho_0}$, and $\rho$ is not invariant under $\mnrt$ (as
neither is $\rho_\leq$).

If $r = 3$, we may assume without loss of generality that $\tup{u} = (0,1,1)$
and $\tup{v} = (1,0,1)$. Relation $\rho$ is not invariant under $\min$,
otherwise it would hold $\min(\tup{u}, \tup{v}) = (0,0,1) \in \rho'$ and we
could obtain a binary relation not invariant under $\sub$ by $0$-pinning $\rho'$
at the first coordinate. Similarly, relation $\rho$ is not invariant under
$\max$, otherwise it would hold $\max(\tup{u}, \tup{v}) = (1,1,1) \in \rho'$ and
we could obtain a binary relation not invariant under $\sub$ by identifying the
first and third coordinate. Finally, assume that relation $\rho$ is invariant
under $\mnrt$. Then $\rho'$ is also closed under the $\oplus$ operation, as
$\mnrt(\tup{x}, \tup{y}, \tup{0}^r) = \tup{x} \oplus \tup{y}$, and we have
$\tup{u} \oplus \tup{v} = (1,1,0) \in \rho'$. Since $\sub(\tup{u}, \tup{v}) =
(0,1,0) = (0,0,1) \oplus \tup{u} = (1,1,1) \oplus \tup{v} = (1,0,0) \oplus
(\tup{u} \oplus \tup{v})$, it holds $(0,0,1), (1,1,1), (1,0,0) \not\in \rho'$,
and therefore $\rho' = A_3$.
\end{proof}

\begin{lemma}
\label{lmReductionLeq}

If $\rho_\leq \in \Gamma$, then $\VCSP(\Gamma \cup \mathcal{C}_D) \reduces
\VCSPs(\Gamma)$.

\end{lemma}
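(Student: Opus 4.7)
The plan is to handle this in two steps. First, apply Lemma~\ref{lmReductionVCSPtoSurjective} to reduce $\VCSP(\Gamma\cup\mathcal{C}_D)$ to $\VCSPs(\Gamma\cup\mathcal{C}_D)$. It then suffices to show $\VCSPs(\Gamma\cup\mathcal{C}_D) \reduces \VCSPs(\Gamma)$, i.e., to simulate the constants $\rho_0$ and $\rho_1$ using $\rho_\leq$ in the surjective setting. The key observation is that $\rho_\leq$ together with surjectivity suffices to pin down a variable: if a variable $u$ is forced to be $\leq$ every other variable, then surjectivity forces $u=0$, and analogously a variable $v$ that upper-bounds all others must equal $1$.

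Concretely, given an instance $I=(V,D,\phi_I)$ of $\VCSPs(\Gamma\cup\mathcal{C}_D)$, I would iterate over all $O(|V|^2)$ ordered pairs $(u,v)$ of distinct variables in $V$ (analogously to Lemma~\ref{lmReductionSurjectiveToVCSP}). For each pair, build an instance $I'_{u,v}$ of $\VCSPs(\Gamma)$ with variable set $V$ by (i) keeping all non-constant constraints of $I$; (ii) replacing every constraint $\rho_0(x)$ by $\rho_\leq(x,u)$ and every $\rho_1(x)$ by $\rho_\leq(v,x)$; and (iii) adding $\rho_\leq(u,y)$ and $\rho_\leq(y,v)$ for every $y \in V$ in order to force $u$ to be a minimum and $v$ a maximum variable. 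Return a best-valued solution over the $O(|V|^2)$ choices of pair.

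For correctness, any surjective feasible solution to $I'_{u,v}$ must by construction satisfy $u=0$ and $v=1$; this in turn pins every $\rho_0$-constrained variable to $0$ and every $\rho_1$-constrained variable to $1$, making it feasible for $I$ of the same value. Conversely, any feasible (hence surjective) solution $s$ to $I$ yields a feasible solution to $I'_{u,v}$ of the same value whenever $s(u)=0$ and $s(v)=1$, and such $u,v$ always exist by surjectivity of $s$. Some pairs $(u,v)$ may leave $I'_{u,v}$ infeasible (for instance, if $u$ is itself $\rho_1$-pinned in $I$), but these merely contribute $\infty$ to the minimisation and do not affect correctness. There is no real obstacle here; the conceptual content is the single observation that $\rho_\leq$ together with global surjectivity gives a canonical way of recovering the constants on a Boolean domain.
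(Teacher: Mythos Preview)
Your argument is correct and rests on the same observation as the paper's: in any surjective feasible assignment, a variable that $\rho_\leq$-lower-bounds all others must be $0$ and one that upper-bounds all others must be $1$, after which $\rho_\leq(x,\cdot)$ and $\rho_\leq(\cdot,x)$ simulate $\rho_0(x)$ and $\rho_1(x)$. The only difference is in packaging. You factor through Lemma~\ref{lmReductionVCSPtoSurjective} and then run a Turing reduction over all $O(|V|^2)$ choices of existing variables $(u,v)$; the paper instead introduces two \emph{fresh} variables $y_0,y_1$, imposes $\rho_\leq(y_0,x)$ and $\rho_\leq(x,y_1)$ for every $x\in V$, and replaces constants directly, obtaining a single many-one reduction from $\VCSP(\Gamma\cup\mathcal{C}_D)$ to $\VCSPs(\Gamma)$ with no iteration and no appeal to Lemma~\ref{lmReductionVCSPtoSurjective}. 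Both are valid; the paper's version is just a bit cleaner.
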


\begin{proof}

For a given instance of $\VCSP(\Gamma \cup \{\rho_0, \rho_1\})$ with variables
$V$, we construct an instance of $\VCSPs(\Gamma)$ as follows: We introduce new
variables $y_0$, $y_1$ and impose constraints $\rho_\leq(y_0, x)$, $\rho_\leq(x,
y_1)$ for all $x \in V$ to ensure that $y_0 = 0$, $y_1 = 1$ in any feasible
surjective assignment. Then we replace each constraint of the form $\rho_0(x)$
with $\rho_\leq(x, y_0)$ and each constraint of the form $\rho_1(x)$ with
$\rho_\leq(y_1, x)$. \qedhere

\end{proof}

\begin{lemma}
\label{lmA34sIntractable}

Languages $\{A_3, \gamma_0\}$ and $\{A_4, \gamma_=\}$ are both s-intractable.

\end{lemma}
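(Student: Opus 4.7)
I would prove both s-intractability claims by polynomial-time reductions from the NP-hard Minimum Distance problem~\cite{Vardy1997}: given a parity-check matrix $H\in\{0,1\}^{m\times n}$ and a positive integer $w$, decide whether some nonzero $\tup{c}\in\{0,1\}^n$ with $H\tup{c}=\tup{0}$ has Hamming weight at most $w$. The common idea is to use $A_r$ (which exactly expresses ``sum equals $0$ mod $2$'') to build up each parity-check row of $H$ by chaining through fresh auxiliary variables, and to let the $\gamma_0$ or $\gamma_=$ constraints count the number of $1$s, so that the minimum value among surjective solutions equals the minimum distance of the code.

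For $\{A_3,\gamma_0\}$, the key observation $A_3(a,a,b)\Leftrightarrow(b=0)$ lets us synthesise a \emph{hard constant} $z=0$ via the constraint $A_3(x_1,x_1,z)$ on a fresh variable $z$. Each parity-check row with support $\{j_1,\ldots,j_k\}$, $k\ge 2$, is then encoded by a chain $A_3(x_{j_1},x_{j_2},y_1),\,A_3(y_1,x_{j_3},y_2),\,\ldots,\,A_3(y_{k-2},x_{j_k},z)$, which forces $\sum_\ell x_{j_\ell}\equiv 0\pmod{2}$ (rows of weight $\le 1$ are easily preprocessed). Adding a unit-weight constraint $\gamma_0(x_j)$ for each $j$ makes the objective equal to the Hamming weight of $(x_1,\ldots,x_n)$. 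Since the auxiliaries are determined by the codeword bits together with $z=0$, an all-zero codeword yields the all-zero (non-surjective) assignment, so any surjective solution corresponds to a nonzero codeword, and the minimum surjective value equals the minimum distance of the code.

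For $\{A_4,\gamma_=\}$, no hard constant is available, so I would instead exploit the global negation symmetry: both $A_4$ and $\gamma_=$ are invariant under simultaneously negating all their arguments (for $A_4$ this is because $4\equiv 0\pmod{2}$). Hence for every surjective assignment $s$ the flipped assignment $\neg s$ is also surjective with the same value, so WLOG a designated reference variable $z$ takes value $0$ in some optimal surjective solution. I would introduce $z$ alongside codeword variables $x_1,\ldots,x_n$ and encode each parity-check row by chaining $A_4$ constraints through auxiliaries, padding with $z$ when the support length has the wrong parity, and using identities such as $A_4(u,v,y,y)\Leftrightarrow(u=v)$ or $A_4(u,y,y,z)\Leftrightarrow(u=z)$ to handle supports of size $\le 2$. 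With $z=0$ the chains exactly encode the parity equations, unit-weight $\gamma_=(x_j,z)$ measures the Hamming weight, and surjectivity again excludes the zero codeword, so the optimal surjective value equals the minimum distance.

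The main obstacle common to both reductions is simulating a reference zero without access to the constant $\rho_0$, which is forbidden in the surjective setting. In the $\{A_3,\gamma_0\}$ case this is obtained directly from the identity $A_3(a,a,b)\Leftrightarrow(b=0)$; in the $\{A_4,\gamma_=\}$ case, where no single constraint pins a variable, one must instead make the entire instance invariant under global negation and argue WLOG about $z$. The remaining bookkeeping---preprocessing short parity-check rows, chaining the $A_r$ constraints cleanly, and verifying that the zero codeword is the only non-surjective obstruction---is routine.
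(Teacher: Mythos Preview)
Your overall plan is correct and matches the paper's: reduce from Minimum Distance, encode parity checks by chaining $A_r$, measure Hamming weight with $\gamma_0$ or $\gamma_=$, and for $\{A_4,\gamma_=\}$ exploit the $\langle\neg\rangle$ symmetry to fix a reference variable. Two points are worth noting.

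First, for $\{A_3,\gamma_0\}$ the paper obtains the hard constant differently: it observes $\rho_0=\Opt(\gamma_0)\in\clSingle{A_3,\gamma_0}$ and invokes Lemma~\ref{lmSurjectiveClosureReduction}, then pins both ends of each chain with $\rho_0$. Your trick $A_3(a,a,z)\Leftrightarrow z{=}0$ is an equally valid alternative.

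Second, for $\{A_4,\gamma_=\}$ the paper does not redo the Minimum Distance encoding; it simply reduces $\VCSPs(\{A_3,\gamma_0\})$ to $\VCSPs(\{A_4,\gamma_=\})$ by introducing one variable $w$, replacing each $A_3(x,y,z)$ by $A_4(x,y,z,w)$ and each $\gamma_0(x)$ by $\gamma_=(x,w)$, and then using $\langle\neg\rangle$ to assume $w=0$. This is shorter and sidesteps any case analysis on support sizes. Your direct route also works, but as written it has a small gap: the gadgets $A_4(u,v,y,y)$ and $A_4(u,y,y,z)$ for supports of size $\le 2$ leave $y$ \emph{free}, so the zero codeword together with $z=0$ and $y=1$ would be a surjective feasible assignment of value~$0$, breaking the reduction. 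You must pad with the reference variable $z$ itself (e.g.\ $A_4(x_{j_1},x_{j_2},z,z)$ for weight~$2$, $A_4(x_{j_1},z,z,z)$ for weight~$1$), or preprocess away short rows as you did in the $A_3$ case, so that every auxiliary is determined by the codeword bits and $z$.
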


\begin{proof}

First we show a reduction from the optimisation variant of the Minimum Distance
problem, which is NP-hard \cite{Vardy1997}, to $\VCSPs(\{A_3, \gamma_0\})$. A
problem instance is given as an $m \times n$ matrix $H$ over the field $D = \{0,
1\}$, and the objective is to find a non-zero vector $\tup{x} = (x_1, \dots,
x_n) \in D^n$ satisfying $H \cdot \tup{x} = \tup{0}^m$ with the minimum weight
(i.e.\ $\sum_{i=1}^n x_i$).

Note that $\rho_0 = \Opt(\gamma_0)$, and therefore we may use relation $\rho_0$
as well (by Lemma~\ref{lmSurjectiveClosureReduction}). We construct a $\VCSPs$
instance $I$ as follows: Let $x_1, \dots, x_n$
be variables corresponding to the elements of the sought vector $\tup{x}$. The
requirement $H \cdot \tup{x} = \tup{0}^m$ can be seen as a system of $m$ linear
equations, each in the form $\bigoplus_{i=1}^k x_{a_i} = 0$ for a set $\{a_1,
\dots, a_k\} \subseteq [n]$ (the set may differ for each equation). We encode such an equation by introducing new
variables $y_0, \dots, y_k$ and imposing constraints $\rho_0(y_0)$,
$A_3(y_{i-1}, x_{a_i}, y_i)$ for all $i \in [k]$, and $\rho_0(y_k)$. These
ensure that each variable $y_j$ is assigned the value of the prefix sum
$\bigoplus_{i=1}^j x_{a_i}$, and that the total sum equals $0$. Finally, we
encode the objective function of the Minimum Distance problem by imposing
constraints $\gamma_0(x_1), \dots, \gamma_0(x_n)$.

Every vector $\tup{x} \in D^n$ satisfying $H \cdot \tup{x} = \tup{0}^m$
corresponds to a feasible assignment to $I$. If $\tup{x}$ is non-zero, the
corresponding assignment is surjective, as at least one of variables $x_1,
\dots, x_n$ gets label $1$ and, for every equation, variable $y_0$ gets label
$0$. Conversely, if a feasible assignment to $I$ is surjective, then it
corresponds to a non-zero vector $\tup{x}$ (labelling all variables $x_1, \dots,
x_n$ with $0$ implies that all the prefix sums $y_j$ equal $0$ as well). The
objective value of the assignment corresponding to a vector $\tup{x}$ equals the
weight of $\tup{x}$, and hence finding an optimal surjective solution to $I$
solves the Minimum Distance problem.

Finally, we show that $\{A_4, \gamma_=\}$ is s-intractable by a reduction from
$\VCSPs(\{A_3, \gamma_0\})$ to $\VCSPs(\{A_4, \gamma_=\})$. Given an instance
$I$, we construct an instance $I'$ by introducing a new variable $w$ and
replacing each constraint of the form $A_3(x,y,z)$ with $A_4(x,y,z,w)$ and each
constraint of the form $\gamma_0(x)$ with $\gamma_=(x,w)$. Any surjective
assignment to $I$ can be extended to a surjective assignment to $I'$ of the same
objective value by labelling $w$ with $0$. Conversely, consider a feasible
surjective assignment $s'$ to $I'$; we may assume $s'(w) = 0$ since language
$\{A_4, \gamma_=\}$ admits multimorphism $\mmorp{\neg}$. Restricting $s'$ to the
variables of $I$ gives us a surjective assignment to $I$ of the same objective
value. Note that if $s'$ assigns label $1$ to all the variables except $w$, its
restriction will not be surjective; however, such $s'$ violates constraints
$\rho_0(y_0)$ and thus is not feasible.\qedhere

\end{proof}

\begin{lemma}
\label{lmNotEDSsIntractable}

Let $\Gamma$ be a language such that it admits multimorphism $\mmorp{c_0}$ but
not polymorphism $\sub$. If $\Gamma \cup \mathcal{C}_D$ is intractable, then
$\VCSPs(\Gamma)$ is NP-hard.

\end{lemma}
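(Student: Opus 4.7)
The plan is to apply Corollary~\ref{corNotEDSLanguageReduction} to obtain a relation $\rho$ invariant under $c_0$ but not under $\sub$ such that $\VCSPs(\Gamma \cup \{\rho\}) \reduces \VCSPs(\Gamma)$, and then to show that $\VCSPs(\Gamma \cup \{\rho\})$ is NP-hard. Let $\Gamma'$ denote the closure of $\Gamma \cup \{\rho\}$; by Lemma~\ref{lmSurjectiveClosureReduction} it suffices to show $\VCSPs(\Gamma')$ is NP-hard. NP-hardness will come from one of three sources: (a)~Lemma~\ref{lmReductionLeq} combined with the assumed intractability of $\VCSP(\Gamma \cup \mathcal{C}_D)$, exploited whenever $\rho_\leq \in \Gamma'$; (b)~Lemma~\ref{lmA34sIntractable}, reducing from Minimum Distance when $\{A_3, \gamma_0\} \subseteq \Gamma'$ or $\{A_4, \gamma_=\} \subseteq \Gamma'$; or (c)~Theorem~\ref{thmClassificationSurjectiveCSP}, applied directly to a crisp s-intractable sub-language.

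The proof begins with two easy reductions. If every $\gamma \in \Gamma$ is crisp, then $\Gamma$ is a $\{0,\infty\}$-valued language admitting $c_0$; intractability of $\Gamma \cup \mathcal{C}_D$ together with Theorem~\ref{thmClassificationVCSP} forces $\Gamma$ to admit none of $\min$, $\max$, $\mnrt$, $\mjrt$ (constants trivialise the six non-constant multimorphisms, so any violation must come from $\Gamma$ itself), and Theorem~\ref{thmClassificationSurjectiveCSP} yields s-intractability. From now on I fix a non-crisp $\gamma^* \in \Gamma$, which admits $\mmorp{c_0}$ and will feed Lemma~\ref{lmNonCrisp}. Furthermore, if $\rho$ itself admits none of $\min$, $\max$, $\mnrt$, $\mjrt$, then $\{\rho\} \subseteq \Gamma'$ is s-intractable by Theorem~\ref{thmClassificationSurjectiveCSP}, so source~(c) closes the argument.

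The remaining case is that $\rho$ admits at least one of $\min$, $\max$, $\mnrt$, $\mjrt$. Using the identity $\mjrt(x,y,0) = \min(x,y)$, $\mjrt$-invariance together with $c_0$-invariance collapses to $\min$-invariance, so $\rho$ admits one of $\min$, $\max$, $\mnrt$. If $\rho$ admits $\min$ or $\max$, then it cannot be $\neg$-invariant (else $\sub(x,y) = \min(x, \neg y)$ would yield $\sub$-invariance, contradicting the choice of $\rho$), so Lemma~\ref{lmNotNeg} places $\rho_0$ or $\rho_\leq$ into $\Gamma'$; Lemma~\ref{lmNotNegEDS} then upgrades any $\rho_0$ to $\rho_\leq$ using the $\min$- or $\max$-invariance, and either way (a) applies. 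If $\rho$ admits $\mnrt$, I split on $\neg$-invariance of $\rho$. If $\rho$ is not $\neg$-invariant, Lemma~\ref{lmNotNeg} gives $\rho_0 \in \Gamma'$ (the alternative $\rho_\leq$ falling under (a)), Lemma~\ref{lmNonCrisp} produces $\gamma_0 \in \Gamma'$ from $\gamma^*$ and $\rho_0$, and the $\mnrt$-branch of Lemma~\ref{lmNotNegEDS} yields $A_3 \in \Gamma'$, so (b) applies. If $\rho$ is $\neg$-invariant, I argue that $\Feas(\gamma^*)$ and $\Opt(\gamma^*)$ must both be $\neg$-invariant (otherwise Lemma~\ref{lmNotNeg} applied to them would already reduce us to an earlier case); Lemma~\ref{lmNonCrisp} then gives $\gamma_= \in \Gamma'$ and Lemma~\ref{lmNegMnrtNotEDS} gives $A_4 \in \Gamma'$, so (b) applies again.

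The main obstacle is keeping the case analysis exhaustive and closing each branch without leaving a Post-lattice hole. The cleanest move is recognising Theorem~\ref{thmClassificationSurjectiveCSP} as the safety net for the essentially unary clones above $\{c_0\}$, which avoids any ad hoc gadget construction for pathological $\rho$. The remaining clones are controlled by the two compact identities $\mjrt(x,y,0) = \min(x,y)$ and $\sub(x,y) = \min(x,\neg y)$: the first eliminates $\mjrt$-invariance as a distinct case, and the second rules out $\min$- or $\max$-invariance whenever $\neg$ is already present. Together with Lemmas~\ref{lmNotNeg}, \ref{lmNonCrisp}, \ref{lmNotNegEDS}, and~\ref{lmNegMnrtNotEDS}, this covers every relation $\rho$ that Corollary~\ref{corNotEDSLanguageReduction} could deliver.
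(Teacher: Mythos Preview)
Your argument is correct and uses the same toolkit as the paper (Lemmas~\ref{lmNotNeg}--\ref{lmA34sIntractable} and Theorems~\ref{thmClassificationVCSP},~\ref{thmClassificationSurjectiveCSP}), but the organisation differs in two places worth noting.

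First, the invocation of Corollary~\ref{corNotEDSLanguageReduction} is redundant here: the hypothesis already gives some $\gamma\in\Gamma$ with $\Feas(\gamma)$ not closed under $\sub$, and $\rho=\Feas(\gamma)\in\clGamma$ is exactly the relation you need. The paper therefore works directly in $\clGamma$ without importing an external $\rho$.

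Second, and more substantively, the paper runs the polymorphism case analysis on the entire crisp language $\Phi=\Feas(\Gamma)\cup\Opt(\Gamma)$ rather than on a single relation $\rho$. This buys a cleaner $\neg$-invariant branch: if $\Phi$ admits $\neg$ then $\Feas(\gamma^*)$ and $\Opt(\gamma^*)$ automatically do too, so Lemma~\ref{lmNonCrisp} yields $\gamma_=$ immediately. Your single-$\rho$ version must separately dispose of the possibility that $\rho$ admits $\neg$ while $\Feas(\gamma^*)$ or $\Opt(\gamma^*)$ does not; you do close this (via Lemma~\ref{lmNotNeg} applied to $\Feas(\gamma^*)$ or $\Opt(\gamma^*)$, falling back to the $\rho_0$/$\rho_\leq$ route), but the detour is avoidable. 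The trade-off is that your approach is more local and arguably easier to verify case by case, while the paper's $\Phi$-based analysis is shorter and avoids the nested fallback. Both are sound.
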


\begin{proof}

Let $\Phi = \Feas(\Gamma) \cup \Opt(\Gamma) \subseteq \clGamma$. Suppose that
$\Phi$ does not admit any of the following polymorphisms: $\min$, $\max$,
$\mnrt$, and $\mjrt$. By the classification of $\{0, \infty\}$-valued languages
(see Theorem~\ref{thmClassificationSurjectiveCSP}), $\Phi$ is s-intractable. Hence,
$\VCSPs(\Gamma)$ is NP-hard by Lemma~\ref{lmSurjectiveClosureReduction}. In the rest
of the proof, we assume that $\Phi$ admits at least one of polymorphisms $\min$,
$\max$, $\mnrt$, and $\mjrt$. Note that $\Phi$ admits polymorphism $c_0$ but not
polymorphism $\sub$. Since $\min(x, y) = \mjrt(x, y, 0)$, we may assume that
$\Phi$ admits at least one of polymorphisms $\min$, $\max$, and $\mnrt$.

Suppose that $\Phi$ admits polymorphism $\neg$. Then it does not admit $\min$,
as $\sub(x, y) = \min(x, \neg y)$, nor it admits $\max$, as $\min(x, y) = \neg
\max(\neg x, \neg y)$. Therefore, $\Phi$ admits polymorphism $\mnrt$. If
$\Gamma$ is crisp, then language $\Gamma \cup \{\rho_0, \rho_1\}$ admits
multimorphism $\mmorp{\mnrt, \mnrt, \mnrt}$ and thus is tractable by
Theorem~\ref{thmClassificationVCSP}, which contradicts an assumption of the lemma.
Hence, $\Gamma$ is not crisp. By Lemmas~\ref{lmNegMnrtNotEDS}~and~\ref{lmNonCrisp}, we have
$\{A_4, \gamma_=\} \subseteq \clGamma$. Therefore, $\VCSPs(\Gamma)$ is NP-hard
by Lemma~\ref{lmA34sIntractable}.

If $\Phi$ does not admit polymorphism $\neg$, then, by Lemma~\ref{lmNotNeg}, we have
$\rho_0 \in \clGamma$ or $\rho_\leq \in \clGamma$. If $\rho_\leq \in \clGamma$,
$\VCSPs(\Gamma)$ is NP-hard by Lemma~\ref{lmReductionLeq} and we are done; in the
rest of the proof we assume that $\rho_\leq \not\in \clGamma$ and hence $\rho_0
\in \clGamma$. If $\Phi$ admits polymorphism $\min$ or $\max$, we get $\rho_\leq
\in \clGamma$ by Lemma~\ref{lmNotNegEDS}, which is a contradiction. Therefore, $\Phi$
admits $\mnrt$, and thus $\Gamma$ is not crisp (by the same argument as in the
previous paragraph). By Lemmas~\ref{lmNotNegEDS}~and~\ref{lmNonCrisp}, we have $\{A_3,
\gamma_0\} \subseteq \clGamma$. Therefore, $\VCSPs(\Gamma)$ is NP-hard by
Lemma~\ref{lmA34sIntractable}. \qedhere

\end{proof}

\begin{theorem}
\label{thmSurjectiveHardness}

Let $\Gamma$ be a language such that it is not EDS, $\neg(\Gamma)$ is not EDS,
and $\Gamma$ does not admit any of the following multimorphisms: $\mmorp{\min,
\min}$, $\mmorp{\max, \max}$, $\mmorp{\min, \max}$, $\mmorp{\mnrt, \mnrt,
\mnrt}$, $\mmorp{\mjrt, \mjrt, \mjrt}$, $\mmorp{\mjrt, \mjrt, \mnrt}$. Then
$\VCSPs(\Gamma)$ is NP-hard.

\end{theorem}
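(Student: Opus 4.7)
The plan is to combine the lemmas already established in this section into a single branching argument, with Lemma~\ref{lmNotEDSsIntractable} serving as the workhorse. That lemma already proves $\VCSPs(\Gamma)$ is NP-hard provided $\Gamma$ admits multimorphism $\mmorp{c_0}$ but not polymorphism $\sub$, and $\Gamma \cup \mathcal{C}_D$ is intractable. The task is therefore to extract these three hypotheses from the three standing assumptions of the theorem.

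First, I would dispose of the easy case in which $\Gamma$ admits none of the eight Boolean multimorphisms $\mmorp{c_0}$, $\mmorp{c_1}$, and the six listed. Then Theorem~\ref{thmClassificationVCSP}, applied in its global form (Remark~\ref{remGlobalClassificationVCSP}), already gives that $\VCSP(\Gamma)$ is NP-hard, so Lemma~\ref{lmReductionVCSPtoSurjective} yields NP-hardness of $\VCSPs(\Gamma)$. Otherwise $\Gamma$ admits $\mmorp{c_0}$ or $\mmorp{c_1}$, and these are symmetric via $\neg$: the three hypotheses of the theorem (not EDS, $\neg(\Gamma)$ not EDS, and failure of the six multimorphisms, which map among themselves under $\neg$) are preserved when $\Gamma$ is replaced by $\neg(\Gamma)$. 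Since $\VCSPs(\neg(\Gamma))$ has the same complexity as $\VCSPs(\Gamma)$, I may assume without loss of generality that $\Gamma$ admits $\mmorp{c_0}$.

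With this in hand, Corollary~\ref{corNotEDSLanguageReduction} applied to $\Gamma$ (which admits $\mmorp{c_0}$ and is not EDS) produces a relation $\rho$ invariant under $c_0$ but not under $\sub$, together with the reduction $\VCSPs(\Gamma \cup \{\rho\}) \reduces \VCSPs(\Gamma)$. Set $\Gamma' = \Gamma \cup \{\rho\}$. Then $\Gamma'$ still admits $\mmorp{c_0}$ but now fails polymorphism $\sub$, witnessed by $\rho$. Moreover, $\Gamma' \cup \mathcal{C}_D$ admits none of the eight multimorphisms: $\mmorp{c_0}$ is ruled out by $\rho_1$, $\mmorp{c_1}$ by $\rho_0$, and the remaining six are already ruled out by $\Gamma$. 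Theorem~\ref{thmClassificationVCSP} therefore makes $\Gamma' \cup \mathcal{C}_D$ intractable, and Lemma~\ref{lmNotEDSsIntractable} applied to $\Gamma'$ yields NP-hardness of $\VCSPs(\Gamma')$, which transfers to $\VCSPs(\Gamma)$ through the reduction above.

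The main obstacles have been absorbed into the supporting lemmas rather than the final assembly. The most delicate point sits inside Corollary~\ref{corNotEDSLanguageReduction}: an infinite language that fails EDS need not contain a single weighted relation that fails EDS at a uniform rate, so one cannot simply pass to one offending relation. The $\alpha$-crisp rounding machinery of Lemma~\ref{lmAlphaCrispReduction} together with the pigeonholing over the interval scale in Lemma~\ref{lmNotEDSLanguageToRelation} is what extracts a bona fide non-$\sub$-closed relation of arity at most four. Similarly, Lemma~\ref{lmNotEDSsIntractable} shoulders the combinatorial burden of tracing failure of $\sub$ to one of the three hard sources (the $\{0,\infty\}$-valued classification of Theorem~\ref{thmClassificationSurjectiveCSP}, the $\rho_\leq$ gadget simulating constants via Lemma~\ref{lmReductionLeq}, or the Minimum Distance reductions of Lemma~\ref{lmA34sIntractable}). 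Given these ingredients, the theorem reduces to the short case analysis above.
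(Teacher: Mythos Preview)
Your proposal is correct and follows essentially the same route as the paper's proof: dispatch the case where neither $\mmorp{c_0}$ nor $\mmorp{c_1}$ holds via Theorem~\ref{thmClassificationVCSP} and Lemma~\ref{lmReductionVCSPtoSurjective}, reduce by $\neg$-symmetry to the $\mmorp{c_0}$ case, and then combine Corollary~\ref{corNotEDSLanguageReduction} with Lemma~\ref{lmNotEDSsIntractable}. You are simply more explicit than the paper about forming $\Gamma' = \Gamma \cup \{\rho\}$ and verifying each hypothesis of Lemma~\ref{lmNotEDSsIntractable} for $\Gamma'$, whereas the paper compresses this into a single sentence.
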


\begin{proof}

If $\Gamma$ does not admit at least one of multimorphisms $\mmorp{c_0}$ and
$\mmorp{c_1}$, it is intractable by Theorem~\ref{thmClassificationVCSP}, and hence
$\VCSPs(\Gamma)$ is NP-hard by Lemma~\ref{lmReductionVCSPtoSurjective}. Language
$\Gamma \cup \mathcal{C}_D$ is, by the same theorem, intractable. We may assume
that $\Gamma$ admits multimorphism $\mmorp{c_0}$; if it does not, we consider
$\neg(\Gamma)$ instead. By Corollary~\ref{corNotEDSLanguageReduction} and
Lemma~\ref{lmNotEDSsIntractable}, $\VCSPs(\Gamma)$ is NP-hard. \qedhere

\end{proof}

\section{Tractability of EDS languages}
\label{secSurjectiveTractability}

We prove that EDS languages are globally s-tractable by a reduction to a
generalised variant of the Min-Cut problem. The problem is defined in
Section~\ref{ssecDefinitionGMC}, its tractability is established in
Section~\ref{ssecTractabilityGMC}, and the reduction is stated in
Section~\ref{ssecReductionToGMC}.

\subsection{Generalised Min-Cut problem}
\label{ssecDefinitionGMC}

Let $V$ be a finite set. A \emph{set function} on $V$ is a function $\gamma: 2^V
\to \QnInfty$ with $\gamma(\emptyset) = 0$.

\begin{definition}

A set function $\gamma : 2^V \to \QnInfty$ is \emph{symmetric} if $\gamma(X) =
\gamma(V \setminus X)$ for all $X \subseteq V$; it is \emph{increasing} if
$\gamma(X) \leq \gamma(Y)$ for all $X \subseteq Y \subseteq V$; it is
\emph{superadditive} if
\begin{equation}
\gamma(X) + \gamma(Y) \leq \gamma(X \cup Y)
\end{equation}
for all disjoint $X, Y \subseteq V$; it is \emph{posimodular} if
\begin{equation}
\gamma(X) + \gamma(Y) \geq \gamma(X \setminus Y) + \gamma(Y \setminus X)
\end{equation}
for all $X, Y \subseteq V$; and it is \emph{submodular} if
\begin{equation}
\gamma(X) + \gamma(Y) \geq \gamma(X \cap Y) + \gamma(X \cup Y)
\end{equation}
for all $X, Y \subseteq V$.

\end{definition}

Note that any superadditive set function is also increasing, as for all $X
\subseteq Y \subseteq V$ it holds $\gamma(X) \leq \gamma(X) + \gamma(Y \setminus
X) \leq \gamma(Y)$ by superadditivity. In the case of symmetric set functions,
submodularity implies posimodularity, as
\begin{align}
\gamma(X) + \gamma(Y)
&=
\gamma(X) + \gamma(V \setminus Y) \\
&\geq
\gamma(X \cap (V \setminus Y)) + \gamma(X \cup (V \setminus Y)) \\
&=
\gamma(X \setminus Y) + \gamma(V \setminus (Y \setminus X)) \\
&=
\gamma(X \setminus Y) + \gamma(Y \setminus X) \,.
\end{align}
and, similarly, posimodularity implies submodularity.

\begin{example}
\label{exSetFunctions}
Let $V$ be a finite set and $T \subseteq V$ a non-empty subset. We define a set
function $\gamma$ on $V$ by $\gamma(X) = 1$ if $T \subseteq X$ and $\gamma(X) =
0$ otherwise. Intuitively, this corresponds to a soft NAND constraint imposed on
variables $T$. The set function $\gamma$ is superadditive, and hence also
increasing.
\end{example}

We now formally define the Min-Cut problem.

\begin{definition}
\label{defMinCut}

An instance of the \emph{Min-Cut} (MC) problem is given by an undirected graph
$G = (V, E)$ with edge weights $w: E \to \QnInfty$. The objective function $g$
of the MC problem is a set function on $V$ defined by
\begin{equation}
g(X)
=
\sum_{|X\cap\{u,v\}|=1} w(u,v) \,.
\end{equation}
Function $g$ is a well-known example of a submodular function. Since it is
symmetric, it is also posimodular.

A \emph{solution} to the MC problem is a set $X$ such that $\emptyset \subsetneq
X \subsetneq V$. Note that a cut $(X, V \setminus X)$ corresponds to two
solutions, namely $X$ and $V \setminus X$. An \emph{optimal} solution is a
solution with the minimum objective value among all solutions. A \emph{minimal}
optimal solution is an optimal solution with no proper subset being an optimal
solution.

\end{definition}

Note that any two different minimal optimal solutions $X, Y$ must be disjoint,
otherwise $X \setminus Y$ or $Y \setminus X$ would be a smaller optimal solution
(by the posimodularity of $g$).

Although the definition allows infinite weight edges, those can be easily
eliminated by identifying their endpoints, and so we may assume that all edge
weights are finite. Edges with weight $0$ are conventionally disregarded.

Finally, we define the Generalised Min-Cut problem, which further generalises
the problem introduced in \cite{Uppman12:cp}.

\begin{definition}
\label{defGMC}

An instance $\gmcinst$ of the \emph{Generalised Min-Cut} (GMC) problem is given
by an undirected graph $G = (V, E)$ with edge weights $w: E \to \QnInfty$, and
an oracle defining a superadditive set function $f$ on $V$. The objective
function the GMC problem is a set function on $V$ defined by $\gmcinst(X) = f(X)
+ g(X)$, where $g$ is the objective function of the underlying Min-Cut problem
on $G$.

A \emph{solution} to the GMC problem is a set $X$ such that $\emptyset
\subsetneq X \subsetneq V$. An \emph{optimal} solution is a solution with the
minimum objective value among all solutions. We denote this minimum objective
value by $\lambda$. For any $\alpha \geq 1$, an \emph{$\alpha$-optimal} solution
is a solution $X$ such that $\gmcinst(X) \leq \alpha \lambda$.

\end{definition}

We show in Theorem~\ref{thmTractabilityGMC} that, in the case of $0 < \lambda < \infty$
and a fixed $\alpha \geq 1$, there are only polynomially many $\alpha$-optimal
solutions and they can be found in polynomial time.

\subsection{Tractability of the Generalised Min-Cut problem}
\label{ssecTractabilityGMC}

In this section, we present a polynomial-time algorithm that solves the
Generalised Min-Cut problem. We assume that $w(u,v) \in \Qp$ for all edges
$(u,v)$.

\begin{lemma}
\label{lmLambdaZeroInfty}
There is a polynomial-time algorithm that, given an instance $\gmcinst$ of the
GMC problem, either finds a solution $X$ with $\gmcinst(X) = \lambda = 0$, or
determines that $\lambda = \infty$, or determines that $0 < \lambda < \infty$.
\end{lemma}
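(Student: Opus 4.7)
The plan is to exploit two consequences of the superadditivity of $f$: it is non-decreasing (observed right after the definition), and iterating superadditivity on disjoint singletons gives $f(X) \geq \sum_{v \in X} f(\{v\})$ for every $X \subseteq V$; more generally, whenever $X = C_{i_1} \cup \cdots \cup C_{i_m}$ is a disjoint union, $f(X) \geq \sum_{j=1}^{m} f(C_{i_j})$. These bounds let us replace a search over $2^{|V|}$ subsets by polynomially many oracle queries on small, canonical witnesses.

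To decide whether $\lambda = 0$, first compute the connected components $C_1, \ldots, C_k$ of $G$ in linear time. If $k = 1$, every solution $X$ has at least one crossing edge of strictly positive weight, so $g(X) > 0$ and hence $\lambda > 0$. If $k \geq 2$, query $f(C_i)$ for each $i$. If some $f(C_i) = 0$, then $X = C_i$ is a proper non-empty subset of $V$ with $g(X) = 0$, so $\gmcinst(X) = 0$ and we return this $X$ as a witness of $\lambda = 0$. Otherwise, every solution $X$ with $g(X) = 0$ is a non-empty proper union of components, so $f(X) \geq \sum_{C_i \subseteq X} f(C_i) > 0$ by superadditivity; every other solution already has $g(X) > 0$. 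In this branch $\lambda > 0$.

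Once $\lambda > 0$ has been established, we distinguish $\lambda < \infty$ from $\lambda = \infty$ by querying $f(\{v\})$ for each $v \in V$. (We may assume $|V| \geq 2$, since otherwise there is no solution at all.) If $f(\{v\}) < \infty$ for some $v$, then $X = \{v\}$ is a solution with $\gmcinst(X) = f(\{v\}) + g(\{v\}) < \infty$, so $\lambda < \infty$. If $f(\{v\}) = \infty$ for every $v$, then iterated superadditivity forces $f(X) = \infty$ for every non-empty $X$, so every solution has value $\infty$ and $\lambda = \infty$. The whole procedure makes $O(|V|)$ oracle queries, so it runs in polynomial time. I do not expect a genuine obstacle here: the only conceptual step is recognising that monotonicity and superadditivity of $f$ allow each of the three outcomes to be certified by a single small subset—a connected component for $\lambda = 0$, a singleton for $\lambda < \infty$, or the simultaneous infinity of all singletons for $\lambda = \infty$—collapsing an exponential search to a constant-sized decision tree over polynomially many oracle answers.
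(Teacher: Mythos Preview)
Your proof is correct and follows essentially the same approach as the paper: test connected components to detect $\lambda = 0$ and singletons to distinguish $\lambda = \infty$ from $0 < \lambda < \infty$. The only cosmetic difference is that the paper phrases both reductions via the monotonicity of $f$ (derived from superadditivity), whereas you invoke superadditivity directly; the underlying argument is the same.
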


\begin{proof}
A solution $X$ with $\gmcinst(X) = f(X) + g(X) = 0$ satisfies $f(X) = g(X) = 0$,
and hence it does not cut any edge. Since the set function $f$ is increasing,
we may assume that $X$ is a single connected component. The algorithm simply
tries each connected component as a solution, which takes a linear number of
queries to the oracle for $f$.

The case of $\lambda = \infty$ occurs only if $f(X) = \infty$ for all solutions
$X$. Since $f$ is increasing, it is sufficient to check all solutions of size
$1$.
\end{proof}

In view of Lemma~\ref{lmLambdaZeroInfty}, we can assume that $0<\lambda<\infty$. Our
goal is to show that, for a given $\alpha\geq 1$, all $\alpha$-optimal solutions
to a GMC instance can be found in polynomial time. This is proved in
Theorem~\ref{thmTractabilityGMC}; before that we need to prove several auxiliary lemmas
on properties of the MC and GMC problems.

\begin{lemma}
\label{lmSubsetReduction}

For any instance $\gmcinst$ of the GMC problem on a graph $G=(V,E)$ and any non-empty
set $V' \subseteq V$, there is an instance $\gmcinst'$ on the induced subgraph $G[V']$
that preserves the objective value of all solutions $X \subsetneq V'$. In
particular, any $\alpha$-optimal solution $X$ of $\gmcinst$ such that $X \subsetneq V'$
is $\alpha$-optimal for $\gmcinst'$ as well.

\end{lemma}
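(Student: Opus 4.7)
The plan is to construct $\gmcinst'$ by restricting the underlying graph to the induced subgraph $G[V']$ while absorbing the contribution of cut edges leaving $V'$ into a modified superadditive set function. Specifically, let $E' = E \cap \binom{V'}{2}$ and retain the restricted edge weights on $E'$. For each $X \subseteq V'$, let $h(X) = \sum_{u \in X} \sum_{\{u,v\} \in E,\, v \in V \setminus V'} w(u,v)$ be the total weight of edges going from $X$ to $V \setminus V'$. I would then define the set function of $\gmcinst'$ as $f'(X) = f(X) + h(X)$ for $X \subseteq V'$ and check that this provides the required oracle (it is computable in polynomial time from the oracle for $f$ and the graph).

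The first thing to verify is that $f'$ is admissible as a superadditive set function on $V'$: clearly $f'(\emptyset) = 0$ and $f' \geq 0$ (since $f \geq 0$ and weights are non-negative), and $h$ is modular because it is a sum over the singletons of $X$, i.e.\ $h(X \cup Y) = h(X) + h(Y)$ for disjoint $X, Y \subseteq V'$. Combining this with the superadditivity of $f$ gives $f'(X \cup Y) \geq f'(X) + f'(Y)$ for disjoint $X, Y \subseteq V'$, as required.

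Next I would check the value-preservation identity. Writing $g'$ for the cut function of the restricted graph $G[V']$, every edge $\{u,v\} \in E$ is either internal to $V'$ (so it contributes to $g'(X)$ under the same condition as to $g(X)$), crosses the boundary (so with $X \subseteq V'$ it contributes to $g(X)$ exactly when its $V'$-endpoint lies in $X$, which is precisely what $h(X)$ counts), or lies entirely outside $V'$ (and contributes to neither). Hence $g(X) = g'(X) + h(X)$ for all $X \subseteq V'$, and so for any solution $X$ with $\emptyset \subsetneq X \subsetneq V'$ we obtain
\begin{equation}
\gmcinst'(X) = f'(X) + g'(X) = f(X) + h(X) + g'(X) = f(X) + g(X) = \gmcinst(X) \,.
\end{equation}

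For the ``in particular'' claim, let $\lambda$ and $\lambda'$ denote the respective minimum objective values. Every solution of $\gmcinst'$ is a set $Y$ with $\emptyset \subsetneq Y \subsetneq V' \subseteq V$, hence also a solution of $\gmcinst$, and by the identity above $\gmcinst(Y) = \gmcinst'(Y)$; therefore $\lambda \leq \lambda'$. If $X \subsetneq V'$ is $\alpha$-optimal for $\gmcinst$, then $\gmcinst'(X) = \gmcinst(X) \leq \alpha\lambda \leq \alpha\lambda'$, so $X$ is $\alpha$-optimal for $\gmcinst'$. The whole argument is essentially bookkeeping; the only point that needs care is the choice of $h$ so that it remains modular and hence does not destroy the superadditivity of $f$, so I would not expect any genuine obstacle.
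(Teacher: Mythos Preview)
Your proposal is correct and follows essentially the same approach as the paper: define $f'(X)=f(X)+\sum_{u\in X}\sum_{v\in V\setminus V'}w(u,v)$, observe that the added term is modular so $f'$ stays superadditive, and conclude value preservation and hence $\lambda\leq\lambda'$ for the $\alpha$-optimality claim. Your write-up is in fact more detailed than the paper's, which leaves the verification of superadditivity and the edge-by-edge accounting implicit.
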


\begin{proof}

Edges with exactly one endpoint in $V'$ need to be taken into account separately
because they do not appear in the induced subgraph. We accomplish that by
defining the new set function $f'$ by
\begin{equation}
f'(X)
=
f(X) + \sum_{u \in X} \sum_{v \in V \setminus V'} w(u,v)
\end{equation}
for all $X \subseteq V'$. By the construction, $f'$ is superadditive, and the
objective value $\gmcinst'(X)$ for any $X \subsetneq V'$ equals $\gmcinst(X)$.

Note that the minimum objective value for $\gmcinst'$ is greater than or equal to the
minimum objective value for $\gmcinst$. Therefore, any solution $X \subsetneq V'$ that
is $\alpha$-optimal for $\gmcinst$ is also $\alpha$-optimal for $\gmcinst'$.
\qedhere

\end{proof}

\begin{lemma}
\label{lmXoptimalYminimal}
Let $X$ be an optimal solution to an instance of the GMC problem over vertices
$V$ with $\lambda < \infty$, and $Y$ a minimal optimal solution to the
underlying MC problem. Then $X \subseteq Y$, $X \subseteq V\setminus Y$, or $X$
is an optimal solution to the underlying MC problem.
\end{lemma}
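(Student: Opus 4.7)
The plan is to partition $V$ into the four sets $A=X\cap Y$, $B=X\setminus Y$, $C=Y\setminus X$, $D=V\setminus(X\cup Y)$ and case-split on which of $A$, $B$, $C$ are empty. Write $\mu=g(Y)$ for the Min-Cut optimum. If $A=\emptyset$ then $X\subseteq V\setminus Y$ and if $B=\emptyset$ then $X\subseteq Y$, so in both situations the conclusion holds immediately. So I assume $A\neq\emptyset$ and $B\neq\emptyset$, and I split further on whether $C$ is empty.

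If $C=\emptyset$, i.e.\ $Y\subseteq X$, I test $Y$ as an alternative GMC solution: since $X=Y\sqcup B$, superadditivity of $f$ gives $f(X)\geq f(Y)+f(B)\geq f(Y)$, and the GMC-optimality of $X$ gives $f(X)+g(X)\leq f(Y)+g(Y)$. Combining these produces $g(X)\leq g(Y)=\mu$, and because $X$ is itself a valid MC solution, $g(X)\geq\mu$, whence $g(X)=\mu$ and $X$ is MC-optimal, delivering the third disjunct.

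The main obstacle is the remaining case in which $A$, $B$, $C$ are all non-empty; the plan is to show it cannot occur by deriving $f(A)<0$. Three ingredients are combined. First, $C$ is a valid MC solution (since $\emptyset\neq C\subsetneq Y\subsetneq V$) and it is a \emph{proper} subset of $Y$, so by minimality of $Y$ we get the strict inequality $g(C)>\mu$. Second, $B$ is a valid GMC solution (since $\emptyset\neq B\subseteq V\setminus Y\subsetneq V$), and the GMC-optimality of $X$ together with the superadditive decomposition $f(X)\geq f(A)+f(B)$ yields $g(B)\geq g(X)+(f(X)-f(B))\geq g(X)+f(A)$. Third, posimodularity of the cut function $g$ applied to $X,Y$ gives $g(X)+g(Y)\geq g(X\setminus Y)+g(Y\setminus X)=g(B)+g(C)$. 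Chaining these,
\[
g(X)+\mu \;\geq\; g(B)+g(C) \;\geq\; g(X)+f(A)+g(C) \;>\; g(X)+f(A)+\mu,
\]
so $f(A)<0$, which contradicts $f\geq 0$.

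The delicate part is spotting this particular triangle of inequalities: the bound $g(B)\geq g(X)+f(A)$ only surfaces once one notices that $B$ can be used as a GMC competitor of $X$ and that superadditivity controls $f(X)-f(B)$ from below; the strictness $g(C)>\mu$ is precisely what makes the minimality of $Y$ indispensable (a non-minimal optimum of $g$ would only give $\geq$, collapsing the contradiction); and posimodularity rather than submodularity is needed to convert the pair $(X,Y)$ into a comparison with $(B,C)$. All remaining verifications (finiteness of $f(A)$, $f(B)$, $f(X)$, and that the sets involved are valid (G)MC solutions) follow routinely from $\gmcinst(X)=\lambda<\infty$ and $\emptyset\subsetneq Y\subsetneq V$.
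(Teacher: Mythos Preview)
Your proof is correct and follows essentially the same route as the paper's. Both arguments split on whether $X\cap Y$, $X\setminus Y$, $Y\setminus X$ are empty, handle the case $Y\subseteq X$ by comparing $Y$ as a GMC competitor, and in the crossing case combine posimodularity of $g$ with the strict inequality $g(Y\setminus X)>g(Y)$ coming from minimality of $Y$. The only cosmetic difference is the packaging of the contradiction in the crossing case: the paper deduces $g(X)>g(X\setminus Y)$ from posimodularity and then $\gmcinst(X)>\gmcinst(X\setminus Y)$ using $f(X)\geq f(X\setminus Y)$, whereas you feed the optimality inequality $\gmcinst(B)\geq\gmcinst(X)$ in first and arrive at $f(A)<0$; these are algebraic rearrangements of the same three inequalities.
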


\begin{proof}
Assume that $X \not\subseteq Y$ and $X \not\subseteq V\setminus Y$. If $Y
\subseteq X$, we have $f(Y) \leq f(X)$ as $f$ is increasing, and hence
$f(Y)+g(Y) \leq f(X)+g(X) < \infty$. Therefore, $Y$ is optimal for the GMC
problem and $X$ is optimal for the MC problem. In the rest, we assume that $Y
\not\subseteq X$.

By the posimodularity of $g$ we have $g(X) + g(Y) \geq g(X \setminus Y) + g(Y
\setminus X)$. Since $Y \setminus X$ is a proper non-empty subset of $Y$, it
holds $g(Y \setminus X) > g(Y)$, and hence $g(X) > g(X \setminus Y)$. But then
$f(X)+g(X) > f(X \setminus Y)+g(X \setminus Y)$ as $\infty > f(X) \geq f(X
\setminus Y)$. Set $X \setminus Y$ is non-empty, and therefore contradicts the
optimality of $X$.
\end{proof}

The following lemma relates the number of optimal solutions and the number of
minimal optimal solutions to the MC problem. Note that this bound is tight for
(unweighted) paths and cycles with at most one path attached to each vertex.

\begin{lemma}
\label{lmMinimalOptimalBound}
For any instance of the MC problem on a connected graph with $n \geq 2$ vertices and
$p$ minimal optimal solutions, there are at most $p(p-1)+2(n-p)$ optimal
solutions.
\end{lemma}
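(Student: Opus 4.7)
The plan is to analyse the structure of optimal MC solutions via the minimal optimal solutions $Y_1,\dots,Y_p$ (pairwise disjoint, as noted immediately before the statement). Set $W=V\setminus\bigcup_i Y_i$ and $w=|W|\le n-p$. I would first show that every optimal $X$ decomposes as $X=Y_{I(X)}\cup U(X)$ with $\emptyset\subsetneq I(X)\subsetneq[p]$ and $U(X)\subseteq W$: the dichotomy $Y_i\subseteq X$ or $Y_i\cap X=\emptyset$ for each $i$ is obtained by posimodularity of $g$ together with the minimality of $Y_i$, essentially the argument of Lemma~\ref{lmXoptimalYminimal} specialised to the pure MC case. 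The conditions $I(X)\neq\emptyset$ and $I(X)\neq[p]$ follow because every optimal $X$ contains an inclusion-minimal optimal subset (necessarily some $Y_i$), and the same applies to $V\setminus X$, which is optimal by the symmetry of $g$.

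For each admissible $I$ let $T_I=\{U\subseteq W: Y_I\cup U\text{ is optimal}\}$. I would show $T_I$ is a chain under $\subseteq$: if $U,U'\in T_I$ were incomparable, then posimodularity of $g$ applied to $Y_I\cup U$ and $Y_I\cup U'$ would force both $U\setminus U'$ and $U'\setminus U$ to be non-empty optimal subsets of $W$, contradicting the already established fact that no optimal set lies inside $W$. A parallel submodular argument shows that for disjoint $I_1,I_2\subseteq[p]$ with $I_1\cup I_2\neq[p]$, the $W$-parts of any $X\in T_{I_1}$ and $X'\in T_{I_2}$ are disjoint. In particular, taking $|I_1|=|I_2|=1$ with $p\ge 3$, the tops $U_i^{\max}$ of the chains $T_{\{i\}}$ are pairwise disjoint subsets of $W$, and because the chain length is at most $|U_i^{\max}|+1$, one obtains
\begin{equation*}
\sum_{i\in[p]}|T_{\{i\}}| \leq \sum_{i\in[p]}(|U_i^{\max}|+1) \leq w+p.
\end{equation*}
The complement symmetry of the optimal solutions gives the analogous $\sum_{j}|T_{[p]\setminus\{j\}}|\le w+p$, while the case $p=2$ is handled by the direct complement bijection between $T_{\{1\}}$ and $T_{\{2\}}$, yielding at most $2(w+1)=p(p-1)+2w$ optimal solutions in total. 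Thus the optimal $X$ with $|I(X)|\in\{1,p-1\}$ contribute at most $2(w+p)$.

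The main obstacle is to bound the total contribution of \emph{intermediate} indices, those with $2\le|I(X)|\le p-2$, by $p(p-3)$; this regime is vacuous for $p\le 3$ but must be handled carefully for $p\ge 4$. Its extremal behaviour is realised by cycles with pendant paths (suns), for which every choice of two edges of the underlying cycle contributes exactly one intermediate optimal set. My proposed route is to invoke the cactus representation of minimum cuts of Dinits, Karzanov and Lomonosov to reduce the counting problem to pairs of edges on common cycles of the cactus, whose total is controlled in terms of the number of minimal min-cuts $p$. Adding all contributions gives
\begin{equation*}
2(w+p)+p(p-3)=p(p-1)+2w \leq p(p-1)+2(n-p),
\end{equation*}
yielding the claimed bound.
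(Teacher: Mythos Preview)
Your decomposition and chain arguments are correct and elegant: the dichotomy $Y_i\subseteq X$ or $Y_i\cap X=\emptyset$, the chain structure of each $T_I$, the disjointness of the tops $U_i^{\max}$ for $p\ge 3$, and the resulting bound $2(w+p)$ on the contribution of indices with $|I|\in\{1,p-1\}$ all go through exactly as you describe. This is a genuinely different route from the paper's, which proceeds by induction on $n$: either every optimal solution is a star, or (via \cite[Lemma~7.1.3]{frank2011connections}) the graph is a cycle, or there is a proper optimal solution not crossed by any other, and the associated cut $(V_1,V_2)$ lets one contract each side and recurse on two smaller graphs whose parameters combine additively.

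The gap in your proposal is precisely the step you flag as the main obstacle: bounding the intermediate contribution $\sum_{2\le |I|\le p-2}|T_I|$ by $p(p-3)$ via the cactus representation. You give no details for this, and the paper explicitly remarks (just before its proof) that applying the cactus representation directly yields only the weaker bound $p(p-1)+O(n)$ and that the authors \emph{do not know how to achieve the exact bound using it}. Your chain argument shows each nonempty $T_I$ has at most $|U_I^{\max}|-|U_I^{\min}|+1$ elements, but you have neither a bound on how many intermediate $I$ have $T_I\neq\emptyset$ nor any disjointness among their $W$-parts (the disjointness lemma you prove needs $I_1\cap I_2=\emptyset$, which fails for most pairs of intermediate indices). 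So as stated, the cactus step is a plan rather than an argument, and the paper's own experience suggests it will not deliver the exact constant without a substantially new idea. If you want to salvage the direct approach, you would need to extract from the cactus a much finer statement tying the number of intermediate min-cut ``shapes'' to $p$ specifically, not to $n$; alternatively, the paper's non-crossed-cut induction avoids this difficulty entirely.
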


We prove the lemma by induction on $n$, closely following the proof that
establishes the cactus representation of minimum cuts
in~\cite{frank2011connections}. We note that the cactus representation could be
applied directly to obtain a weaker bound of $p(p-1)+O(n)$ but we do not know
how to achieve the exact bound using it.

\begin{proof}

For $n=2$, the lemma holds as there are exactly two solutions and both are
minimal optimal. Assume $n\geq 3$. We denote the number of optimal solutions by
$s$. A solution $X$ is called a \emph{star} if $|X|=1$ or $|X|=n-1$, otherwise
it is called \emph{proper}.

First we consider the case where every optimal solution is a star. Let us denote
the minimum cuts by $(\{v_1\}, V\setminus\{v_1\}), \dots, (\{v_h\},
V\setminus\{v_h\})$. If $h=1$, then we have $s=p=2$ and the bound holds.
Otherwise, there are $2h$ optimal solutions but only $h$ of them are minimal
(i.e., $\{v_1\}, \dots, \{v_h\}$). Hence,
\begin{align}
p(p-1)+2(n-p)
&=
2h + (h-1)\cdot(h-2) -2 + 2(n-h) \\
&\geq
2h = s
\end{align}
as it holds $n \geq h \geq 2$ and $n\geq 3$.

From now on we assume that there is a proper optimal solution, and hence $n\geq
4$. We say that solutions $X, Y$ \emph{cross} if none of $X\setminus Y$,
$Y\setminus X$, $X\cap Y$, $V\setminus (X\cup Y)$ is empty. Note that only
proper solutions might cross. If every proper optimal solution is crossed by
some optimal solution, then the graph is a cycle with edges of equal weight
\cite[Lemma 7.1.3]{frank2011connections}.
In that case, there are $n(n-1)$ optimal solutions (all sets of contiguous
vertices except for $\emptyset$ and $V$) and $n$ minimal optimal solutions (all
singletons), and therefore the bound holds.

Finally, assume that there is a proper optimal solution that is not crossed by
any optimal solution, and denote the corresponding minimum cut by $(V_1, V_2)$.
For any optimal solution $X$, it must hold either $X \subseteq V_1$, $V_1
\subseteq X$, $X \subseteq V_2$, or $V_2 \subseteq X$. For $i \in \{1, 2\}$, let
$G_i$ be the result of shrinking $V_i$ into a new vertex $t_i$ so that the
weight of any edge $(t_i, v)$ for $v \in V\setminus V_i$ equals the sum of
weights of edges $(u, v)$ for $u \in V_i$. Denote by $n_i$, $p_i$, and $s_i$ the
number of vertices, minimal optimal solutions, and optimal solutions to $G_i$.
It holds $n = n_1 + n_2 - 2$. Consider any solution $X'$ of $G_i$: If $t_i
\not\in X'$, it corresponds to a solution $X = X'$ of the original graph $G$;
otherwise it corresponds to $X = X' \setminus \{t_i\} \cup V_i$. In both cases,
the objective values of $X'$ and $X$ in their respective problem instances are
equal. Therefore, any optimal solution $X$ of $G$ such that $X \subseteq V_2$ or
$V_1 \subseteq X$ corresponds to an optimal solution to $G_1$, and any optimal
solution to $G$ such that $X \subseteq V_1$ or $V_2 \subseteq X$ corresponds to
an optimal solution in $G_2$. Hence, $p = p_1 + p_2 - 2$ and $s = s_1 + s_2 -
2$, as only solutions $V_1$ and $V_2$ satisfy both conditions simultaneously. By
the inductive hypothesis, we get
\begin{align}
p(p-1)+2(n-p)
&= p_1(p_1-1)+2(n_1-p_1) + p_2(p_2-1)+2(n_2-p_2) \nonumber \\
&\phantom{=} +2(p_1-2)\cdot(p_2-2) - 2 \\
&\geq s_1+s_2-2 + 2(p_1-2)\cdot(p_2-2) \\
&\geq s
\end{align}
as it holds $p_1, p_2 \geq 2$.
\end{proof}

\begin{lemma}
\label{lmOptimalBound}

For any instance of the GMC problem on $n$ vertices with $0 < \lambda < \infty$,
the number of optimal solutions is at most $n(n-1)$. There is an algorithm that
finds all of them in polynomial time.

\end{lemma}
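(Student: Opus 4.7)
The plan is to prove both the $n(n-1)$ bound and the polynomial-time enumeration simultaneously by induction on $n$; the base case $n=2$ has only two solutions. For the inductive step, the main structural tool is Lemma~\ref{lmXoptimalYminimal}, which constrains how each GMC-optimum interacts with a fixed minimal MC-optimum.

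First I would set up the decomposition. By posimodularity of $g$, any two distinct minimal MC-optima are disjoint; let $Y_1,\dots,Y_p$ denote them and let $Y_0=V\setminus\bigcup_{i=1}^p Y_i$. A short observation shows $p\geq 2$: given any minimal MC-optimum $Y_1$, the complement $V\setminus Y_1$ is itself MC-optimal (by symmetry of $g$) and must contain a minimal MC-optimum, which is disjoint from $Y_1$ and therefore distinct. Applying Lemma~\ref{lmXoptimalYminimal} to each $Y_i$ in turn, any GMC-optimum $X$ is either (a) an MC-optimum, (b) a proper subset of some $Y_i$ with $i\geq 1$ (proper, because a minimal MC-optimum has no proper MC-optimum subset), or (c) a subset of $Y_0$.

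Next I would bound each class. Class (a) has at most $p(p-1)+2(n-p)$ elements by Lemma~\ref{lmMinimalOptimalBound}. For class (b), Lemma~\ref{lmSubsetReduction} produces a subinstance on $G[Y_i]$ whose optima (whenever its optimum value coincides with $\lambda$) are exactly the GMC-optima of the original instance lying properly inside $Y_i$; the inductive hypothesis contributes at most $|Y_i|(|Y_i|-1)$ per $i$. Class (c) splits into $X\subsetneq Y_0$, handled as in (b) and giving at most $|Y_0|(|Y_0|-1)$, plus at most one extra solution $X=Y_0$ (which is never MC-optimal, since every MC-optimum contains a minimal MC-optimum while $Y_0$ is disjoint from all of them). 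Summing yields the upper bound
\[
p(p-1)+2(n-p)+\sum_{i=1}^{p}|Y_i|(|Y_i|-1)+|Y_0|(|Y_0|-1)+\varepsilon,
\]
where $\varepsilon\in\{0,1\}$ indicates whether $Y_0$ is non-empty. A routine case analysis (splitting on $|Y_0|=0$ versus $|Y_0|\geq 1$) maximises the right-hand side by concentrating mass in a single block and then reduces to the arithmetic inequalities $p(n+2-p)\geq 2n$ when $|Y_0|=0$ and $p(n+1-p)\geq n+\tfrac{1}{2}$ when $|Y_0|\geq 1$; both hold throughout $p\in[2,n-1]$, while the boundary $p=n$ (which forces $|Y_0|=0$) attains the bound $n(n-1)$ with equality, matching the path/cycle extremal example from Lemma~\ref{lmMinimalOptimalBound}.

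Algorithmically, the same recursion runs in polynomial time. At each node I would enumerate all MC-optima in polynomial time by standard techniques (e.g., the cactus representation of minimum cuts~\cite{frank2011connections}), identify the minimal ones by discarding any MC-optimum that properly contains another, evaluate $\gmcinst$ on each via the oracle for $f$, then construct the subinstances on $G[Y_i]$ and $G[Y_0]$ using Lemma~\ref{lmSubsetReduction} and recurse; finally the algorithm returns the candidates of smallest total value. The recursion tree has polynomial size, because each node produces at most a polynomial number of candidates (by the counting bound just proved) and the depth is at most $n$. The main obstacle is the algebraic verification that the induction closes with exactly $n(n-1)$: the extra $+1$ from the possibility $X=Y_0$ and the bound $p\geq 2$ both matter, and one must check no distribution of $(p,|Y_0|,|Y_1|,\dots,|Y_p|)$ tips the sum above $n(n-1)$.
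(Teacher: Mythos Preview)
Your overall structure matches the paper's proof closely, but there is a genuine gap: you invoke Lemma~\ref{lmMinimalOptimalBound} to bound class~(a), yet that lemma is stated only for \emph{connected} graphs, and you never verify connectivity. When $G$ is disconnected the minimum cut value is~$0$, the minimal MC-optima $Y_1,\dots,Y_p$ are precisely the connected components, and the MC-optima are \emph{all} non-trivial unions of components---there are $2^p-2$ of them, so the bound $p(p-1)+2(n-p)$ can fail badly. Your algorithmic step ``enumerate all MC-optima in polynomial time'' breaks for the same reason.

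The missing ingredient is the one place where the hypotheses $\lambda>0$ and \emph{superadditivity} of $f$ (not merely monotonicity) are actually used. In the disconnected case $g(Y_i)=0$, so $0<\lambda\le f(Y_i)+g(Y_i)=f(Y_i)$ for every component $Y_i$. Superadditivity then gives $f(Y_{i_1}\cup\cdots\cup Y_{i_k})\ge k\lambda$, so no union of two or more components can be a GMC-optimum; among MC-optima only the $p$ single components $Y_i$ survive as candidates. With this observation class~(a) contributes at most $p\le p(p-1)+2(n-p)$ and your arithmetic goes through unchanged. The paper handles exactly this by splitting on whether $G$ is connected before applying Lemma~\ref{lmMinimalOptimalBound}; once you add that case distinction, your proof and the paper's coincide.
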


Note that the bound of $n(n-1)$ optimal solutions precisely matches the known
upper bound of $\binom{n}{2}$ for the number of minimum
cuts~\cite{karger1993global}; the bound is tight for cycles.

\begin{proof}
Let $t(n)$ denote the maximum number of optimal solutions for such instances on
$n$ vertices. We prove the bound by induction on $n$. If $n=1$, there are no
solutions and hence $t(1)=0$. For $n\geq 2$, let $Y_1, \dots, Y_p$ be the
minimal optimal solutions to the underlying MC problem. As there exists at least
one minimum cut and the minimal optimal solutions are all disjoint, it holds $2
\leq p \leq n$.

First, suppose that $\bigcup Y_i = V$. By Lemma~\ref{lmXoptimalYminimal}, any optimal
solution to the GMC problem is either a proper subset of some $Y_i$ or an
optimal solution to the underlying MC problem. Restricting solutions to a proper
subset of $Y_i$ is, by Lemma~\ref{lmSubsetReduction}, equivalent to considering a GMC
problem instance on vertices $Y_i$, and hence the number of such optimal
solutions is bounded by $t(|Y_i|) \leq |Y_i| \cdot (|Y_i|-1)$. Since it holds
$\sum |Y_i| = n$ and $|Y_i| \geq 1$ for all $i$, the sum $\sum |Y_i| \cdot
(|Y_i|-1)$ is maximised when $p-1$ of the sets $Y_i$ are singletons and the size
of the remaining one equals $n-p+1$. If the graph is connected, then, by
Lemma~\ref{lmMinimalOptimalBound}, there are at most $p(p-1)+2(n-p)$ optimal
solutions to the underlying MC problem. Adding these upper bounds we get
\begin{align}
&\hspace{1em} p(p-1)+2(n-p) + \sum_{i=1}^{p} |Y_i|\cdot(|Y_i|-1) \\
&\leq p(p-1)+2(n-p) + (p-1)\cdot 1 \cdot 0 + (n-p+1)\cdot(n-p) \\
&= n(n-1) -2(p-2)\cdot(n-p) \\
&\leq n(n-1) \,.
\end{align}

If the graph is disconnected, the sets $Y_1, \dots, Y_p$ are precisely its
connected components. The optimal solutions to the underlying MC problem are
precisely unions of connected components (with the exception of $\emptyset$ and
$V$), which means that there can be exponentially many of them. However, only
the sets $Y_1, \dots, Y_p$ themselves can be optimal solutions to the GMC
problem: We have $0 < \lambda \leq f(Y_i) + g(Y_i) = f(Y_i)$. Since $f$ is
superadditive, it holds
\begin{equation}
f(Y_{i_1} \cup \dots \cup Y_{i_k})
\geq
f(Y_{i_1}) + \dots + f(Y_{i_k})
\geq
k\lambda
\end{equation}
for any distinct $i_1, \dots, i_k$, and hence no union of two or more connected
components can be an optimal solution to the GMC problem. This gives us an upper
bound of $p \leq p(p-1)+2(n-p)$, and the rest follows as in the previous case.

Finally, suppose that $\bigcup Y_i \neq V$, and hence the graph is connected.
Let $Z = V \setminus \bigcup Y_i$. By Lemma~\ref{lmXoptimalYminimal}, any
optimal solution to the GMC problem is a proper subset of some $Y_i$, a proper
subset of $Z$, set $Z$ itself, or an optimal solution to the underlying MC
problem. Similarly as before, we get an upper bound of
\begin{align}
&\hspace{1em} p(p-1)+2(n-p) + \sum_{i=1}^{p} |Y_i|\cdot(|Y_i|-1) +
  |Z|\cdot(|Z|-1) + 1 \\
&\leq p(p-1)+2(n-p) + p\cdot 1 \cdot 0 + (n-p)\cdot(n-p-1) + 1 \\
&= n(n-1) -2(p-1)\cdot(n-p) + 1 \\
&\leq n(n-1) \,.
\end{align}

Using a procedure generating all minimum cuts \cite{Vazirani1992}, it is
straightforward to turn the above proof into a recursive algorithm that finds
all optimal solutions in polynomial time.
\end{proof}

\begin{lemma}
\label{lmXalphaYminimal}
Let $\alpha, \beta \geq 1$. Let $X$ be an $\alpha$-optimal solution to an
instance $\gmcinst$ of the GMC problem over vertices $V$ with $0 < \lambda <
\infty$, and $Y$ an optimal solution to the underlying MC problem. If $g(Y) <
\lambda / \beta$, then
\begin{equation}
\label{eqXYsplit}
\gmcinst(X\setminus Y) + \gmcinst(X\cap Y) <
\left(\alpha + \frac{2}{\beta}\right) \lambda \,;
\end{equation}
if $g(Y) \geq \lambda / \beta$, then $X$ is an $\alpha\beta$-optimal solution to
the underlying MC problem.
\end{lemma}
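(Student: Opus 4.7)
The plan is to handle the two cases separately. The key tools will be superadditivity of $f$ (which lets one split costs across $X\setminus Y$ and $X\cap Y$) together with the standard submodular/posimodular inequalities for the cut function $g$.

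For the case $g(Y) < \lambda/\beta$, since $X\setminus Y$ and $X\cap Y$ are disjoint with union $X$, superadditivity of $f$ immediately gives $f(X\setminus Y) + f(X\cap Y) \leq f(X)$. The main task is then to establish the companion bound
\begin{equation}
g(X\setminus Y) + g(X\cap Y) \leq g(X) + 2g(Y) \,.
\end{equation}
Granting this, I would add the two inequalities to obtain $\gmcinst(X\setminus Y) + \gmcinst(X\cap Y) \leq \gmcinst(X) + 2g(Y) < \alpha\lambda + 2\lambda/\beta$, which is exactly~\eqref{eqXYsplit}.

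To prove the $g$-bound I would use that $g$ is a symmetric submodular function, hence also posimodular. Submodularity on $(X,Y)$ yields $g(X\cap Y) + g(X\cup Y) \leq g(X) + g(Y)$ and posimodularity on $(X,Y)$ yields $g(X\setminus Y) + g(Y\setminus X) \leq g(X) + g(Y)$. Adding these gives
\begin{equation}
g(X\cap Y) + g(X\setminus Y) \leq 2g(X) + 2g(Y) - \bigl[g(X\cup Y) + g(Y\setminus X)\bigr] \,.
\end{equation}
Now I apply posimodularity once more, this time to the pair $(X\cup Y,\, Y\setminus X)$: since $(X\cup Y)\setminus(Y\setminus X) = X$ and $(Y\setminus X)\setminus(X\cup Y) = \emptyset$, we get $g(X\cup Y) + g(Y\setminus X) \geq g(X)$. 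Substituting this into the previous line cancels one copy of $g(X)$ and yields the desired inequality.

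For the case $g(Y) \geq \lambda/\beta$, the argument is essentially arithmetic. Because $f\geq 0$, we have $g(X) \leq f(X)+g(X) = \gmcinst(X) \leq \alpha\lambda$, and the hypothesis gives $\lambda \leq \beta g(Y)$, so $g(X) \leq \alpha\beta\, g(Y)$. Since $Y$ is optimal for the underlying MC problem, $g(Y)$ is the minimum MC value, and $X$ is a valid MC solution (it is a GMC solution, hence $\emptyset \subsetneq X \subsetneq V$), so $X$ is $\alpha\beta$-optimal for MC.

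I expect the only real obstacle to be isolating the right combination of submodular/posimodular inequalities for the proof of $g(X\setminus Y) + g(X\cap Y) \leq g(X) + 2g(Y)$; in particular, noticing that a \emph{second} application of posimodularity (to the pair $(X\cup Y, Y\setminus X)$) is what collapses the spare terms. Everything else is a routine chain of inequalities.
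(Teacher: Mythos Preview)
Your proof is correct and follows the same overall strategy as the paper: handle the case $g(Y)\geq\lambda/\beta$ by the direct arithmetic you describe, and in the other case combine superadditivity of $f$ with the inequality $g(X\setminus Y)+g(X\cap Y)\leq g(X)+2g(Y)$.

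The only difference is in how that $g$-inequality is derived. The paper obtains it with just two applications of posimodularity,
\[
g(X)+g(Y)\geq g(X\setminus Y)+g(Y\setminus X)
\quad\text{and}\quad
g(Y)+g(Y\setminus X)\geq g(X\cap Y)+g(\emptyset),
\]
and then adds them so that the $g(Y\setminus X)$ terms cancel. Your route---submodularity on $(X,Y)$, posimodularity on $(X,Y)$, then posimodularity on $(X\cup Y,\,Y\setminus X)$---reaches the same bound but uses one more inequality than necessary. Both arguments are valid; the paper's is just a bit more economical.
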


\begin{proof}
If $g(Y) \geq \lambda / \beta$, it holds $g(X) \leq \gmcinst(X) \leq
\alpha\lambda \leq \alpha\beta \cdot g(Y)$, and hence $X$ is an
$\alpha\beta$-optimal solution to the underlying MC problem. In the rest we
assume that $g(Y) < \lambda / \beta$.

Since $g$ is posimodular, we have
\begin{align}
g(X) + g(Y) &\geq g(X\setminus Y) + g(Y\setminus X) \\
g(Y) + g(Y\setminus X) &\geq g(X\cap Y) + g(\emptyset) \,,
\end{align}
and hence
\begin{equation}
g(X) + 2g(Y) \geq g(X\setminus Y) + g(X\cap Y) \,.
\end{equation}
By superadditivity of $f$, it holds $f(X) \geq f(X\setminus Y) + f(X\cap Y)$.
The claim then follows from the fact that $f(X) + g(X) + 2g(Y) < (\alpha +
2/\beta)\lambda$.
\end{proof}

Finally, we prove that $\alpha$-optimal solutions to the GMC problem can be
found in polynomial time.

\begin{theorem}
\label{thmTractabilityGMC}

For any instance $\gmcinst$ of the GMC problem on $n$ vertices with $0 < \lambda
< \infty$ and $\alpha \in \mathbb{Z}_{\geq 1}$, the number of $\alpha$-optimal
solutions is at most $n^{20\alpha-15}$. There is an algorithm that finds all of
them in polynomial time.

\end{theorem}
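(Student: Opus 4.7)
My plan is to induct on the number of vertices $n$, establishing simultaneously the combinatorial bound $n^{20\alpha-15}$ and the polynomial-time enumeration algorithm. The base case is $\alpha=1$, which is exactly Lemma~\ref{lmOptimalBound}: it yields $n(n-1)$ optimal solutions, and $n(n-1)\le n^{5}=n^{20\cdot 1-15}$ as required. For the inductive step I would first compute, in polynomial time, a minimum cut $Y$ of the underlying MC problem, and then dichotomize via Lemma~\ref{lmXalphaYminimal} with a carefully chosen parameter $\beta=\beta(\alpha)$. The natural choice is $\beta=10-\tfrac{15}{2\alpha}$, chosen so that $2\alpha\beta=20\alpha-15$; this ties the dense case to the target exponent.

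In the \emph{dense case} $g(Y)\ge\lambda/\beta$, Lemma~\ref{lmXalphaYminimal} guarantees that every $\alpha$-optimal solution to the GMC instance is in fact $\alpha\beta$-optimal for the underlying MC instance. I would then invoke Karger's bound on the number of near-minimum cuts in a weighted graph, which asserts that there are at most $n^{2\alpha\beta}=n^{20\alpha-15}$ $\alpha\beta$-optimal MC solutions and that they can be enumerated in polynomial time. This case is immediate.

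In the \emph{sparse case} $g(Y)<\lambda/\beta$, each $\alpha$-optimal $X$ splits as $\gmcinst(X\cap Y)+\gmcinst(X\setminus Y)<(\alpha+2/\beta)\lambda$. I would pass to the induced subproblems on $Y$ and on $V\setminus Y$ via Lemma~\ref{lmSubsetReduction} (noting that the minimum objective value of each subproblem is at least $\lambda$, so $(\alpha+2/\beta)\lambda$ remains an approximate-optimality bound for the subproblem). I would then classify the $\alpha$-optimal solutions into: (i) those contained in $Y$, (ii) those contained in $V\setminus Y$, (iii) the two boundary cases $X=Y$ and $X=V\setminus Y$, and (iv) genuinely split solutions with both $X\cap Y$ and $X\setminus Y$ nonempty and $X\ne V$. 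Types (i) and (ii) are $\alpha$-optimal for the corresponding smaller subproblem (since their value in the subproblem is $\gmcinst(X)\le\alpha\lambda\le\alpha\lambda_{\mathrm{sub}}$) and are absorbed by strong induction on $n$; type (iii) contributes $O(1)$; type (iv) is enumerated by running the inductive algorithm on both subproblems at the slightly larger factor $\alpha+2/\beta$ and pairing candidates, with the joint bound $\gmcinst(X\cap Y)+\gmcinst(X\setminus Y)<(\alpha+2/\beta)\lambda$ used to discard spurious pairings.

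\textbf{Main obstacle.} The delicate point is closing the induction in the sparse case, because the approximation factor $\alpha+2/\beta$ used for the recursive calls strictly exceeds $\alpha$, and the naive product $T(n_1,\alpha+2/\beta)\cdot T(n_2,\alpha+2/\beta)$ grows faster than the target $n^{20\alpha-15}$. The saving must come from exploiting the \emph{joint} value constraint on type (iv) rather than treating the two halves independently: the pairs $(A,B)$ to enumerate are restricted by $\gmcinst(A)+\gmcinst(B)<(\alpha+2/\beta)\lambda$, so one may partition the type-(iv) count by how this budget is allocated and obtain a convolution-style bound that is linear, not bilinear, in the two subproblem counts. Calibrating the bookkeeping so that, in combination with the dense-case exponent $2\alpha\beta=20\alpha-15$, the inductive step preserves the bound $n^{20\alpha-15}$ is the technical heart of the proof; once this is done the polynomial-time enumeration algorithm follows immediately, since each recursive call operates on a strictly smaller vertex set and the per-level work (computing one minimum cut and invoking Karger's enumeration in the dense case) is polynomial.
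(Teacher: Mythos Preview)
Your plan follows the same route as the paper: split via Lemma~\ref{lmXalphaYminimal} into a dense case (handled by Karger's approximate-min-cut bound) and a sparse case (handled by recursing on the two sides of a min-cut $Y$, using the budget inequality~\eqref{eqXYsplit} together with a convolution over how the budget is split). You have also correctly located the crux of the argument.

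Two technical points need adjustment. First, the paper does \emph{not} let $\beta$ vary with $\alpha$; it fixes $\beta=4$ once and for all, relaxes $\alpha$ to the grid $\{1,1+\tfrac1\beta,1+\tfrac2\beta,\dots\}$, and inducts on $n+\alpha\beta$. The reason is that the sparse-case recursion lands on non-integer factors such as $1+(i{+}1)/\beta$ and $\alpha-1-(i{-}2)/\beta$, and the convolution bound hinges on the linear identity $\ell(x)+\ell(y)=\ell\bigl(x+y-\tfrac{3}{\beta}\bigr)$ for $\ell(x)=20x-15$, which holds precisely when $\beta=4$. Your $\alpha$-dependent $\beta$ makes the dense-case exponent exactly $20\alpha-15$, but then the recursive calls live on a grid whose mesh changes from level to level, and the additivity you need no longer lines up cleanly; the induction does not close as stated.

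Second, your case (iv) silently assumes both $X\cap Y$ and $X\setminus Y$ are \emph{solutions} in their respective subproblems. When $Y\subsetneq X$ or $V\setminus Y\subsetneq X$, one of those pieces equals the full vertex set of its side and is therefore not produced by the recursive enumeration. The paper treats these two subcases separately: since $\gmcinst(Y)\ge\lambda$ (resp.\ $\gmcinst(V\setminus Y)\ge\lambda$), inequality~\eqref{eqXYsplit} forces the \emph{other} piece to be only $(\alpha-1+2/\beta)$-optimal, so the recursive factor drops strictly below $\alpha$, and that drop is exactly what makes the induction go through. Your instinct that the joint budget is the key is right, but the actual implementation is that every recursive call is at a factor strictly smaller than $\alpha$, never at $\alpha+2/\beta$.
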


Note that for a cycle on $n$ vertices, the number of $\alpha$-optimal solutions
to the MC problem is $\Theta(n^{2\alpha})$, and thus the exponent in our bound
is asymptotically tight in $\alpha$.

\begin{proof}

Let $\beta \in \mathbb{Z}_{\geq 3}$ be a parameter. Throughout the proof, we
relax the integrality restriction on $\alpha$ and require only that
$\alpha\beta$ is an integer. For $\alpha=1$, the claim follows from
Lemma~\ref{lmOptimalBound}, therefore we assume $\alpha \geq 1 + 1/\beta$ in the
rest of the proof.

Define a linear function $\ell$ by
\begin{equation}
\ell(x)
=
\frac{2(\beta+1)}{\beta-2} \cdot \left( \beta x - 3 \right) \,.
\end{equation}
We prove that the number of $\alpha$-optimal solutions is at most
$n^{\ell(\alpha)}$; taking $\beta=4$ then gives the claimed bound. Function
$\ell$ was chosen as a slowest-growing function satisfying the following
properties required in this proof: It holds $\ell(x)+\ell(y) \leq
\ell(x+y-3/\beta)$ for any $x, y$, and $\ell(x) \geq 2\beta x$ for any $x \geq
1+1/\beta$.

We prove the bound by induction on $n + \alpha\beta$. As it trivially holds for
$n \leq 2$, we assume $n \geq 3$ in the rest of the proof. Let $Y$ be an optimal
solution to the underlying MC problem with $k = |Y| \leq n/2$. If $g(Y) \geq
\lambda / \beta$ then, by Lemma~\ref{lmXalphaYminimal}, any $\alpha$-optimal solution
to the GMC problem is an $\alpha\beta$-optimal solution to the underlying MC
problem. Since $g(Y) \geq \lambda / \beta > 0$, the graph is connected, and
hence there are at most
\begin{equation}
2^{2\alpha\beta} \binom{n}{2\alpha\beta}
\leq
n^{2\alpha\beta}
\leq
n^{\ell(\alpha)}
\end{equation}
such solutions by~\cite{karger1993global}. (In detail,
\cite[Theorem~6.2]{karger1993global} shows that the number of
$\alpha\beta$-optimal cuts in an $n$-vertex graph is $2^{2\alpha\beta-1}
\binom{n}{2\alpha\beta}$, and every cut corresponds to two solutions.)

From now on we assume that $g(Y) < \lambda / \beta$, and hence inequality
\eqref{eqXYsplit} holds. Upper bounds in this case may be quite loose; in
particular, we use the following inequalities:
\begin{align}
(k/n)^{\ell(\alpha)} &\leq
(k/n)^{\ell(1+1/\beta)} = (k/n)^{2(\beta+1)} \leq (k/n)^8 \leq
(k/n)(1/2)^7 = k/128n \\
(1/n)^{2\beta} &\leq (1/n)^6 \leq (1/n)(1/3)^5 < 1/128n \,.
\end{align}

Consider any $\alpha$-optimal solution to the GMC problem $X$.

If $X \subsetneq Y$, then, by Lemma~\ref{lmSubsetReduction}, $X$ is an
$\alpha$-optimal solution to an instance on vertices $Y$. By the induction
hypothesis, there are at most $k^{\ell(\alpha)} \leq (k/128n) \cdot
n^{\ell(\alpha)}$ such solutions.

Similarly, if $X \subsetneq V\setminus Y$, then $X$ is an $\alpha$-optimal
solution to an instance on vertices $V\setminus Y$, and there are at most
\begin{equation}
(n-k)^{\ell(\alpha)}
=
(1-k/n)^{\ell(\alpha)} \cdot n^{\ell(\alpha)}
\leq
(1-k/n) \cdot n^{\ell(\alpha)}
\end{equation}
such solutions.

If $Y \subsetneq X$, then $X \setminus Y$ is an $(\alpha-1+2/\beta)$-optimal
solution on vertices $V \setminus Y$ by \eqref{eqXYsplit} and the fact that
$\gmcinst(X \cap Y) \geq \lambda$. Similarly, if $V \setminus Y \subsetneq X$,
then $X \cap Y$ is an $(\alpha-1+2/\beta)$-optimal solution on vertices $Y$. In
either case, we bound the number of such solutions depending on the value of
$\alpha$: For $\alpha < 2-2/\beta$, there are trivially none; for $\alpha =
2-2/\beta$, Lemma~\ref{lmOptimalBound} gives a bound of $n(n-1) \leq
n^{\ell(\alpha)-2\beta}$; and for $\alpha > 2-2/\beta$ we get an upper bound of
$n^{\ell(\alpha-1+2/\beta)} \leq n^{\ell(\alpha)-2\beta}$ by the induction
hypothesis. The number of solutions is thus at most $n^{\ell(\alpha)-2\beta}
\leq (1/128n) \cdot n^{\ell(\alpha)}$ for any $\alpha$.

Finally, we consider $X$ such that $\emptyset \subsetneq X \setminus Y
\subsetneq V \setminus Y$ and $\emptyset \subsetneq X \cap Y \subsetneq Y$,
i.e., $X \setminus Y$ and $X \cap Y$ are solutions on vertices $V \setminus Y$
and $Y$ respectively. Let $i$ be the integer for which
\begin{equation}
\left(1+\frac{i}{\beta}\right) \lambda
\leq \gmcinst(X \cap Y)
< \left(1+\frac{i+1}{\beta}\right) \lambda \,.
\end{equation}
Then, by \eqref{eqXYsplit}, it holds $\gmcinst(X \setminus Y) <
(\alpha-1-(i-2)/\beta) \lambda$. Therefore, $X \cap Y$ is a
$(1+(i+1)/\beta)$-optimal solution on vertices $Y$ and $X \setminus Y$ is an
$(\alpha-1-(i-2)/\beta)$-optimal solution on vertices $V \setminus Y$. Since
$0 \leq i \leq (\alpha-2)\beta+1$, we can bound the number of such solutions by
the induction hypothesis as at most
\begin{align}
k^{\ell\left(1+\frac{i+1}{\beta}\right)} \cdot
(n-k)^{\ell\left(\alpha-1-\frac{i-2}{\beta}\right)}
& \leq
\left(\frac{k}{n}\right)^{\ell\left(1+\frac{i+1}{\beta}\right)} \cdot
n^{\ell\left(1+\frac{i+1}{\beta}\right)+\ell\left(\alpha-1-\frac{i-2}{\beta}\right)}
\\
& \leq
\left(\frac{k}{n}\right)^{2(\beta+1)} \cdot
\frac{1}{2^i} \cdot
n^{\ell(\alpha)} \,,
\end{align}
which is at most $2 \cdot (k/128n) \cdot n^{\ell(\alpha)}$ in total for all $i$.

By adding up the bounds we get that the number of $\alpha$-optimal solutions
is at most $n^{\ell(\alpha)}$. A polynomial-time algorithm that finds the
$\alpha$-optimal solutions follows from the above proof using a procedure
generating all $\alpha\beta$-optimal cuts~\cite{Vazirani1992}.
\end{proof}

\begin{remark}
For our reduction from the $\VCSPs$ over EDS languages, we need to find all
$\alpha$-optimal solutions to the GMC problem. However, if one is only
interested in a single optimal solution, the presented algorithm can be easily
adapted to an even more general problem.

Let $f, g$ be set functions on $V$ given by an oracle such that
$f:2^V\to\QnInfty$ is increasing and $g:2^V\to\Qn$ satisfies the posimodularity
and submodularity inequalities for intersecting pairs of sets (i.e.\ sets $X, Y$
such that neither of $X\cap Y, X \setminus Y, Y \setminus X$ is empty). The
objective is to minimise the sum of $f$ and $g$.

The case when the optimum value $\lambda = \infty$ can be recognised by checking
all solutions of size $1$. Assuming $\lambda < \infty$, note that the proof of
Lemma~\ref{lmXoptimalYminimal} works even for this more general problem. Let $Y$
be a minimal optimal solution to $g$. It follows that there is an optimal
solution $X$ to $f+g$ such that $X \subseteq Y$, $X \subseteq V \setminus Y$, or
$X$ is itself a \emph{minimal} optimal solution to $g$ (as $f$ is increasing).
We can find all minimal optimal solutions to $g$ in polynomial
time~\cite[Theorem~10.11]{nagamochi2008algorithmic}. Restricting $f,g$ to a
subset of $V$ preserves the required properties, and hence we can recursively
solve the problem on $Y$ and $V \setminus Y$. Therefore, an optimal solution to
$f+g$ can be found in polynomial time.
\end{remark}

\subsection{Reduction to the Generalised Min-Cut problem}
\label{ssecReductionToGMC}

At the heart of our reduction is an observation that EDS weighted relations can
be approximated by instances of the Generalised Min-Cut problem. We define this
notion of approximability in Definition~\ref{defGMCApproximability}. In
Theorem~\ref{thmEDSSomeApproximation}, we show how to approximate any EDS weighted
relation with a constant factor. However, that construction does not yield a
sufficient bound on the approximation factor; we present it only in order to
provide some intuition for the more opaque construction in
Theorem~\ref{thmEDSBetterApproximation}. Using that, we establish the global
s-tractability of EDS languages in Theorem~\ref{thmTractabilityEDS}.

In this section, we equate weighted relations admitting multimorphism
$\mmorp{c_0}$ with set functions; the correspondence is formally stated in the
following definition. Note that we may without loss of generality assume that
the minimum assigned value equals $0$, as adding a rational constant to a
weighted relation preserves tractability.

\begin{definition}
Let $\gamma$ be an $r$-ary weighted relation such that, for any $r$-tuple
$\tup{x}$, $\gamma(\tup{x}) \geq \gamma(\tup{0}^r) = 0$. The \emph{corresponding
set function} $\gamma'$ on $[r]$ is defined by $\gamma'(X) = \gamma(\tup{x})$
where $x_i = 1 \iff i \in X$.
\end{definition}

The definition of $\alpha$-EDS weighted relations then translates into the
following:

\begin{definition}
\label{defSetFnEDS}

For any
$\alpha \geq 1$, a set function $\gamma$ on $V$ is \emph{$\alpha$-EDS}
if, for every $X, Y \subseteq V$, it holds
\begin{equation}
\alpha \cdot (\gamma(X) + \gamma(Y)) \geq \gamma(X \setminus Y) \,.
\label{eqDefEDSsetFn}
\end{equation}

\end{definition}

\begin{remark}

Inequality \eqref{eqDefEDSwRel} could be modified so that \eqref{eqDefEDSsetFn}
becomes symmetric, say
\begin{equation}
\alpha \cdot (\gamma(X) + \gamma(Y))
\geq
\gamma(X \setminus Y) + \gamma(Y \setminus X) \,.
\end{equation}
It is easy to see that, although the set of $\alpha$-EDS weighted relations for
a fixed $\alpha$ would be different, this change would not affect the set of EDS
weighted relations. We opt for the shorter, albeit asymmetric, definition.

\end{remark}

\begin{definition}
\label{defGMCApproximability}

Let $\gmcinst$ be an instance of the GMC problem on vertices $V$ and $\gamma$ a
set function on $V$. For any $\alpha \geq 1$, we say that $\gmcinst$
\emph{$\alpha$-approximates} $\gamma$ if, for all $X \subseteq V$,
\begin{equation}
\gmcinst(X) \leq \gamma(X) \leq \alpha \cdot \gmcinst(X) \,.
\end{equation}

A set function is \emph{$\alpha$-approximable} if there exists a GMC instance
that $\alpha$-approximates it, and it is \emph{approximable} if it is
$\alpha$-approximable for some $\alpha \geq 1$.
\end{definition}

\begin{theorem}
\label{thmEDSSomeApproximation}
Any $\alpha$-EDS set function is approximable.
\end{theorem}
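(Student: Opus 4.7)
The plan is to build a GMC instance $(G, w, f)$ explicitly from $\gamma$ by exploiting the structure that the EDS property imposes on the zero-set $Z = \{X \subseteq V : \gamma(X) = 0\}$. The key preliminary observation is that $Z$ is closed under set difference: for $X, Y \in Z$, the EDS inequality gives $\gamma(X \setminus Y) \leq \alpha(\gamma(X) + \gamma(Y)) = 0$. Since also $\emptyset \in Z$, the relation $Z$ admits polymorphism $\sub$, so by Lemma~\ref{lmAlgebraicEDS} it is essentially a downset. Introducing the equivalence $u \sim v$ iff $u$ and $v$ belong to the same members of $Z$, I obtain blocks $B_1, \dots, B_k$; every member of $Z$ is a union of blocks, and the quotient family $\bar Z \subseteq 2^{\{B_1,\dots,B_k\}}$ is a genuine downset.

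I will then define the GMC instance as follows. For the graph $G$ take intra-block edges only: for each block $B_j$ of size at least two include a spanning tree of $B_j$ with a small uniform weight, so that the cut function $g$ vanishes precisely on unions of blocks. For the superadditive component, let $\mathcal{T}$ denote the set of minimal elements of $2^{\{B_1,\dots,B_k\}}\setminus \bar Z$, and for each $\bar T \in \mathcal{T}$ with lift $T \subseteq V$ let $\gamma_{\bar T}$ be the superadditive soft-NAND set function from Example~\ref{exSetFunctions} associated with $T$. Set $f = \sum_{\bar T \in \mathcal T} c_{\bar T}\cdot\gamma_{\bar T}$ with coefficients $c_{\bar T} > 0$ to be tuned; as a non-negative combination of superadditive functions, $f$ is superadditive. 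Using that $\bar Z$ is a downset, one checks directly that $\gmcinst = f + g$ vanishes precisely on $Z$: indeed $X \in Z$ iff $X$ is a union of blocks (killing $g$) whose quotient lies in $\bar Z$ (preventing any lift $T$ from sitting inside $X$, which would contradict downset-closure).

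Finally I tune the constants: the edge weights and the coefficients $c_{\bar T}$ are picked positive but small enough that $\gmcinst(X) \leq \gamma(X)$ for every $X$. Since $\gamma(X) > 0$ for $X \notin Z$, this amounts to a finite system of linear inequalities in strictly positive variables with strictly positive right-hand sides, which always admits a strictly positive solution. The reverse inequality $\gamma(X) \leq \alpha' \gmcinst(X)$ then holds with $\alpha' = \max_{X \notin Z}\gamma(X)/\gmcinst(X) < \infty$, because $\gmcinst(X) > 0$ for every $X \notin Z$ by the vanishing characterization above. The main obstacle will be the structural lemma for $Z$ (deriving $\sub$-invariance from the EDS inequality and then applying Lemma~\ref{lmAlgebraicEDS} to get the partition-plus-downset picture); the rest is essentially bookkeeping. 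The $\alpha'$ produced depends on the magnitudes of $\gamma$ and not only on $\alpha$ and $|V|$, which is precisely why the paper goes on to replace this argument by the sharper construction of Theorem~\ref{thmEDSBetterApproximation}.
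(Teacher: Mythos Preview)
Your approach is essentially the same as the paper's: derive from the EDS inequality that the zero-set $Z=\Opt(\gamma)$ is $\sub$-invariant, hence essentially a downset; encode the equality classes (your blocks) by edges and the downset part by a superadditive function; then scale so that $\gmcinst$ sits below $\gamma$ and take the ratio for the upper bound. The only cosmetic difference is the choice of superadditive component: you use a non-negative combination of soft-NANDs from Example~\ref{exSetFunctions}, whereas the paper takes $f_{\Opt}(X)=|X\cap A_{\Opt}|$ outside the downset. Both are fine.

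There is, however, a genuine gap. A set function may take the value $\infty$, and the theorem is stated for arbitrary $\alpha$-EDS set functions. Your instance $\gmcinst$ is finite-valued everywhere (spanning-tree edges carry finite weight, and each soft-NAND is $\{0,1\}$-valued), so for any $X$ with $\gamma(X)=\infty$ you have $\gmcinst(X)<\infty$ and $\gamma(X)/\gmcinst(X)=\infty$; your claimed $\alpha'=\max_{X\notin Z}\gamma(X)/\gmcinst(X)$ is then not finite. The paper avoids this by running the very same structural analysis a second time for the \emph{feasibility} set $\Feas(\gamma)=\{X:\gamma(X)<\infty\}$, which is also $\sub$-invariant by the EDS inequality (if $\gamma(X),\gamma(Y)<\infty$ then $\gamma(X\setminus Y)\leq\alpha(\gamma(X)+\gamma(Y))<\infty$). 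It then adds $\infty$-weight edges for the $\Feas$-blocks and an $\infty$-valued superadditive component for the $\Feas$-downset, so that $\gmcinst(X)=\infty$ precisely when $\gamma(X)=\infty$. Your argument needs this second layer; once you add it, the rest goes through unchanged.
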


\begin{proof}

Let $\gamma$ be an $\alpha$-EDS set function on $[n]$ and $\gamma'$ the
corresponding $n$-ary weighted relation. By Corollary~\ref{corFiniteEDSSub}, both
$\Feas(\gamma')$ and $\Opt(\gamma')$ are essentially downsets. The rest of the
proof relies only on this property and does not depend on the value of $\alpha$.
The intuition behind our construction is that a downset can be represented by a
superadditive function on $[n]$, and binary equality relations can be
represented by edges.

There exist $A_{\Feas}, A_{\Opt} \subseteq [n]$, downsets $\mathcal{S}_{\Feas}
\subseteq 2^{A_{\Feas}}$, $\mathcal{S}_{\Opt} \subseteq 2^{A_{\Opt}}$, and sets
of pairs of distinct coordinates $E_{\Feas}, E_{\Opt}$ such that $|A_{\Feas}| +
|E_{\Feas}| = |A_{\Opt}| + |E_{\Opt}| = n$ and
\begin{align}
\gamma(X) < \infty &\iff
X \cap A_{\Feas} \in \mathcal{S}_{\Feas} ~\land~
|X\cap\{i,j\}| \neq 1 \text{ for all } \{i, j\} \in E_{\Feas} \\
\gamma(X) = 0 &\iff
X \cap A_{\Opt} \in \mathcal{S}_{\Opt} ~\land~
|X\cap\{i,j\}| \neq 1 \text{ for all } \{i, j\} \in E_{\Opt} \,.
\end{align}

We construct an instance $\gmcinst$ of the GMC problem on vertices $[n]$ as
follows. Let $w_{\Feas}(i,j) = \infty$ if $\{i,j\} \in E_{\Feas}$ and
$w_{\Feas}(i,j) = 0$ otherwise. Let $w_{\Opt}(i,j) = 1$ if $\{i,j\} \in
E_{\Opt}$ and $w_{\Opt}(i,j) = 0$ otherwise. Then the weight of edge $(i,j)$ is
$w(i,j) = w_{\Feas}(i,j) + w_{\Opt}(i,j)$. Let $f_{\Feas}$ be a set function on
$[n]$ defined by $f_{\Feas}(X) = 0$ if $X \cap A_{\Feas} \in
\mathcal{S}_{\Feas}$ and $f_{\Feas}(X) = \infty$ otherwise; $f_{\Feas}$ is
superadditive because $\mathcal{S}_{\Feas}$ is a downset. Let $f_{\Opt}$ be a
set function on $[n]$ defined by $f_{\Opt}(X) = 0$ if $X \cap A_{\Opt} \in
\mathcal{S}_{\Opt}$ and $f_{\Opt}(X) = |X \cap A_{\Opt}|$ otherwise; $f_{\Opt}$
is superadditive because $\mathcal{S}_{\Opt}$ is a downset. Then the
superadditive function defining instance $\gmcinst$ is $f = f_{\Feas} +
f_{\Opt}$.

By the construction, it holds $\gamma(X) < \infty \iff \gmcinst(X) < \infty$ and
$\gamma(X) = 0 \iff \gmcinst(X) = 0$. Moreover, for any $X$ such that $0 <
\gmcinst(X) < \infty$, it holds $1 \leq \gmcinst(X) \leq n$. If the set
\begin{equation}
B
=
\left\{ \gamma(X) ~\middle|~
X \subseteq [n] ~\land~ 0 < \gamma(X) < \infty \right\}
\end{equation}
is empty, then instance $\gmcinst$ $1$-approximates $\gamma$; otherwise let
$b_\text{min}, b_\text{max}$ denote the minimum and maximum of $B$. We scale
the weights of the edges $w$ and the superadditive function $f$ by a factor of
$b_\text{min} / n$ to obtain an instance $\gmcinst'$ such that $\gmcinst'(X)
\leq \gamma(X)$ for all $X$. Instance $\gmcinst'$ then $(n \cdot b_\text{max} /
b_\text{min})$-approximates $\gamma$. \qedhere

\end{proof}

To establish the tractability of infinite EDS languages, we need a better bound
on the approximability of $\alpha$-EDS set functions than the one given in
Theorem~\ref{thmEDSSomeApproximation}. This is achieved in
Theorem~\ref{thmEDSBetterApproximation}, which we prove using the following
technical lemma. We refer the reader to~\cite[Theorem~1.1]{devanur2013approximation} for an
example of the application of this proof technique in a simpler setting.

\begin{lemma}
\label{lmEDSBetterApproximationCutLB}
Let $\gamma$ be an $\alpha$-EDS set function on $V$ for some $\alpha \geq 1$.
For any distinct $u, v \in V$, let $\Tuv$ be a subset of $V$ such that
$\left|\Tuv \cap \{u,v\}\right| = 1$. Then, for any $R \subseteq S \subseteq V$,
it holds
\begin{equation}
\alpha^{|S|+2} \cdot \left( \left(|S|^2+2\right) \cdot \gamma(S) +
\sum_{|R\cap\{u,v\}|=1} \gamma\left(\Tuv\right) \right) \geq \gamma(R) \,.
\end{equation}
\end{lemma}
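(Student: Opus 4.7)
My plan is to prove this by induction on $|S|-|R|$. The base case $R=S$ is immediate, since the right-hand side is at least $2\alpha^{|S|+2}\gamma(S)\geq\gamma(S)=\gamma(R)$, and the case $R=\emptyset$ is trivial because $\gamma(R)=0$.

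For the inductive step with $\emptyset\subsetneq R\subsetneq S$, I pick $v\in S\setminus R$ and $u\in R$, and set $T:=\Tuv$. The pair $\{u,v\}$ is cut by $R$, so $\gamma(T)$ appears as a summand in $\sum_{|R\cap\{a,b\}|=1}\gamma(T_{\{a,b\}})$ with coefficient $\alpha^{|S|+2}$ on the right-hand side. I apply the $\alpha$-EDS inequality to the identity $R=(R\cup\{v\})\setminus\{v\}$ to obtain $\gamma(R)\leq\alpha\bigl(\gamma(R\cup\{v\})+\gamma(\{v\})\bigr)$. The first term is controlled by the inductive hypothesis applied to $R\cup\{v\}\subseteq S$, which has $|S|-|R\cup\{v\}|<|S|-|R|$. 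For the singleton $\gamma(\{v\})$, I perform a case analysis on whether $v\in T$ (so $u\notin T$) or $u\in T$ (so $v\notin T$); the two cases are symmetric under $T\mapsto V\setminus T$. In the case $v\in T$, the identity $\{v\}=T\setminus(T\setminus\{v\})$ combined with EDS yields $\gamma(\{v\})\leq\alpha(\gamma(T)+\gamma(T\setminus\{v\}))$, and $\gamma(T\setminus\{v\})$ is then controlled by a chain of up to $|S|$ further EDS reductions that trade against $\gamma(S)$.

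The main obstacle will be closing the recursion so that the coefficients end up matching $\alpha^{|S|+2}(|S|^2+2)$ on $\gamma(S)$ and $\alpha^{|S|+2}$ on each $\gamma(T_{\{a,b\}})$. Each inductive call costs one factor of $\alpha$, so after at most $|S|$ peeling steps plus the two extra EDS applications per step, one arrives at $\alpha^{|S|+2}$. The polynomial factor $|S|^2+2$ arises because the singleton-bounding chain spends $O(|S|)$ copies of $\gamma(S)$, and this chain is invoked $O(|S|)$ times across the induction — once per vertex removed from $S\setminus R$. Managing the cut-pair sum $\sum_{|R'\cap\{a,b\}|=1}\gamma(T_{\{a,b\}})$ as $R'$ grows from $R$ to $R\cup\{v\}$ is also delicate: the pair $\{u,v\}$ is cut by $R$ but not by $R\cup\{v\}$, so the inductive hypothesis absorbs a strictly smaller sum, and the extra $\gamma(T_{\{u,v\}})$ charge incurred when bounding $\gamma(\{v\})$ must be absorbed exactly into the $\alpha^{|S|+2}\gamma(T_{\{u,v\}})$ budget from the original inequality, with no spurious $\gamma(T_{\{a,b\}})$-terms being introduced for pairs not cut by $R$.
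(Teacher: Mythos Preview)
Your induction has two genuine gaps that I do not see how to close.

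\textbf{The cut sum is not monotone under $R\mapsto R\cup\{v\}$.} You assert that the inductive hypothesis for $R'=R\cup\{v\}$ ``absorbs a strictly smaller sum'' than the one for $R$. It does not. Adding $v$ to $R$ removes the pairs $\{u',v\}$ with $u'\in R$ from the cut, but it also \emph{creates} new cut pairs $\{v,w\}$ for every $w\in V\setminus R'$. The inductive bound for $R'$ therefore contains terms $\gamma(T_{\{v,w\}})$ that are simply absent from the left-hand side of the inequality you are trying to prove for $R$. You note the requirement ``no spurious $\gamma(T_{\{a,b\}})$-terms being introduced'' but do not explain how to meet it, and in general it cannot be met this way.

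\textbf{The singleton bound fails because $T_{\{u,v\}}$ need not lie inside $S$.} In the case $v\in T$, you write $\{v\}=T\setminus(T\setminus\{v\})$ and then say $\gamma(T\setminus\{v\})$ is ``controlled by a chain of up to $|S|$ further EDS reductions that trade against $\gamma(S)$.'' But $T=T_{\{u,v\}}$ is an arbitrary subset of $V$ separating $u$ from $v$; it may be disjoint from $S$ apart from $v$. For instance, if $S=\{1,2,3\}$, $R=\{1,2\}$, $v=3$, $u=1$, and $T_{\{1,3\}}=\{3,4,5\}$, then $T\setminus\{v\}=\{4,5\}$, and no sequence of EDS inequalities relates $\gamma(\{4,5\})$ to $\gamma(S)$. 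The same obstruction arises (symmetrically) when $u\in T$. A separate but related issue is the $\alpha$-accounting: each peeling step multiplies the already-obtained bound by $\alpha$, so unwinding $|S\setminus R|$ steps against an inductive hypothesis that itself carries $\alpha^{|S|+2}$ does not yield $\alpha^{|S|+2}$.

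The paper avoids both obstacles by working in the opposite direction. It first proves the iterated EDS bound $\alpha^n(\gamma(X)+\sum_i\gamma(Y_i))\ge\gamma(X\setminus\bigcup_i Y_i)$. For each $u\in R$ and $v\in S\setminus R$ it replaces $T_{\{u,v\}}$ by $T'_{uv}$ equal to $T_{\{u,v\}}$ if $v\in T_{\{u,v\}}$ and to $S\setminus T_{\{u,v\}}$ otherwise, so that $v\in T'_{uv}\not\ni u$ and $\gamma(T'_{uv})\le\alpha(\gamma(S)+\gamma(T_{\{u,v\}}))$. Then $W_u:=S\setminus\bigcup_{v\in S\setminus R}T'_{uv}$ satisfies $u\in W_u\subseteq R$, and two applications of the iterated bound (once to get $\gamma(W_u)$ from $\gamma(S)$ and the $\gamma(T'_{uv})$'s, once to get $\gamma(S\setminus R)$ from $\gamma(S)$ and the $\gamma(W_u)$'s), followed by a final EDS step $\alpha(\gamma(S)+\gamma(S\setminus R))\ge\gamma(R)$, give the claim. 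The crucial point is that one always \emph{subtracts} sets from $S$, so every intermediate set stays inside $S$ and the $\gamma(S)$ budget suffices; your approach tries to isolate $\{v\}$ from $T_{\{u,v\}}$, which lives in $V$, and this is where it breaks.
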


\begin{proof}

First, we show by induction that, for any $X, Y_1, \dots, Y_n \subseteq V$, it
holds
\begin{equation}
\alpha^n \cdot \left(\gamma(X) + \sum_{i=1}^n \gamma(Y_i)\right) \geq
\gamma\left(X \setminus \bigcup_{i=1}^n Y_i\right) \,.
\label{eqLmCutLBGeneralizedAssumption}
\end{equation}
For $n = 1$, this is equivalent to \eqref{eqDefEDSsetFn}. As for the inductive
step, assume that \eqref{eqLmCutLBGeneralizedAssumption} holds for $n \geq 1$.
By the inductive hypothesis and \eqref{eqDefEDSsetFn}, we get
\begin{align}
\alpha^{n+1} \cdot \left( \gamma(X) + \sum_{i=1}^{n+1} \gamma(Y_i) \right)
&\geq
\alpha \cdot \left(
  \gamma \left( X \setminus \bigcup_{i=1}^n Y_i \right)
  +
  \gamma(Y_{n+1})
\right) \\
&\geq
\gamma \left( X \setminus \bigcup_{i=1}^{n+1} Y_i \right) \,.
\end{align}

If $\gamma(S) = \infty$, the inequality claimed by this lemma trivially holds.
In the rest of the proof, we assume $\gamma(S) < \infty$. For any $u \in R$, $v
\in S \setminus R$, we define a set $\oTuv$ such that $\oTuv \cap \{u,v\} =
\{v\}$: If $v \in \Tuv$, let $\oTuv = \Tuv$; otherwise let $\oTuv = S \setminus
\Tuv$. We claim that
\begin{equation}
\alpha \cdot \left( \gamma(S) + \gamma\left(\Tuv\right) \right)
\geq
\gamma(\oTuv) \,.
\label{eqOTuvUB}
\end{equation}
This is trivially true in the case of $\oTuv = \Tuv$, and it follows from
\eqref{eqDefEDSsetFn} in the case of $\oTuv = S \setminus \Tuv$. By
\eqref{eqOTuvUB}, it holds
\begin{align}
\sum_{|R\cap\{u,v\}|=1} \gamma\left(\Tuv\right)
&\geq
\sum_{u\in R} \sum_{v\in S \setminus R} \gamma\left(\Tuv\right) \\
&\geq
\frac{1}{\alpha} \sum_{u\in R} \sum_{v\in S \setminus R} \gamma(\oTuv)
 - |R| \cdot |S\setminus R| \cdot \gamma(S) \,.
\label{eqLmCutLB1}
\end{align}

For any $u \in R$, let
\begin{equation}
W_u = S \setminus \bigcup_{v \in S \setminus R} \oTuv \,.
\end{equation}
By properties of $\oTuv$, it holds $u \in W_u \subseteq R$. Moreover, we have
\begin{equation}
\alpha^{|S \setminus R|} \cdot \left( \gamma(S) + \sum_{v\in S \setminus R}
\gamma(\oTuv) \right) \geq \gamma(W_u)
\end{equation}
by \eqref{eqLmCutLBGeneralizedAssumption}, which together with
\eqref{eqLmCutLB1} gives us
\begin{align}
\sum_{|R\cap\{u,v\}|=1} \gamma\left(\Tuv\right)
&\geq
\frac{1}{\alpha^{|S \setminus R|+1}} \sum_{u\in R} \gamma(W_u)
 - |R| \cdot (|S\setminus R|+1) \cdot \gamma(S) \\
&\geq
\frac{1}{\alpha^{|S \setminus R|+1}} \sum_{u\in R} \gamma(W_u)
 - |S|^2 \cdot \gamma(S) \,.
\label{eqLmCutLB2}
\end{align}

As it holds $\bigcup_{u\in R} W_u = R$, we have
\begin{equation}
\alpha^{|R|} \cdot \left( \gamma(S) + \sum_{u\in R} \gamma(W_u) \right) \geq
\gamma(S \setminus R) \,,
\end{equation}
and hence
\begin{equation}
\sum_{|R\cap\{u,v\}|=1} \gamma\left(\Tuv\right)
\geq
\frac{1}{\alpha^{|S|+1}} \cdot \gamma(S \setminus R)
 - \left(|S|^2+1\right) \cdot \gamma(S) \,.
\label{eqLmCutLB3}
\end{equation}
As it holds $\alpha \cdot (\gamma(S) + \gamma(S \setminus R)) \geq \gamma(R)$,
this proves the claimed inequality.
\end{proof}

\begin{theorem}
\label{thmEDSBetterApproximation}
Any $\alpha$-EDS set function on $V$ is
$\alpha^{n+2}\left(n^3+2n\right)$-approximable, where $n = |V|$.
\end{theorem}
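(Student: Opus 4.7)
The plan is to construct a GMC instance $\gmcinst = (f, g)$ on the vertex set $V$ that $C$-approximates $\gamma$, where $n = |V|$ and $C = \alpha^{n+2}(n^3+2n) = \alpha^{n+2}\cdot n(n^2+2)$. For each pair of distinct vertices $u, v \in V$, fix $\Tuv \subseteq V$ with $|\Tuv \cap \{u,v\}| = 1$ minimizing $\gamma$ among such sets, and set the edge weight $w(u,v) = \gamma(\Tuv)/(n(n^2+2))$. This scaling is chosen so that $C \cdot g(R) = \alpha^{n+2}\sum_{|R \cap \{u,v\}|=1}\gamma(\Tuv)$ matches exactly the cut-sum term in Lemma~\ref{lmEDSBetterApproximationCutLB}. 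The minimality of $\Tuv$ gives $\gamma(\Tuv) \leq \gamma(R)$ whenever $R$ separates $u, v$, so $g(R) \leq |R|(n-|R|) \cdot \gamma(R)/(n(n^2+2)) < \gamma(R)/(2n)$ for every $R \subseteq V$. I plan to design the superadditive function $f$ so that $\gamma(V)/n \leq f(R) \leq \gamma(R) - g(R)$ for every non-empty $R$; taking $R = V$ then forces $f(V) = \gamma(V)$.

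Granting such an $f$, both approximation bounds follow directly. The lower bound $\gmcinst(R) \leq \gamma(R)$ is immediate from $f(R) + g(R) \leq \gamma(R)$. For the upper bound, Lemma~\ref{lmEDSBetterApproximationCutLB} applied with $S = V$ yields
\[
\gamma(R) \;\leq\; \alpha^{n+2}(n^2+2)\gamma(V) \;+\; \alpha^{n+2}\!\!\!\sum_{|R \cap \{u,v\}|=1}\!\!\!\gamma(\Tuv) \;=\; \frac{C}{n}\,\gamma(V) \;+\; C \cdot g(R),
\]
and since $f(R) \geq \gamma(V)/n$ for every non-empty $R$, the right-hand side is bounded by $C\cdot f(R) + C\cdot g(R) = C \cdot \gmcinst(R)$, as required.

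The main obstacle is the construction of $f$: it must be simultaneously superadditive, bounded above by $\gamma - g$ pointwise, and bounded below by $\gamma(V)/n$ on every non-empty set. A first attempt is the additive choice $f(R) = (\gamma(V)/n) \cdot |R|$, which is superadditive with $f(V) = \gamma(V)$, but it violates the upper bound at any singleton $\{v\}$ satisfying $\gamma(\{v\}) < \gamma(V)/n + g(\{v\})$---a situation that can genuinely arise for $\alpha$-EDS $\gamma$ (e.g.\ $\gamma$ with $\gamma(\{v\}) = 0$ for every $v$ but $\gamma(V) > 0$). To rescue the construction, I plan to express $f$ as a non-negative combination of superadditive indicator-up-set functions $c_W \cdot [W \subseteq R]$, carefully selecting witnessing subsets $W$ and coefficients $c_W$ so that every non-empty $R$ contains some witness $W \subseteq R$ contributing enough weight without violating $f(R) \leq \gamma(R) - g(R)$ anywhere. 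The key input will be the iterated EDS inequality $\alpha^k(\gamma(X) + \sum_i \gamma(Y_i)) \geq \gamma(X \setminus \bigcup_i Y_i)$ established in the proof of Lemma~\ref{lmEDSBetterApproximationCutLB}, which should certify the existence of such witnesses. This part---matching superadditivity against the pointwise envelopes imposed by $\gamma$---is where I expect the bulk of the proof's technical difficulty to lie.
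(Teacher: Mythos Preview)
Your edge-weight construction matches the paper's exactly, but your plan for $f$ cannot work as stated. You require simultaneously $f(R) \geq \gamma(V)/n$ and $f(R) \leq \gamma(R) - g(R)$ for every non-empty $R$. Take the example you yourself raise: $\gamma(\{v\}) = 0$ for all $v$ yet $\gamma(V) > 0$. Then for $R = \{v\}$ the upper bound forces $f(\{v\}) \leq -g(\{v\}) \leq 0$, while the lower bound demands $f(\{v\}) \geq \gamma(V)/n > 0$. No choice of indicator-up-set coefficients can resolve this, because for a singleton the only non-empty subset is the singleton itself; your proposed rescue has no room to manoeuvre precisely at the point where the constraints clash.

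The underlying problem is that you apply Lemma~\ref{lmEDSBetterApproximationCutLB} only with $S = V$, which forces $f$ to compete against the fixed quantity $(n^{2}+2)\gamma(V)$ regardless of how small $\gamma(R)$ is. The paper instead lets $S$ vary with $R$: it defines
\[
f(X) \;=\; \frac{|X|}{n^{3}+2n}\,\min_{X \subseteq Z \subseteq V}\bigl(|Z|^{2}+2\bigr)\gamma(Z),
\]
so that the minimiser $S \supseteq R$ in this expression is exactly the $S$ fed into Lemma~\ref{lmEDSBetterApproximationCutLB} when proving the upper bound. The minimisation over supersets is what delivers superadditivity (a common minimiser for $X \cup Y$ is a feasible choice for each of $X$ and $Y$, and the $|X|$ prefactor makes the inequality additive), while the choice $Z = R$ witnesses $f(R) \leq \gamma(R)$ directly. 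In your bad example this $f$ correctly gives $f(\{v\}) = 0$, and the upper bound still goes through because Lemma~\ref{lmEDSBetterApproximationCutLB} with $S = \{v\}$ yields $\gamma(\{v\}) \leq 0$ trivially. The missing idea, then, is to let $f$ encode the \emph{optimal} choice of $S$ for each $R$ rather than fixing $S = V$ globally.
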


\begin{proof}

Let $\gamma$ be an $\alpha$-EDS set function on $V$ for some $\alpha \geq 1$. We
construct an instance $\gmcinst$ of the GMC problem on vertices $V$ such that it
$\alpha^{n+2}\left(n^3+2n\right)$-approximates $\gamma$. The weight of edge $(u,
v)$ is
\begin{equation}
w(u, v) = \frac{1}{n^3+2n} \cdot
\min \left\{ \gamma(Z) ~\middle|~
Z \subseteq V \land |Z \cap \{u, v\}| = 1 \right\} \,.
\label{eqThmEDSBetterApproximationDefW}
\end{equation}
Let $f$ be a set function on $V$ defined as
\begin{equation}
f(X) = \frac{|X|}{n^3+2n} \cdot
\min \left\{ \left(|Z|^2+2\right) \cdot \gamma(Z) ~\middle|~
X \subseteq Z \subseteq V \right\} \,.
\label{eqThmEDSBetterApproximationDefF}
\end{equation}
We claim that $f$ is a superadditive set function. As $\gamma(\emptyset) = 0$,
it holds $f(\emptyset) = 0$. Consider any disjoint $X, Y \subseteq V$ and let $Z
\supseteq X \cup Y$ be a minimiser in \eqref{eqThmEDSBetterApproximationDefF} for $f(X
\cup Y)$. It holds $f(X) \leq |X| \cdot \left(|Z|^2+2\right) \cdot \gamma(Z) /
\left(n^3+2n\right)$ and $f(Y) \leq |Y| \cdot \left(|Z|^2+2\right) \cdot
\gamma(Z) / \left(n^3+2n\right)$, and hence
\begin{equation}
f(X) + f(Y) \leq
\frac{|X \cup Y|}{n^3+2n} \cdot \left(|Z|^2+2\right) \cdot \gamma(Z) =
f(X \cup Y) \,.
\end{equation}
The edge weights $w$ and superadditive set function $f$ define the GMC instance
$\gmcinst$. Now we prove that it $\alpha^{n+2}\left(n^3+2n\right)$-approximates
$\gamma$.

First, we show that $\gmcinst(R) \leq \gamma(R)$ for all $R \subseteq V$. By
\eqref{eqThmEDSBetterApproximationDefF}, we have $f(R) \leq |R| \cdot
\left(|R|^2+2\right) \cdot \gamma(R) / \left(n^3+2n\right)$. For any edge $(u,
v)$ cut by $R$ (i.e.\ $|R \cap \{u, v\}|=1$), it holds $w(u, v) \leq \gamma(R) /
\left(n^3+2n\right)$ by \eqref{eqThmEDSBetterApproximationDefW}, and hence $g(R) \leq
|R| \cdot |V \setminus R| \cdot \gamma(R) / \left(n^3+2n\right)$. Together, this
gives
\begin{equation}
\gmcinst(R) = f(R) + g(R) \leq
\frac{|R| \cdot \left(|R|^2+|V\setminus R|+2\right)}{n^3+2n} \cdot \gamma(R) \leq
\gamma(R) \,.
\end{equation}

Second, we show that $\alpha^{n+2}\left(n^3+2n\right) \cdot \gmcinst(R) \geq
\gamma(R)$ for all $R \subseteq V$. For $R = \emptyset$, the inequality holds,
as $\gmcinst(\emptyset) = \gamma(\emptyset) = 0$. Otherwise, let $S \supseteq R$
be a minimiser in \eqref{eqThmEDSBetterApproximationDefF} for $f(R)$, and $\Tuv$ a
minimiser in \eqref{eqThmEDSBetterApproximationDefW} for any edge $(u, v)$. It holds
\begin{align}
\left(n^3+2n\right) \cdot f(R)
&= |R| \cdot \left(|S|^2+2\right) \cdot \gamma(S) \geq
   \left(|S|^2+2\right) \cdot \gamma(S) \\
\left(n^3+2n\right) \cdot g(R)
&= \sum_{|R\cap\{u,v\}=1|} \gamma\left(\Tuv\right) \,,
\end{align}
and therefore, by Lemma~\ref{lmEDSBetterApproximationCutLB},
$\alpha^{n+2}\left(n^3+2n\right) \cdot \gmcinst(R) \geq
\alpha^{|S|+2}\left(n^3+2n\right) \cdot \gmcinst(R) \geq \gamma(R)$.
\end{proof}

\begin{theorem}
\label{thmTractabilityEDS}

Any EDS language is globally s-tractable.

\end{theorem}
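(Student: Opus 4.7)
The plan is to reduce $\VCSPs(\Gamma)$ to the Generalised Min-Cut problem and then invoke Theorem~\ref{thmTractabilityGMC} to enumerate a polynomial-size family of candidates that provably contains an optimal surjective assignment. Since $\Gamma$ is EDS of bounded arity, I fix a constant $\alpha \geq 1$ such that every $\gamma \in \Gamma$ is $\alpha$-EDS, and let $r$ be an upper bound on the arities of weighted relations in $\Gamma$. Set $\alpha_0 = \alpha^{r+2}(r^3+2r)$; this depends only on $\Gamma$ and will serve as the overall approximation factor supplied by Theorem~\ref{thmEDSBetterApproximation}.

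Given an instance $I = (V, D, \phi_I)$ with constraints $w_i \cdot \gamma_i(\tup{x}_i)$, I will build a single GMC instance $\gmcinst$ on vertex set $V$ as follows. For each constraint, I first identify repeated coordinates of $\gamma_i$ along its scope, which preserves the $\alpha$-EDS property, and subtract the constant $\gamma_i(\tup{0})$ (finite by the definition of EDS and minimising $\gamma_i$ by Lemma~\ref{lmEDSAdmitsC0}) to obtain a non-negative $\alpha$-EDS set function $\gamma_i'$ on the distinct variables of $\tup{x}_i$. Applying Theorem~\ref{thmEDSBetterApproximation} yields a local GMC instance $\gmcinst_i$ on those vertices that $\alpha_0$-approximates $\gamma_i'$, whose superadditive part is given by the explicit polynomial-time computable formula from that theorem. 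I then extend each $\gmcinst_i$ trivially to $V$ and set $\gmcinst = \sum_i w_i \cdot \gmcinst_i$. Writing $C = \sum_i w_i \cdot \gamma_i(\tup{0})$, linearity of the two-sided bound yields
\begin{equation}
\gmcinst(X) \;\leq\; \phi_I(\chi_X) - C \;\leq\; \alpha_0 \cdot \gmcinst(X)
\end{equation}
for every $X \subseteq V$, where $\chi_X$ denotes the characteristic assignment of $X$.

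Since surjective assignments correspond bijectively to proper non-empty subsets of $V$, i.e., to solutions of $\gmcinst$, the algorithm first uses Lemma~\ref{lmLambdaZeroInfty} to dispose of the boundary cases: if $\lambda = \infty$, the lower bound forces $\phi_I(\chi_X) = \infty$ on every proper non-empty $X$, so $I$ has no feasible surjective solution; if $\lambda = 0$, the upper bound forces $\phi_I(\chi_X) = C$ on any witness $X$, which is therefore optimal. For the main case $0 < \lambda < \infty$, an optimal surjective solution $X^*$ of $I$ and an optimal GMC solution $X_\gmcinst^*$ together satisfy
\begin{equation}
\gmcinst(X^*) \;\leq\; \phi_I(\chi_{X^*}) - C \;\leq\; \phi_I(\chi_{X_\gmcinst^*}) - C \;\leq\; \alpha_0 \cdot \gmcinst(X_\gmcinst^*) \;=\; \alpha_0 \lambda,
\end{equation}
so $X^*$ is $\alpha_0$-optimal for $\gmcinst$. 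Theorem~\ref{thmTractabilityGMC} then lists all $\alpha_0$-optimal solutions to $\gmcinst$ in polynomial time; evaluating $\phi_I$ on each and returning the best yields an optimal surjective assignment.

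The main conceptual obstacle has already been handled by Theorem~\ref{thmEDSBetterApproximation}: crucially, the approximation factor it provides depends only on the arity of the weighted relation, not on the number of variables in the instance, which is precisely what allows $\alpha_0$ to be a language-dependent constant and keeps the candidate list produced by Theorem~\ref{thmTractabilityGMC} of polynomial size. What remains is routine bookkeeping: checking the two-sided inequality above, verifying that the oracle for the superadditive part of $\gmcinst$ is polynomial-time computable (which follows directly from the explicit formulas of Theorem~\ref{thmEDSBetterApproximation} together with summation), and handling the degenerate cases $\lambda \in \{0,\infty\}$ via Lemma~\ref{lmLambdaZeroInfty}.
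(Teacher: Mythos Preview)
Your proof is correct and follows essentially the same approach as the paper: build a GMC instance by summing up local approximations obtained from Theorem~\ref{thmEDSBetterApproximation}, derive the two-sided sandwich inequality, dispose of the degenerate cases via Lemma~\ref{lmLambdaZeroInfty}, and invoke Theorem~\ref{thmTractabilityGMC} to enumerate the $\alpha_0$-optimal candidates. You are slightly more explicit than the paper about the additive constant $C$ and the exact value of $\alpha_0$, and you identify repeated coordinates before applying the approximation rather than after, but these are cosmetic differences; one minor point is that Theorem~\ref{thmTractabilityGMC} is stated for integer $\alpha$, so you should round $\alpha_0$ up to $\lceil \alpha_0 \rceil$.
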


\begin{proof}
Let $\Gamma$ be an EDS language and $\alpha' \geq 1$ such that every weighted
relation in $\Gamma$ is $\alpha'$-EDS. Without loss of generality, we may assume
that $\gamma\left(\tup{0}^{\ar(\gamma)}\right) = 0$ for every $\gamma \in
\Gamma$, and hence identify weighted relations with their corresponding set
functions. Weighted relations in $\Gamma$ are of bounded arity and therefore,
by Theorem~\ref{thmEDSBetterApproximation}, there exists $\alpha$ such that
every $\gamma \in \Gamma$ is $\alpha$-approximable. We will denote by
$\gmcinst_\gamma$ a GMC instance that $\alpha$-approximates $\gamma$.

Given a $\VCSPs(\Gamma)$ instance $I$ with an objective function
\begin{equation}
\phi_I'(x_1, \dots, x_n)
=
\sum_{i=1}^q w_i \cdot \gamma_i(\tup{x}^i) \,,
\end{equation}
we denote by $\phi_I$ the corresponding set function and construct a GMC
instance $\gmcinst$ that $\alpha$-approximates $\phi_I$. For $i \in [q]$, we
relabel the vertices of $\gmcinst_{\gamma_i}$ to match the variables in the
scope $\tup{x}^i$ of the $i$th constraint (i.e., vertex $j$ is relabelled to
$x^i_j$) and identify vertices in case of repeated variables. As the constraint
is weighted by a non-negative factor $w_i$, we also scale the weights of the
edges of $\gmcinst_{\gamma_i}$ and the superadditive function by $w_i$. (Note
that non-negative scaling preserves superadditivity.) Instance $\gmcinst$ is
then obtained by adding up GMC instances $\gmcinst_{\gamma_i}$ for all $i \in
[q]$.

Let $\tup{x} \in D^n$ denote a surjective assignment minimising $\phi_I'$, $X
\subseteq [n]$ the corresponding set $\{ i \in [n] ~|~ x_i = 1 \}$, $Y \subseteq
[n]$ an optimal solution to $\gmcinst$, and $\lambda = \gmcinst(Y)$. Since
$\gmcinst$ $\alpha$-approximates $\phi_I$, it holds
\begin{equation}
\lambda \leq \gmcinst(X) \leq \phi_I(X) \leq \phi_I(Y) \leq \alpha
\cdot \gmcinst(Y) = \alpha\lambda \,,
\end{equation}
and hence $X$ is an $\alpha$-optimal solution to $\gmcinst$. By
Lemma~\ref{lmLambdaZeroInfty}, we can determine whether $\lambda = 0$, in which case
any optimal solution to $\gmcinst$ is also optimal for $\phi_I$; and whether
$\lambda = \infty$. If $0 < \lambda < \infty$, we find all $\alpha$-optimal
solutions by Theorem~\ref{thmTractabilityGMC}. \qedhere

\end{proof}

We now prove Theorem~\ref{thmEnumerationSurjectiveVCSP}.

\begin{proof}

We only need to prove the theorem in the case of an EDS language (whether
$\Gamma$ or $\neg(\Gamma)$, which is symmetric), as the remaining classes of
globally s-tractable languages include constants $\mathcal{C}_D$ and thus admit
a polynomial-delay algorithm using standard self-reduction
techniques~\cite{Creignou97,Cohen04:search}.

Let $\Gamma$ be an EDS language. As in the proof of Theorem~\ref{thmTractabilityEDS},
we may assume that every weighted relation in $\Gamma$ assigns $0$ as the
minimum value. Given an instance of $\VCSPs(\Gamma)$, we can determine in
polynomial time, by Lemma~\ref{lmLambdaZeroInfty}, whether $\lambda=0$,
$0<\lambda<\infty$, or $\lambda=\infty$. If $\lambda=0$, then optimal solutions
incur the minimum value from every constraint. By applying $\Opt$ to all
constraints, we obtain a CSP instance invariant under $\min$ (by
Lemma~\ref{lmSubImpliesC0Min}), and hence are able
to enumerate all optimal solutions with a polynomial delay by the results
in~\cite{Creignou97}. If $0<\lambda<\infty$, then the claim follows from the
proof of Theorem~\ref{thmTractabilityEDS}; moreover, the number of optimal solutions is
polynomially bounded (see Theorem~\ref{thmTractabilityGMC}). Finally, the case
$\lambda=\infty$ is trivial. \qedhere

\end{proof}

\section{Conclusions}

We have established the complexity classification of surjective VCSPs on
two-element domains. An obvious open problem is to consider surjective VCSPs on
three-element domains. A complexity classification is known for
$\{0,\infty\}$-valued languages~\cite{Bulatov06:3-elementJACM} and $\Q$-valued
languages~\cite{hkp14:sicomp} (the latter generalises the $\{0,1\}$-valued case
obtained in~\cite{Jonsson06:sicomp}). In fact, \cite{Kolmogorov17:sicomp}
implies a dichotomy for $\QInfty$-valued languages on a three-element domain.
However, all these results depend on the notion of core and the presence of
constants $\mathcal{C}_D$ in the language, and thus it is unclear how to use
them to obtain a complexity classification in the surjective setting. Moreover,
one special case of the CSP on a three-element domain is the
\emph{3-No-Rainbow-Colouring} problem~\cite{Bodirsky12:dam}, whose complexity
status is open.

\section*{Acknowledgements}

We would like to thank Yuni Iwamasa, who prompted us to extend the complexity
classification to languages of infinite size (and bounded arity). We also thank
the anonymous reviewers of the two extended
abstracts~\cite{Uppman12:cp,fz17:mfcs} of this paper.

\newcommand{\noopsort}[1]{}\newcommand{\Zivny}{\noopsort{ZZ}\v{Z}ivn\'y}

\end{document}